\documentclass[conference]{IEEEtran}
\IEEEoverridecommandlockouts

\usepackage{cite}
\usepackage{amsmath,amssymb,amsfonts}
\usepackage{algorithmic}
\usepackage{graphicx}
\usepackage{textcomp}
\usepackage{xcolor}
\usepackage{xspace}
\usepackage[all]{xy} \SelectTips{cm}{}  %
\usepackage{mathtools}
\usepackage{tikz-cd, tikz}
\usepackage{stmaryrd}
\usepackage{mdframed}
\def\BibTeX{{\rm B\kern-.05em{\sc i\kern-.025em b}\kern-.08em
    T\kern-.1667em\lower.7ex\hbox{E}\kern-.125emX}}
\usepackage[bookmarks=true, hidelinks, pdfencoding=auto]{hyperref}

\newif\ifdraft\draftfalse
\newif\ifarxiv\arxivtrue

\ifdraft
\usepackage[colorinlistoftodos]{todonotes}

\else
\usepackage[disable]{todonotes}
\fi

\usepackage{amsthm}
\theoremstyle{definition}
\newtheorem{definition}{Definition}[section]
\theoremstyle{plain}
\newtheorem{lemma}[definition]{Lemma}
\newtheorem{proposition}[definition]{Proposition}
\newtheorem{theorem}[definition]{Theorem}
\newtheorem{corollary}[definition]{Corollary}

\theoremstyle{definition}
\newtheorem{remark}[definition]{Remark}

\newtheorem{example}[definition]{Example}

\newtheorem{notation}[definition]{Notation}

\theoremstyle{remark}

\newcommand{\rew}{\mathrm{rew}}

\newcommand{\Coalg}[1]{\mathrm{Coalg}(#1)}
\newcommand{\CoalgRst}[2]{\mathrm{Coalg}(#1)|_{#2}}
\newcommand{\Alg}[1]{\mathrm{Alg}(#1)}
\newcommand{\AlgRst}[2]{\mathrm{Alg}(#1)|_{#2}}

\newcommand{\CAT}{\mathbf{CAT}}
\newcommand{\op}{\mathrm{op}}

\newcommand{\adj}[4]{(#1, #2; #3, #4)} %
\newcommand{\mcfunc}[1]{\mathcal{D}(#1 + \{\checkmark\})}

\newcommand{\fixunit}[1]{#1_\mathrm{fix}}
\newcommand{\fixcounit}[1]{#1_{\mathrm{fix}}}
\newcommand{\AlgFn}[2]{\mathrm{Alg}(#1)_{#2}}
\newcommand{\CoalgFn}[2]{\mathrm{Coalg}(#1)_{#2}}

\newcommand{\interval}[1]{\mathbb{I}_{#1}}
\newcommand{\Prb}[1]{\mathrm{P}_{#1}}
\newcommand{\Erew}[1]{\mathrm{Erew}_{#1}}

\newcommand{\gra}{global reachability condition\xspace}
\newcommand{\fibop}[1]{#1^\mathrm{fibop}}
\newcommand{\laxdom}[1]{d_{#1}}

\newcommand{\rloop}[2][-]{\save \POS!R(.7) \ar@(ru,rd)^#1{#2} \restore}
\newcommand{\lloop}[2][-]{\save \POS!L(.7) \ar@(lu,ld)_#1{#2} \restore}
\newcommand{\uloop}[2][-]{\save \POS!U(.7) \ar@(lu,ru)^(.8){#2} \restore}

\newcommand{\SupLat}{\mathbf{SupLat}}
\newcommand{\Poset}{\mathbf{Poset}}
\newcommand{\Set}{\mathbf{Set}}

\newcommand{\disc}{\mathrm{disc}}
\newcommand{\id}{\mathrm{id}}
\newcommand{\Pfin}{{\mathcal{P}_f}}

\newcommand{\pullbackmark}[2]{\save ;p+<.8pc,0pc>:(0,-1)::%
(#1) *{\phantom{Z}} %
;p+(#2)-(0,0) **@{-}%
;p-(#1)+(0,0) *{\phantom{Z}} **@{-} \restore}

\begin{document}

\title{Initial Algebra Correspondence \\under Reachability Conditions}

\author{\IEEEauthorblockN{Mayuko Kori}
\IEEEauthorblockA{
\textit{Research Institute for Mathematical Sciences} \\
\textit{Kyoto University}\\
Kyoto, Japan \\
mkori@kurims.kyoto-u.ac.jp}
\and
\IEEEauthorblockN{Kazuki Watanabe}
\IEEEauthorblockA{
\textit{National Institute of Informatics, The Graduate}\\
\textit{University for Advanced Studies, SOKENDAI}\\
Tokyo, Japan \\
kazukiwatanabe@nii.ac.jp}
\and
\IEEEauthorblockN{Jurriaan Rot}
\IEEEauthorblockA{\textit{Radboud University} \\
Nijmegen, The Netherlands \\
jrot@cs.ru.nl}
}

\maketitle

\begin{abstract}
  Suitable reachability conditions can make two different fixed point
  semantics of a transition system coincide.
  For instance, the total and partial expected reward semantics
  on Markov chains (MCs) coincide whenever the MC at hand
  is almost surely reachable.
  In this paper, we present a unifying framework for such reachability conditions
  that ensures the correspondence of two different semantics.
  Our categorical framework naturally induces an abstract reachability condition via a suitable adjunction,
  which allows us to prove coincidences of fixed points, and more generally of initial algebras.
  We demonstrate the generality of our approach by instantiating several examples,
  including the almost sure reachability condition for MCs, and the unambiguity condition of automata.
  We further study a canonical construction of our instance for Markov decision processes by pointwise Kan extensions.
\end{abstract}

\begin{IEEEkeywords}
initial algebra, reachability, adjunction.
\end{IEEEkeywords}

\section{Introduction}

Fixed points of predicate transformers occur everywhere in the semantics of transition systems.
Predicate transformers describe a one-step update of the semantics, and can be seen as specifications of the \emph{local behaviour} of transition systems.
The Kleene fixed-point theorem says that the least fixed point of a predicate transformer on a complete lattice is the supremum of the $\omega$-chain that is constructed by iterating the predicate transformer.
The converged fixed points of such local updates describe the \emph{global behaviours} of systems.

In practice, the design of suitable semantics is crucial for formal verification tasks.
Indeed, there are many options for such semantics even if we restrict to very specific objectives, say expected rewards on probabilistic systems: e.g. total expected rewards~\cite{Puterman94, DBLP:books/daglib/0020348}, partial expected rewards~\cite{DBLP:books/daglib/0020348}, conditional expected rewards~\cite{DBLP:books/daglib/0020348}, and liberal expected rewards~\cite{DBLP:conf/qest/GretzKM12,OlmedoGJKKM18}.
In particular, the partial
expected reward captures the total reward along paths that reach a target state, whereas the total expected
reward captures the reward along all paths.
Such differences give rise to different applications, for instance, total expected rewards for expected runtimes~\cite{KaminskiKMO18}, partial expected rewards for distributed algorithms~\cite{ChenFKPS13}, and conditional and liberal expected rewards for conditioning~\cite{OlmedoGJKKM18}.

Category theory gives a general framework for semantics of systems beyond the above lattice-theoretic framework.
For instance,
least fixed points become initial algebras of associated predicate transformers modelled as functors on suitable posetal categories.
Semantics of systems is then modelled by these initial algebras.
Dually, the theory of coalgebras provides a methodology of how to model transition systems and their associated predicate transformers in a uniform way~\cite{DBLP:books/cu/J2016}.

The existing theory of coalgebras, however, is still not sufficiently developed to capture certain common phenomena that occur in formal verification.
The phenomenon we are interested in this paper is the correspondence of two different semantics under certain \emph{reachability conditions}.
More precisely, reachability conditions on global behaviour of systems may simplify semantics, and consequently, the difference between the two semantics may collapse.
In particular, under reachability conditions to target states (or equivalently termination assumptions), total and partial semantics do coincide in many cases, as partial semantics only cares about terminating executions.
For instance, the condition of \emph{almost sure reachability} on Markov chains (MCs) makes total expected rewards and partial expected rewards coincide (e.g.~\cite{DBLP:books/daglib/0020348}).

In this paper, we present a categorical framework for such correspondences of two semantics of systems under reachability conditions.
Specifically, we model systems as coalgebras, and semantics of systems as initial algebras of associated predicate transformers.
We focus on the initial algebra correspondences given by an adjunction between semantic domains, on which predicate transformers are defined.
Such adjunctions---which often represent translations between different semantic domains---are a standard setting in abstract interpretation, where they appear as Galois connections~\cite{cousot21}.

This setting allows us to present our main contributions, on initial algebra correspondences under
\emph{\gra{}s}, which are abstract reachability conditions with adjunctions.
Our approach is based on a result due to Hermida and Jacobs~\cite[Thm.~2.14]{DBLP:journals/iandc/HermidaJ98} on lifting adjunctions to categories of (co)algebras.
The key technical idea here is
to restrict these adjunctions to
specific subcategories of algebras.
These restricted adjunctions form equivalences, which yields our main result for initial algebra correspondences.
We instantiate our framework to the above-mentioned examples on expected rewards in MCs and Markov decision processes (MDPs),
and also to more elementary examples such as resource-bounded reachability in labeled transition systems.

We then extend the setting to initial algebra correspondences under a chain correspondence,
instead of relying solely on an equivalence of subcategories.
 This chain-based approach allows us to capture a wider variety of scenarios, and shows, for instance, how unambiguity of nondeterministic finite automata (NFAs) arises in our framework.

In the final part of the paper, we focus on an orthogonal issue: the construction of predicate transformers themselves.
In particular, we show how the predicate transformer, also called \emph{Bellman operator}, for partial expected rewards in MDPs (Example~\ref{eg:achievable_mdp}) arises from the one for MCs,
through suitable pointwise Kan extensions.

In summary, the contributions of this paper are as follows:
\begin{itemize}
	\item A general framework for proving initial algebra correspondences under \gra{}s;
	\item An extension of our framework employing initial chain correspondences, which can cover a different instance, namely an unambiguity condition in automata;
	\item A generic construction of predicate transformers from MCs to MDPs via pointwise Kan extensions;
\end{itemize}
Before presenting the general theory, in the next section we give an overview by means of an illustrative example: the coincidence
of total and partial expected rewards in MCs under almost sure reachability.

\paragraph*{Related work}
In previous work, particularly in abstract interpretation,
researchers have often studied one-step, and consequently \emph{local} conditions on predicate transformers
to establish the coincidence of two semantics connected by an adjunction.
A well-known example is the (backward) completeness~\cite{DBLP:conf/popl/CousotC79, DBLP:journals/jacm/GiacobazziRS00} (as well as local completeness~\cite{BruniGGR21,BruniGGR22}) of an abstract domain in abstract interpretation.
It
guarantees that an adjoint maps one initial algebra in a \emph{concrete domain} to its counterpart in an \emph{abstract domain}.
Our correspondence further requires the converse, and we ensure the correspondence by \gra{}s.
We provide instances of initial algebra correspondences in non-complete settings, as detailed in Remark~\ref{rm:complete}.

As mentioned above, our work builds on a result of
lifting adjunctions to categories of (co)algebras,
which immediately yields
preservation of initial algebras and final coalgebras due to Hermida and Jacobs~\cite{DBLP:journals/iandc/HermidaJ98}.
In the literature on coalgebras, the dual problem of final coalgebra preservation has been considered in general terms, e.g.,~\cite{DBLP:conf/calco/TurkenburgBKR23,DBLP:journals/logcom/SprungerKDH21}.
In the current paper, the focus instead is on situations where such correspondences do not hold in general but only under reachability conditions.

Reachability of coalgebras is studied in, e.g.,~\cite{DBLP:conf/fossacs/WissmannDKH19,Wissmann2019}. This generalizes reachability in an automata-theoretic sense (all states can be reached
from the initial state). Here, our interpretation of reachability is mostly on all executions terminating, or reaching specific target states.

\begin{notation}
  \label{notation}
  We write $\mathbf{2}$ for the lattice $\{\bot, \top\}$ of Booleans with the order satisfying $\bot  < \top$.
  We write $\interval{1}$ for the interval $[0, 1]$ with the usual order,
  and $\interval{\infty}$ for the set of non-negative extended reals, also equipped with the usual order.
  For sums and products, we adopt the convention $n + \infty = \infty + n = \infty$ and $0 \cdot \infty = \infty \cdot 0 = 0$ for each $n \in (\interval{\infty} \setminus \{0\})$ (see e.g.~\cite{Batz2025}).

  Both the composition of morphisms and
  the vertical composition of natural transformations are denoted by $\circ$.
  We write $\adj L R \eta \epsilon \colon {\mathbb{C}} \to {\mathbb{D}}$
  for the adjunction
  $L \dashv R\colon \mathbb{D} \to \mathbb{C}$
  with the unit $\eta\colon \id \Rightarrow RL$ and counit $\epsilon\colon LR \Rightarrow \id$.
  The adjoint transpose is written as $\overline{(-)}$.
  We write $[S, K]$ for the functor category.
  We often view a set $S$ as a discrete category
  and a complete lattice $(K, \sqsubseteq)$ as a category by regarding the order $\sqsubseteq$ as the existence of a unique morphism.
  For a set $S$ and a complete lattice $K$, we also write $[S, K]$ for the complete lattice of functions with the pointwise order.
  We write $\pi_1 $ and $\pi_2 $ for the first projection $\pi_1 \colon X\times Y\rightarrow X$ and the second projection $\pi_2 \colon X\times Y\rightarrow Y$, respectively.

\end{notation}

\section{Overview}
\label{sec:overview}
In this section, we outline the target problem, namely the correspondence of two definitions of semantics under reachability conditions, with a concrete example---Markov chains (MCs) with the almost sure reachability condition.
We then sketch our general framework for the problem, and illustrate our \gra{}s.

\subsection{MCs and Expected Rewards}
\label{subsec:MCsExpRew}
MCs are probabilistic transition systems that are commonly used in probabilistic verification~\cite{DBLP:books/daglib/0020348}.
We define an MC over a set $S$ of states and a target state $\checkmark$ as a pair of functions $(P, \rew)$ such that (i) $P\colon S\to \mathcal{D}(S + \{\checkmark\})$ gives the transition probability $P(s) \in\mathcal{D}(S + \{\checkmark\})$ for each state $s\in S$, where $\mathcal{D}(S + \{\checkmark\})$ is the set of distributions on $S + \{\checkmark\}$ whose supports are finite;
and (ii) $\rew\colon S\to\mathbb{N} $ assigns the reward $\rew(s)\in \mathbb{N}$ for each state $s\in S$.
We note that the target state $\checkmark$ is a dead end (or equivalently sink), without outgoing transitions.
MCs can be concisely written as coalgebras $c\coloneqq \langle P, \rew \rangle\colon S\to  \mathcal{D}(S + \{\checkmark\})\times \mathbb{N}$ of the endofunctor $\mathcal{D}(\_ + \{\checkmark\})$ on $\Set$.

Computing (or approximating) expected rewards on MCs is one of the most important problems for probabilistic verification.
It may, however, not be trivial what the appropriate definition of expected rewards is for a given model.
In fact, there are two commonly studied expected reward objectives: One is \emph{total}, and the other is \emph{partial}.

Perhaps the total expected reward objective (without discount factors) is more common, in particular for the expected runtimes of probabilistic programs~\cite{Puterman94,Batz2025}. This definition includes expectation of the rewards for all paths---importantly, this includes paths which never reach $\checkmark$.
The partial expected reward objective~\cite{DBLP:books/daglib/0020348,ChenFKPS13,Baier0KW17,PiribauerB19}, in contrast, only concerns paths that eventually reach $\checkmark$.
An application is the optimization of distributed algorithms that may not terminate (e.g.~\cite{ChenFKPS13}).
We refer to~\cite{DBLP:books/daglib/0020348} for a comprehensive overview.

\subsection{Total and Partial Expected Rewards Correspondence under the Almost Sure Reachability Condition}
\label{subsec:totalAndPartial}
We present the least fixed point (LFP) semantics with Bellman operators, which are predicate transformers on quantitative predicates for probabilistic systems.
The Bellman operators for total and partial expected rewards on MCs are $\omega$-continuous. Thus, they have least fixed points that are the limits of their respective Kleene iterations.

The Bellman operator $\Psi\colon [S, \interval{\infty}]\to [S, \interval{\infty}]$ for the total expected reward is given by
\begin{align*}
  \Psi(k)(s) \coloneqq \rew(s) + \sum_{s'\in S} P(s, s')\cdot k(s'),
\end{align*}
for each $k\in [S, \interval{\infty}]$ and $s\in S$.
Recall that $[S, \interval{\infty}]$ is the complete lattice of functions $k\colon S\rightarrow \interval{\infty}$ with the pointwise order w.r.t.~the  complete lattice $\interval{\infty}$ of non-negative extended reals.
The least fixed point $\mu \Psi$ is then the assignment of the total expected rewards for each state.
We note that the total expected reward may diverge to $\infty$ even if the set $S$ of states is finite in general.

The definition of the Bellman operator $\Phi$ for partial expected rewards is more involved. In this case, it is needed to calculate reachability probability as well, as the calculation of partial expected rewards depends on it.
Concretely, let $\interval{1} \times \interval{\infty}$ be the product of two complete lattices $\interval{1}$ and $\interval{\infty}$ with the product order. The Bellman operator $\Phi\colon [S, \interval{1} \times \interval{\infty}]\rightarrow [S, \interval{1} \times \interval{\infty}]$ for partial expected rewards is given by
$\Phi(k)(s)\coloneqq (p, r)$ where
\begin{align*}
  p &\coloneqq  P(s, \checkmark) + \sum_{s'\in S} P(s, s')\cdot \pi_1 \big(k(s')\big),\\
  r &\coloneqq  \rew(s) \cdot P(s, \checkmark)  \\
  &\phantom{\coloneqq} + \sum_{s'\in S} P(s, s')\cdot \big(\pi_2 (k(s')) + \rew(s) \cdot \pi_1 (k(s'))\big).
\end{align*}
Note that the computation of the expected reward $r$ depends on $\pi_1 \circ k$, which intuitively captures reachability probabilities.
Given a state $s\in S$, the least fixed point $(\mu \Phi)(s) = (p, r)$ is the pair of the reachability probability $p$ from $s$ to $\checkmark$,
and the partial expected reward $r$ from $s$ to $\checkmark$.
Note that the partial expected reward (on an MC) is finite if $S$ is finite, while it may diverge to $\infty$ if $S$ is infinite.

While partial and total expected rewards are two different semantics that have different applications, they do coincide under some conditions.
One example of such a condition is \emph{almost sure reachability}, which requires that the reachability probability from any state to the target state $\checkmark$ is $1$.
This is simply because the partial expected reward only cares about the paths that reach $\checkmark$, the probability of which is $1$ under the condition, concluding that almost surely all paths are counted.
Specifically, under the almost sure reachability condition, the following two equalities hold:
\begin{align*}
  &\pi_2 \circ \mu \Phi = \mu \Psi, && \mu \Phi =  \langle \Delta_1, \id \rangle\circ \mu \Psi,
\end{align*}
where $\Delta_1\colon \interval{\infty}\rightarrow \interval{1}$ is the constant function $\Delta_1(r) = 1$.

\subsection{A Unifying Theory for the Initial Algebra Correspondence under Reachability Conditions}
\label{subsec:Unify}
We saw a correspondence of total and partial expected rewards on MCs under the almost sure reachability condition above,
and such correspondences of different semantics under appropriate reachability conditions arise naturally in many cases, such as termination conditions of transition systems, but also, for instance, as an unambiguity condition of finite automata.
The aim of this paper is to provide a categorical
foundation of such correspondence under reachability conditions,
providing a general recipe to obtain an appropriate reachability condition for the two different semantics we consider.

We build our framework with initial algebra semantics, that is, the semantics of systems are given as initial algebras of suitable functors.
The least fixed point semantics of predicate transformers are indeed a special case of this initial algebra semantics, where the underlying category is a complete lattice.
Specifically, we aim to answer the following two questions:

\begin{mdframed}
  How does the initial algebra correspondence arise under reachability conditions, and
  how can we obtain such reachability conditions?
\end{mdframed}

The key ingredient in our framework is the adjunction between two domains of endofunctors $\Phi$ and $\Psi$.
Indeed, such adjunctions are fundamental in the context of abstract interpretation~\cite{cousot21}, and coalgebraic semantics (e.g.~\cite{DBLP:journals/logcom/RotJL21, DBLP:journals/iandc/HermidaJ98, DBLP:journals/jcss/Jacobs0S15,DBLP:conf/fossacs/BonsangueK05,DBLP:conf/amast/PavlovicMW06,DBLP:journals/entcs/Klin07}):
we develop our theory on top of this adjunction-based approach.
We then show that the adjunction is liftable to the level of algebras, and address the first question:
the initial algebra correspondence arises by the adjoint equivalence on suitable full subcategories of the categories of algebras, where the adjunction is a lifting of the one between $\Phi$ and $\Psi$.
We encode reachability conditions as \gra{}s, which requires that the initial algebras---representing the system's semantics---lie in these full subcategories,
which answers the second question.
When this condition is satisfied, the adjoint equivalence immediately yields the desired initial algebra correspondence.
Our abstract theory covers several examples that can be found in the literature, where reachability conditions are systematically induced by the underlying adjunctions.

We conclude this overview with the structure of this paper.
In~\S\ref{sec:cat}, we present the background of our theory, namely the adjoint equivalence on the suitable full subcategories of algebras.
In~\S\ref{sec:init_alg_corresp}, we introduce our \gra{}s based on the adjoint equivalence, and demonstrate \gra{}s with concrete examples.
In~\S\ref{sec:init_alg_corresp_L}, we extend our framework for a wider class of predicate transformers, focusing on initial chains in addition for the initial algebra correspondence.
We show that the unambiguity condition for automata is a \gra{} in the extended framework.
In~\S\ref{sec:kanExtension}, we study a systematic way to construct predicate transformers that fit into our framework with pointwise Kan extensions.
We instantiate our approach for Markov decision processes (MDPs).
In~\S\ref{sec:conclusion}, we review our contributions, and list future work.

\section{Equivalence between subcategories of algebras} \label{sec:cat}
After recalling an adjoint equivalence on subcategories for adjunctions (in \S\ref{subsec:adjointEquivalence}),
we show that the equivalence can be lifted to subcategories of algebras (in \S\ref{subsec:equiv_alg}).

\subsection{Background: Fixed Points of Adjoint Functors}
\label{subsec:adjointEquivalence}
An \emph{adjoint equivalence} is an adjunction
$\adj L R \eta \epsilon \colon {\mathbb{C}} \to {\mathbb{D}}$
such that both $\eta$ and $\epsilon$ are natural isomorphisms.
In this subsection,
we recall
folklore results for
inducing adjoint equivalences
on certain subcategories via adjoint functors.
For additional details, see e.g.~\cite{porst1991concrete}.
\begin{definition}
  Let $L \dashv R\colon \mathbb{D} \to \mathbb{C}$ be an adjunction.
  The category \emph{$\fixunit{\mathbb{C}}$}
  is the full subcategory of $\mathbb{C}$
  with objects $X$ such that $\eta_X$  are isomorphisms.
  Similarly,
  the category \emph{$\fixcounit{\mathbb{D}}$}
  is the full subcategory of $\mathbb{D}$
  with objects $X$ such that $\epsilon_X$  are isomorphisms.
\end{definition}

We remark that there is a simple characterization for the equality $\mathbb{D} = \fixunit{\mathbb{D}}$ (and $\mathbb{C} = \fixunit{\mathbb{C}}$ as well).
\begin{proposition} \label{prop:adj_full_faith}
  Let $L \dashv R\colon \mathbb{D} \to \mathbb{C}$ be an adjunction.
  The right adjoint $R\colon \mathbb{D} \to \mathbb{C}$
  (resp.~the left adjoint $L\colon \mathbb{C} \to \mathbb{D}$)
  is full and faithful if and only if the counit $\epsilon$ (resp.~the unit $\eta$) is a natural isomorphism.
  \qed
\end{proposition}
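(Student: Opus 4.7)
The plan is to reduce the statement to a Yoneda-style computation by expressing the hom-set action of $R$ as precomposition with the counit. I will prove the $R$-statement in full and then obtain the $L$-statement by dualizing.

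First, fix objects $Y, Y' \in \mathbb{D}$ and consider the function $R_{Y, Y'}\colon \mathbb{D}(Y, Y') \to \mathbb{C}(RY, RY')$ sending $f$ to $Rf$. Compose this with the adjunction bijection $\mathbb{C}(RY, RY') \cong \mathbb{D}(LRY, Y')$, which is the inverse adjoint transpose $h \mapsto \epsilon_{Y'} \circ Lh$. I would check directly from the definition of the adjoint transpose that the composite sends $f$ to $\epsilon_{Y'} \circ LRf$, which by naturality of $\epsilon$ equals $f \circ \epsilon_Y$. Thus $R_{Y, Y'}$ is, up to the adjunction bijection, precisely the precomposition map
\begin{equation*}
  (-) \circ \epsilon_Y \colon \mathbb{D}(Y, Y') \longrightarrow \mathbb{D}(LRY, Y').
\end{equation*}

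Next, I appeal to Yoneda: precomposition with $\epsilon_Y$ is a bijection for every $Y'$ if and only if $\epsilon_Y$ is an isomorphism in $\mathbb{D}$. Thus, $R$ is fully faithful—i.e., $R_{Y, Y'}$ is a bijection for all $Y, Y'$—if and only if $\epsilon_Y$ is an isomorphism for every $Y$, which is the statement that $\epsilon$ is a natural isomorphism. For the ``only if'' half one may alternatively give a direct construction: if $R$ is fully faithful, then for each $Y$ there exists a unique $\xi_Y \colon Y \to LRY$ with $R\xi_Y = \eta_{RY}$; using the triangle identities one verifies that $\xi_Y$ is two-sided inverse to $\epsilon_Y$, but the Yoneda route is cleaner.

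For the dual claim about $L$ and $\eta$, I would apply the same argument to the opposite adjunction $R^{\op} \dashv L^{\op} \colon \mathbb{C}^{\op} \to \mathbb{D}^{\op}$, whose counit is $\eta^{\op}$ and whose right adjoint is $L^{\op}$. Since $L$ is fully faithful iff $L^{\op}$ is, and $\eta$ is a natural isomorphism iff $\eta^{\op}$ is, the previously proved statement gives the result immediately.

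The only real subtlety is the identification of $R_{Y, Y'}$ with precomposition with $\epsilon_Y$ through the hom-set isomorphism; this is a short computation using one of the triangle identities (equivalently, the formula for the inverse transpose). Everything else is routine, so I do not expect any genuine obstacle.
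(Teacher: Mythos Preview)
Your argument is correct and is the standard textbook proof of this classical fact: identifying $R$'s action on hom-sets with precomposition by the counit (up to the adjunction bijection) and invoking Yoneda is exactly the right move, and the dualization for $L$ and $\eta$ is clean. The only thing to compare against is that the paper does \emph{not} supply a proof at all---the proposition is stated as a folklore result, closed with a \texttt{\textbackslash qed}, and the surrounding text refers the reader to external sources (e.g.\ Porst). So your proposal is not merely consistent with the paper's approach; it fills in what the paper deliberately omits.
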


\newcommand{\rs}{(\mathbf{2} \times \mathbf{2})_l}
\begin{example}
  \label{eg:adjointPT}
  Let
  $\rs$ be the complete lattice $\{\bot, \top\} \times \{\bot, \top\}$ equipped
  with the lexicographic order: $(a, b) \sqsubseteq (c, d)$ if and only if $a \leq c$ and $a = c$ implies $b \leq d$.
  We define
  an adjunction
  $L \dashv R\colon \mathbf{2} \to \rs$
  by
  $L(a, b) \coloneqq a \land b$ for each $(a, b) \in \rs$ and
  $R(t)\coloneqq (\top, t)$ for each $t \in \mathbf{2}$.

  With this adjunction,
  the category $\fixunit{(\rs)}$ consists of
  all pairs $(a, b)$ of Booleans  satisfying
  $a = \top$.
  Since the given adjunction is a Galois insertion (i.e.~a Galois connection $L \dashv R$ with $L \circ R = \id$),
  it follows from Prop.~\ref{prop:adj_full_faith} that
  $\fixcounit{\mathbf{2}} = \mathbf{2}$.
\end{example}

We now arrive at a statement about the adjoint equivalence on subcategories $\fixunit{\mathbb{C}}$ and $\fixcounit{\mathbb{D}}$.
\begin{proposition} \label{prop:adj_equiv}
    Any adjunction $\adj L R \eta \epsilon \colon {\mathbb{C}} \to {\mathbb{D}}$ restricts
	to an adjoint equivalence $\adj L R \eta \epsilon \colon {\fixunit{\mathbb{C}}} \to {\fixcounit{\mathbb{D}}}$.
\qed
\end{proposition}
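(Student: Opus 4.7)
The plan is to verify three things: that the functors $L$ and $R$ restrict to functors between the subcategories $\fixunit{\mathbb{C}}$ and $\fixcounit{\mathbb{D}}$, that the adjunction data (unit, counit, naturality, triangle identities) restricts along with them, and finally that the restricted unit and counit are isomorphisms. Only the first of these requires work; the rest follows formally.

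First I would show that $L$ sends $\fixunit{\mathbb{C}}$ into $\fixcounit{\mathbb{D}}$. Take $X \in \fixunit{\mathbb{C}}$, so $\eta_X$ is an isomorphism. The triangle identity gives
\[
\epsilon_{LX} \circ L\eta_X = \id_{LX}.
\]
Since $L\eta_X$ is an isomorphism (as $L$ preserves isos and $\eta_X$ is one), this equation exhibits $\epsilon_{LX}$ as the inverse of $L\eta_X$, so $\epsilon_{LX}$ is an isomorphism, i.e.\ $LX \in \fixcounit{\mathbb{D}}$. The symmetric argument with the other triangle identity
\[
R\epsilon_Y \circ \eta_{RY} = \id_{RY}
\]
shows that $R$ sends $\fixcounit{\mathbb{D}}$ into $\fixunit{\mathbb{C}}$: for $Y \in \fixcounit{\mathbb{D}}$, $R\epsilon_Y$ is an isomorphism, forcing $\eta_{RY}$ to be its inverse.

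Next I would note that because $\fixunit{\mathbb{C}}$ and $\fixcounit{\mathbb{D}}$ are \emph{full} subcategories, the actions of $L$ and $R$ on morphisms, the naturality of $\eta$ and $\epsilon$, and the two triangle identities transfer verbatim to the restricted setting. Hence $\adj L R \eta \epsilon \colon \fixunit{\mathbb{C}} \to \fixcounit{\mathbb{D}}$ is an adjunction. Finally, by the very definition of these subcategories, the components $\eta_X$ (for $X \in \fixunit{\mathbb{C}}$) and $\epsilon_Y$ (for $Y \in \fixcounit{\mathbb{D}}$) are isomorphisms, so the restricted unit and counit are natural isomorphisms, giving an adjoint equivalence.

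No step is truly hard; the only thing one must be careful about is the direction of the triangle-identity argument, since a priori it is not obvious that $L$ should land in $\fixcounit{\mathbb{D}}$ rather than just being defined on $\fixunit{\mathbb{C}}$. The observation that each triangle identity rewrites an identity morphism as a composite of two morphisms, one of which is already iso, is the engine that forces the other to be iso as well, and this is what ties the two subcategories together.
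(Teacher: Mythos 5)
Your proof is correct: the triangle-identity argument showing that $L$ maps $\fixunit{\mathbb{C}}$ into $\fixcounit{\mathbb{D}}$ and $R$ maps $\fixcounit{\mathbb{D}}$ into $\fixunit{\mathbb{C}}$, followed by the observation that the restricted unit and counit are isomorphisms by definition of the subcategories, is the standard argument. The paper states this proposition as a folklore result and gives no proof of its own (deferring to the cited literature), so your write-up supplies exactly the expected missing details.
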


\subsection{Equivalence of subcategories of algebras}
\label{subsec:equiv_alg}
We extend the discussion to categories of algebras.
These categories play a key role in capturing least fixed points in a categorical framework.
\begin{definition} %
  For an endofunctor $\Phi\colon \mathbb{C} \to \mathbb{C}$,
  \emph{the category $\Alg{\Phi}$ of $\Phi$-algebras}
   is defined as follows:
  The objects are
  pairs
  of an object
  $X \in \mathbb{C}$ and a morphism $a\colon \Phi X \to X \text{ in }\mathbb{C}$,
  and the morphisms $(X, a) \to (Y, b)$ are
  $f\colon X \to Y$ in $\mathbb{C}$ satisfying $f \circ a = b \circ \Phi f$.
  When $X$ is clear from the context, we simply write $a$ for the pair $(X, a)$.
\end{definition}
In the definition above, we often instantiate $\mathbb{C}$ to a lattice of predicates, and call $\Phi$ a \emph{predicate transformer.}

Throughout this section,
we use the following running example
that illustrates
partial and total correctness. 
\begin{example}
  \label{eg:ptPredTran}
Consider Example~\ref{eg:adjointPT}.
Let $\Sigma$ be a set of labels,
and let
$c\colon S \to (S+\{\checkmark\})^\Sigma$ be a function representing a deterministic labeled transition system with a terminal state $\checkmark$.
We write
$\Sigma^\omega$ for the set of infinite sequences of $\Sigma$ and
let $A \subseteq S$ be a set of \emph{safe} states.
The target functor $\Phi_c\colon [S \times \Sigma^\omega, \rs]\rightarrow [S \times \Sigma^\omega, \rs]$ is defined by
\begin{align*}
&\Phi_c(k)(s, \sigma \vec{\sigma}) \\
&=
\begin{cases}
  (\top, A(s)) &\text{if }c(s)(\sigma) = \checkmark \\
  (\pi_1 k(s', \vec{\sigma}), \neg \pi_1 k(s', \vec{\sigma}) \lor \pi_2 k(s', \vec{\sigma})) &\text{otherwise},
\end{cases}
\end{align*}
where
$A(s) = \top$ if $s \in A$ and $\bot$ otherwise,
and
$s' = c(s)(\sigma)$.
We use the functor $\Phi_c$ to capture \emph{partial correctness} of the system $c$,
in the sense that if it terminates, it does so safely (i.e.,~in a state of $A$).
For each $n \in \mathbb{N}$ with $n \geq 1$,
letting $\Delta_{(\bot, \bot)}$ be the constant function  returning $(\bot, \bot)$,
the first component of $\Phi_c^n(\Delta_{(\bot, \bot)})(s, v)$
indicates whether
the path determined by $v$ from $s$ terminates in at most $n$-steps,
and the second component indicates whether
it terminates safely (i.e.,~in a state of $A$) if it does terminate in at most $n$-steps.
Since $[S \times \Sigma^\omega, \rs]$ is a complete lattice and $\Phi_c$ is $\omega$-continuous,
the least fixed point $\mu \Phi_c$ can be written as $\bigsqcup_{n \in \mathbb{N}}\Phi_c^n (\Delta_{(\bot, \bot)})$
by Kleene theorem~(see e.g.~\cite{DBLP:journals/dm/Baranga91}).
Specifically,
for each $s \in S$ and $v \in \Sigma^\omega$,
$\mu \Phi_c(s)(v)$
is
the pair of Booleans
that indicates
1) whether the path determined by $v$ from $s$ eventually terminates,
and 2) whether the path terminates safely if it does eventually terminate.

Note that the category $[S \times \Sigma^\omega, \rs]$ also forms a complete lattice,
and thus
the category $\Alg{\Phi_c}$ consists of  all prefixed points of $\Phi_c$.
By the Knaster--Tarski theorem~\cite{pjm/1103044538},
the least fixed point $\mu{\Phi_c}$ of $\Phi_c$ gives
the initial object of $\Alg{\Phi_c}$.
\end{example}

We move to subcategories of algebras defined as follows.
\begin{definition} %
  Consider a subcategory $\mathbb{C}'$ of $\mathbb{C}$.
  For a functor $\Phi\colon \mathbb{C} \to \mathbb{C}$,
  we define the category
  \emph{$\AlgRst{\Phi}{\mathbb{C}'}$}
  by the following change-of-base situation~\cite{DBLP:books/daglib/0023251}.
  \begin{displaymath}
    \xymatrix@R=1em@C=1em{
      \AlgRst{\Phi}{\mathbb{C}'} \ar[d] \pullbackmark{1, 0}{0, 1} \ar[r] &\Alg{\Phi} \ar[d] \\
      \mathbb{C}' \ar@{^{(}->}[r] &\mathbb{C}
    }
  \end{displaymath}
\end{definition}
\begin{proposition} \label{prop:alg_fullsub}
  If $\mathbb{C}'$ is a full subcategory of $\mathbb{C}$,
  then $\AlgRst{\Phi}{\mathbb{C}'}$ is a full subcategory of $\Alg{\Phi}$.
  \qed
\end{proposition}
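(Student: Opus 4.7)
The plan is to unpack the change-of-base definition and reduce fullness of $\AlgRst{\Phi}{\mathbb{C}'} \hookrightarrow \Alg{\Phi}$ to fullness of $\mathbb{C}' \hookrightarrow \mathbb{C}$. First I would spell out, from the pullback square, what the objects and morphisms of $\AlgRst{\Phi}{\mathbb{C}'}$ are: an object is a pair $(X,a)$ where $(X,a)$ is a $\Phi$-algebra and $X$ lies in $\mathbb{C}'$, and a morphism $(X,a) \to (Y,b)$ is a morphism $f \colon X \to Y$ of $\mathbb{C}'$ such that $f \circ a = b \circ \Phi f$ holds in $\mathbb{C}$. Writing $U \colon \Alg{\Phi} \to \mathbb{C}$ for the canonical forgetful functor, the inclusion $J \colon \AlgRst{\Phi}{\mathbb{C}'} \to \Alg{\Phi}$ produced by the pullback simply sends $(X,a)$ to $(X,a)$ viewed in $\Alg{\Phi}$ and acts as the inclusion on morphisms.

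Next I would verify faithfulness (which is immediate since $J$ is identity on underlying data) and then fullness. For fullness, take two objects $(X,a), (Y,b) \in \AlgRst{\Phi}{\mathbb{C}'}$ and an arbitrary $\Alg{\Phi}$-morphism $f \colon (X,a) \to (Y,b)$. By definition of $\Alg{\Phi}$, $f$ is a morphism $X \to Y$ in $\mathbb{C}$ satisfying the algebra square $f \circ a = b \circ \Phi f$. Since $X$ and $Y$ both live in $\mathbb{C}'$ and $\mathbb{C}'$ is by assumption a full subcategory of $\mathbb{C}$, the morphism $f$ is automatically a morphism of $\mathbb{C}'$. Combined with the algebra-square equation, which already holds in $\mathbb{C}$ and hence in $\mathbb{C}'$, this says exactly that $f$ is a morphism of $\AlgRst{\Phi}{\mathbb{C}'}$, and clearly $Jf = f$. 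This establishes fullness.

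The only subtle point---and the closest thing to an obstacle---is making sure the change-of-base pullback is interpreted on the nose in a way consistent with the stated definition, so that an object of the pullback really is a $\Phi$-algebra whose carrier is literally an object of $\mathbb{C}'$ (rather than merely isomorphic to one). This is the standard interpretation of the pullback along a subcategory inclusion, as used in \cite{DBLP:books/daglib/0023251}; once granted, the argument above is purely formal and no further calculation is needed. I would therefore keep the proof to a few lines, stating the characterisation of objects and morphisms from the pullback and then invoking fullness of $\mathbb{C}' \hookrightarrow \mathbb{C}$ directly.
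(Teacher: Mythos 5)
Your argument is correct and is exactly the standard one the paper has in mind: the paper omits the proof entirely (marking the proposition with \qed as immediate), and the content is precisely your observation that a morphism in $\Alg{\Phi}$ between algebras with carriers in $\mathbb{C}'$ is automatically a morphism of $\mathbb{C}'$ by fullness, hence lies in the pullback. No gap; your write-up is just a more explicit version of what the paper treats as evident.
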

  The next definition describes how to lift
  functors along the forgetful functor $\AlgRst{\Phi}{\mathbb{C}'} \to \mathbb{C}'$.
  Later,
  in Thm.~\ref{thm:hermida_sub_equiv},
  we will extend this construction further to obtain adjunctions and adjoint equivalences between subcategories of algebras.
\begin{definition} \label{def:alg_functor}
Assume we have two subcategories
  $\mathbb{C}' \hookrightarrow \mathbb{C}$ and $\mathbb{D}' \hookrightarrow \mathbb{D}$ and two functors $\Phi\colon \mathbb{C} \rightarrow \mathbb{C}$ and $\Psi\colon \mathbb{D} \rightarrow \mathbb{D}$.
  Let  $L\colon \mathbb{C} \rightarrow \mathbb{D}$ be a functor that can be restricted to the subcategories $(L\colon \mathbb{C}'\rightarrow  \mathbb{D}')$, and
  $\alpha$ be a natural transformation $\alpha\colon  \Psi L\Rightarrow L\Phi$ as follows:
  \begin{equation} \label{eq:alpha}
    \xymatrix{
      \mathbb{C}' \ar[d]_L \ar[r]^\Phi &\mathbb{C} \ar[d]^L \\
      \mathbb{D}' \ar@{=>}[ur]^\alpha \ar[r]^\Psi &\mathbb{D}
    }
  \end{equation}
  We define a functor $\AlgFn{L}{\alpha}\colon \AlgRst{\Phi}{\mathbb{C}'} \to \AlgRst{\Psi}{\mathbb{D}'}$ by
  \begin{displaymath}
    (X, a) \mapsto (LX, La \circ \alpha_X) \
    \text{ and } \
    f \mapsto Lf.
  \end{displaymath}
  We omit $\alpha$ in $\AlgFn{L}{\alpha}$ when $\alpha$ is given by $\Psi L = L\Phi$.
\end{definition}

\begin{theorem} \label{thm:hermida_sub_equiv}
Consider subcategories, functors, and a natural transformation described in \eqref{eq:alpha},
and assume that $\alpha$ is an isomorphism and there is an adjunction $\adj L R {\eta}{\epsilon}\colon \mathbb{C} \to \mathbb{D}$ that can be restricted to the subcategories.
Let $\beta$ be
the natural transformation $R\Psi \epsilon \circ \overline{\alpha^{-1}}R\colon \Phi R \Rightarrow R\Psi$.

Then the following tuple is an adjunction.
\begin{displaymath}
  \adj {\AlgFn{L}{\alpha}}{\AlgFn{R}{\beta}}{\eta}{\epsilon}\colon \AlgRst{\Phi}{\mathbb{C}'} \to \AlgRst{\Psi}{\mathbb{D}'}.
\end{displaymath}
Moreover, when $\mathbb{C}' = \fixunit{\mathbb{C}}$ and $\mathbb{D}' = \fixcounit{\mathbb{D}}$,
this adjunction becomes an adjoint equivalence.
\qed
\end{theorem}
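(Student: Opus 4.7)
The strategy has two phases: first establish the adjunction using the Hermida--Jacobs lifting technique applied to the restricted subcategories, then upgrade it to an adjoint equivalence by invoking Proposition~\ref{prop:adj_equiv}. To begin, I would check that $\AlgFn{L}{\alpha}$ and $\AlgFn{R}{\beta}$ are well-defined functors with the indicated codomains. For objects, the assumption that the base adjunction restricts ensures that $LX \in \mathbb{D}'$ whenever $X \in \mathbb{C}'$, and similarly for $R$, so that the algebra structures $La \circ \alpha_X$ and $Rb \circ \beta_Y$ lie in the correct subcategories. Functoriality on morphisms is a short diagram chase using naturality of $\alpha$ and $\beta$ together with functoriality of $L$ and $R$.

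The main technical step is to verify that for every algebra $(X,a) \in \AlgRst{\Phi}{\mathbb{C}'}$ the base unit $\eta_X$ is an algebra morphism $(X,a) \to \AlgFn{R}{\beta}\AlgFn{L}{\alpha}(X,a)$, and dually that $\epsilon_Y$ is an algebra morphism for each $(Y,b) \in \AlgRst{\Psi}{\mathbb{D}'}$. Expanding $\beta_{LX} = R\Psi\epsilon_{LX} \circ R\alpha^{-1}_{RLX} \circ \eta_{\Phi RLX}$, the required identity
\[
\eta_X \circ a \;=\; RLa \circ R\alpha_X \circ \beta_{LX} \circ \Phi\eta_X
\]
reduces---after repeated use of naturality of $\eta$, naturality of $\alpha^{-1}$, the triangle identity $\epsilon L \circ L\eta = \id$, and the invertibility of $\alpha$---to a tautology. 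The dual check for $\epsilon_Y$ proceeds symmetrically. Once these are in place, naturality of the lifted $\eta$ and $\epsilon$ is inherited from the base (their naturality squares in the algebra categories are literally those in $\mathbb{C}$ and $\mathbb{D}$), and the triangle identities in $\AlgRst{\Phi}{\mathbb{C}'}$ and $\AlgRst{\Psi}{\mathbb{D}'}$ reduce to those in $\mathbb{C}$ and $\mathbb{D}$ because the forgetful functors into $\mathbb{C}'$ and $\mathbb{D}'$ are faithful.

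For the adjoint-equivalence claim, when $\mathbb{C}' = \fixunit{\mathbb{C}}$ and $\mathbb{D}' = \fixcounit{\mathbb{D}}$, Proposition~\ref{prop:adj_equiv} states that $\eta$ and $\epsilon$ are natural isomorphisms on these subcategories. It then suffices to observe that a morphism of algebras is an isomorphism precisely when its underlying morphism is, since the inverse in the base automatically commutes with the algebra structures. Hence the lifted $\eta$ and $\epsilon$ are componentwise invertible, yielding the desired adjoint equivalence between $\AlgRst{\Phi}{\fixunit{\mathbb{C}}}$ and $\AlgRst{\Psi}{\fixcounit{\mathbb{D}}}$.

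I expect the principal obstacle to be the bookkeeping in the algebra-morphism verification for $\eta_X$ (and, dually, $\epsilon_Y$): it is exactly here that the adjoint-transpose formula defining $\beta$ interacts with the naturality of $\alpha^{-1}$ in a way that makes the hypothesis that $\alpha$ is invertible essential---without it, the expansion does not close up, and only one of the two algebra-morphism conditions would survive. Everything else (well-definedness, naturality, triangle identities, promotion to an equivalence) is formal once this diagram chase is done.
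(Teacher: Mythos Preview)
Your proposal is correct. The one notable difference from the paper's argument is the verification strategy for the lifted adjunction: you plan to check directly that the base unit $\eta_X$ and counit $\epsilon_Y$ are algebra morphisms and then inherit naturality and the triangle identities, whereas the paper instead establishes the hom-set bijection in one step, showing that a morphism $f\colon (X,a)\to\AlgFn{R}{\beta}(Y,b)$ is precisely a morphism whose transpose $\overline{f}$ satisfies the algebra-morphism condition for $\AlgFn{L}{\alpha}(X,a)\to(Y,b)$, using naturality of $\alpha$. The paper's route is a little shorter---one calculation instead of two symmetric ones plus the triangle identities---while yours makes the role of $\eta$ and $\epsilon$ as lifted unit/counit explicit, which feeds cleanly into the equivalence step. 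For the adjoint-equivalence part both arguments coincide: once $\eta$ and $\epsilon$ are known to be componentwise isomorphisms on $\fixunit{\mathbb{C}}$ and $\fixcounit{\mathbb{D}}$, their inverses are automatically algebra morphisms.
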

See 
\ifarxiv
Appendix~\ref{ap:omitted_cat} 
\else
\cite[Appendix A]{arxiv}
\fi
for its proof.
We note that the construction of such adjunctions between subcategories of algebras can be seen as
a natural extension of~\cite[Thm.~2.14]{DBLP:journals/iandc/HermidaJ98}.

In the setting of Thm.~\ref{thm:hermida_sub_equiv}, one finds that
$\Psi \cong L \Phi R$ via $\alpha_R \circ \Psi \epsilon^{-1}$
when
R is full and faithful.
Beyond this, we establish an additional
result about equivalence.
Specifically,
letting $\Psi \coloneqq L\Phi R$ and $\alpha \coloneqq L\Phi \eta^{-1}$,
we obtain the following corollary immediately,
which serves as our main result for the initial algebra correspondences discussed in Sec.~\ref{sec:init_alg_corresp} and~\ref{sec:init_alg_corresp_L}.

\begin{corollary} \label{cor:fix}
  Let $\Phi\colon \mathbb{C} \to \mathbb{C}$ be a functor and
  let
  $\adj L R \eta \epsilon \colon {\mathbb{C}} \to {\mathbb{D}}$ be an adjunction.
  Then the following tuple
 is an adjoint equivalence:
  \begin{equation}\label{eq:alg_equiv}
  \adj {\AlgFn{L}{\alpha}} {\AlgFn{R}{\beta}} \eta \epsilon \colon {\AlgRst{\Phi}{\fixunit{\mathbb{C}}}} \to {\AlgRst{L\Phi R}{\fixcounit{\mathbb{D}}}},
  \end{equation}
  where $\alpha = L\Phi \eta^{-1}$ and $\beta = \eta \Phi R$.
  \qed
\end{corollary}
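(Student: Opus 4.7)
The plan is to recognize Corollary~\ref{cor:fix} as a direct instance of Theorem~\ref{thm:hermida_sub_equiv}, specialized to $\mathbb{C}' = \fixunit{\mathbb{C}}$, $\mathbb{D}' = \fixcounit{\mathbb{D}}$, $\Psi \coloneqq L\Phi R$, and $\alpha \coloneqq L\Phi\eta^{-1}$. The two steps are then: verify the hypotheses of Theorem~\ref{thm:hermida_sub_equiv} in this setting, and identify the natural transformation $\beta$ produced by the theorem with the explicit formula $\eta\Phi R$ stated in the corollary. Since the conclusion of Theorem~\ref{thm:hermida_sub_equiv} when $\mathbb{C}' = \fixunit{\mathbb{C}}$ and $\mathbb{D}' = \fixcounit{\mathbb{D}}$ is already an adjoint equivalence, this will suffice.

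For the hypotheses, Proposition~\ref{prop:adj_equiv} says the given adjunction restricts to an adjoint equivalence on $\fixunit{\mathbb{C}}$ and $\fixcounit{\mathbb{D}}$; in particular $L$ and $R$ send these subcategories to one another, so the adjunction restricts to the subcategories as required. By definition of $\fixunit{\mathbb{C}}$, every component of $\eta$ on objects of $\fixunit{\mathbb{C}}$ is an isomorphism, so $\alpha = L\Phi\eta^{-1}$ is a well-defined natural isomorphism of type $\Psi L|_{\fixunit{\mathbb{C}}} = L\Phi RL|_{\fixunit{\mathbb{C}}} \Rightarrow L\Phi|_{\fixunit{\mathbb{C}}}$, matching the shape in~\eqref{eq:alpha}.

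Next, I would compute $\beta = R\Psi\epsilon \circ \overline{\alpha^{-1}}R$ and show it equals $\eta\Phi R$. Since $\alpha^{-1} = L\Phi\eta$, the adjoint transpose under $L \dashv R$ is $\overline{\alpha^{-1}} = R(L\Phi\eta) \circ \eta\Phi = RL\Phi\eta \circ \eta\Phi\colon \Phi \Rightarrow R\Psi L$. Using $\Psi\epsilon = L\Phi R\epsilon$, the component of $\beta$ at $X$ reads
\begin{equation*}
  \beta_X \;=\; RL\Phi R\epsilon_X \circ RL\Phi\eta_{RX} \circ \eta_{\Phi RX} \;=\; RL\Phi(R\epsilon_X \circ \eta_{RX}) \circ \eta_{\Phi RX},
\end{equation*}
which collapses to $\eta_{\Phi RX}$ by the triangle identity $R\epsilon \circ \eta R = \id$. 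Hence $\beta = \eta\Phi R$, and Theorem~\ref{thm:hermida_sub_equiv} delivers the adjoint equivalence~\eqref{eq:alg_equiv}.

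There is no substantial obstacle; the only subtlety to watch is that $\eta^{-1}$ is only available on $\fixunit{\mathbb{C}}$, which is exactly the subcategory on which $\alpha$ must be defined. Once this is noted, everything else is a bookkeeping verification of naturality and the triangle identity.
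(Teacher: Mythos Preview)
Your proposal is correct and follows exactly the approach the paper indicates: the paper itself derives the corollary ``immediately'' from Theorem~\ref{thm:hermida_sub_equiv} by setting $\Psi \coloneqq L\Phi R$ and $\alpha \coloneqq L\Phi\eta^{-1}$, without spelling out the computation of $\beta$. Your explicit verification that $\beta = R\Psi\epsilon \circ \overline{\alpha^{-1}}R$ simplifies to $\eta\Phi R$ via the triangle identity is a welcome addition that the paper omits.
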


\begin{example} \label{eg:ptPredTran2}
Consider Example~\ref{eg:ptPredTran}.
The adjunction defined in Example~\ref{eg:adjointPT}
induces, in a pointwise manner, an adjunction between
  functor categories $[S \times \Sigma^\omega, \rs]$ and $[S \times \Sigma^\omega, \mathbf{2}]$.
  We denote this adjunction by $L \dashv R\colon [S \times \Sigma^\omega, \mathbf{2}] \to [S \times \Sigma^\omega, \rs]$.
Then the composite functor $L\Phi_c R$ is concretely given by
\begin{align*}
  (L \Phi_c R)(k)(s, \sigma\vec{\sigma})
  =
  \begin{cases}
    A(s) &\text{if }c(s)(\sigma) = \checkmark \\
    \pi_2 k(c(s)(\sigma), \vec{\sigma}) &\text{otherwise}.
  \end{cases}
\end{align*}
Since the counit is a natural isomorphism, the equality $\AlgRst{L\Phi_c R}{\fixcounit{[S \times \Sigma^\omega,  \mathbf{2}]}} = \Alg{L\Phi_c R}$ holds, and
the domain $\mu (L \Phi_c R)$ of the initial object
is
the least fixed point of $L \Phi_c R$.
Concretely, $\mu (L \Phi_c R)$
sends each pair $(s, v) \in S \times \Sigma^\omega$ to
the Boolean indicating \emph{total correctness} of $c$,
i.e.~whether the path determined by $v$ from $s$ terminates safely.

By Cor.~\ref{cor:fix},
we obtain
an adjoint equivalence between $\AlgRst{\Phi_c}{\fixunit{[S \times \Sigma^\omega, \rs]}}$
and $\Alg{L\Phi_c R}$.
Under this equivalence,
a prefixed point $k$ of $\Phi_c$ lies in $\fixunit{[S \times \Sigma^\omega, \rs]}$ corresponds to a prefixed point of $L\Phi_c R$ via $L$ and $R$.
In the next section, we show that this equivalence maps initial algebras under an abstract reachability condition.
\end{example}

\subsection{Extension for liftings of adjunctions} \label{subsec:lifting}
We briefly mention an extension of Cor.~\ref{cor:fix} to the setting of liftings of adjunctions rather than merely adjunctions
(in other words, adjunctions in the arrow 2-category $\CAT^{\to}$ rather than adjunctions in the 2-category $\CAT$ of categories).
This extension offers a framework for analyzing how equivalences occur in fibrational settings.
The result established here is used only in the example in \S{}\ref{subsec:corresp_lifting};
the reader may skip to the next section and revisit this part when exploring \S{}\ref{subsec:corresp_lifting}.

For functors $p\colon \mathbb{E} \to \mathbb{C}$, $q\colon \mathbb{F} \to \mathbb{D}$, and $F\colon \mathbb{C} \to \mathbb{D}$,
we say that a functor $\dot{F}\colon \mathbb{E} \to \mathbb{F}$ is a \emph{lifting of $F$ along $p, q$}
if $F \circ p = q \circ \dot{F}$.
Similarly, for liftings $\dot{F}, \dot{G}$ of $F, G$ (respectively) along $p, q$, and
a natural transformation $\alpha\colon F \Rightarrow G$,
we say that a natural transformation $\dot{\alpha}\colon \dot{F} \Rightarrow \dot{G}$ is a \emph{lifting of $\alpha$ along $p, q$} if $\alpha p = q \dot{\alpha}$.
We often omit $q$ in ``along $p, q$'' when $p = q$.

A \emph{lifting of an adjunction (resp.~adjoint equivalence) $\adj L R \eta \epsilon\colon {\mathbb{C}} \to {\mathbb{D}}$}
along $p\colon \mathbb{E} \to \mathbb{C}, q\colon \mathbb{F} \to \mathbb{D}$
is an adjunction (resp.~adjoint equivalence) $\adj {\dot{L}} {\dot{R}} {\dot{\eta}} {\dot{\epsilon}} \colon {\mathbb{E}} \to {\mathbb{F}}$
s.t.\
$\dot{L}, \dot{R}, \dot{\eta}, \dot{\epsilon}$ are liftings of
$L, R, \eta, \epsilon$, respectively, along $p, q$.

\begin{proposition} \label{prop:fix_total}
  Let $p\colon \mathbb{E} \to \mathbb{C}$ and $q\colon \mathbb{F} \to \mathbb{D}$ be functors.
  Suppose $\dot{\Phi}\colon \mathbb{E} \to \mathbb{E}$ is a lifting of $\Phi\colon \mathbb{C} \to \mathbb{C}$ along $p$,
  and
  an adjunction
  $\adj {\dot{L}} {\dot{R}} {\dot{\eta}} {\dot{\epsilon}} \colon {\mathbb{E}} \to {\mathbb{F}}$ is a lifting of an adjunction $\adj L R \eta \epsilon \colon {\mathbb{C}} \to {\mathbb{D}}$ along $p, q$.

By applying Cor.~\ref{cor:fix} to the pairs $(\dot{\Phi}, \adj {\dot{L}} {\dot{R}} {\dot{\eta}} {\dot{\epsilon}})$ and $(\Phi, \adj L R \eta \epsilon)$, respectively,
we obtain
the following adjoint equivalence
\begin{center}
  $\adj {\AlgFn{\dot{L}}{\dot{\alpha}}} {\AlgFn{\dot{R}}{\dot{\beta}}} {\dot{\eta}} {\dot{\epsilon}} \colon {\AlgRst{\dot{\Phi}}{\fixunit{\mathbb{E}}}} \to {\AlgRst{\dot{L}\dot{\Phi} \dot{R}}{\fixcounit{\mathbb{F}}}}$,
\end{center}
which
is a lifting of
the adjoint equivalence
\begin{center}
$\adj {\AlgFn{L}{\alpha}} {\AlgFn{R}{\beta}} \eta \epsilon \colon {\AlgRst{{\Phi}}{\fixunit{\mathbb{C}}}} \to {\AlgRst{{L}{\Phi} {R}}{\fixcounit{\mathbb{D}}}}$
\end{center}
along $\Alg{p}, \Alg{q}$
where $\dot{\alpha} = \dot{L}\dot{\Phi}\dot{\eta}^{-1}$, $\dot{\beta} = \dot{\eta}\dot{\Phi}\dot{R}$,
${\alpha} = {L}{\Phi}{\eta}^{-1}$, and ${\beta} = {\eta}{\Phi}{R}$.
  \qed
\end{proposition}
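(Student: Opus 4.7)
The plan is to reduce the statement to checking that each of the four components of the upstairs adjoint equivalence is a lifting of its downstairs counterpart; that the two tuples themselves form adjoint equivalences is immediate from two applications of Cor.~\ref{cor:fix}, one to $(\dot{\Phi}, \adj{\dot{L}}{\dot{R}}{\dot{\eta}}{\dot{\epsilon}})$ and one to $(\Phi, \adj{L}{R}{\eta}{\epsilon})$. The remaining work is essentially a systematic unfolding of the base lifting hypotheses.

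First I would confirm that the algebra-level functors $\Alg{p}$ and $\Alg{q}$ restrict to functors $\AlgRst{\dot{\Phi}}{\fixunit{\mathbb{E}}} \to \AlgRst{\Phi}{\fixunit{\mathbb{C}}}$ and $\AlgRst{\dot{L}\dot{\Phi}\dot{R}}{\fixcounit{\mathbb{F}}} \to \AlgRst{L\Phi R}{\fixcounit{\mathbb{D}}}$. The well-definedness on $\Alg{\dot{\Phi}}$ uses the identity $\Phi p = p\dot{\Phi}$, and the one on $\Alg{\dot{L}\dot{\Phi}\dot{R}}$ uses $q\dot{L}\dot{\Phi}\dot{R} = L\Phi R q$, assembled from the individual lifting identities for $\dot{L}, \dot{R}, \dot{\Phi}$. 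The restriction to the fixed-point subcategories follows because any functor preserves isomorphisms: from $p\dot{\eta}_X = \eta_{pX}$, invertibility of $\dot{\eta}_X$ forces invertibility of $\eta_{pX}$, and symmetrically for $q$ and $\dot{\epsilon}$.

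Next I would verify the lifting identities for the four components of the adjunction. On morphisms the conditions are immediate, since the algebra-level functors act as the ambient functors on underlying arrows. On objects, the lifting equation for $\AlgFn{\dot{L}}{\dot{\alpha}}$ reduces to $q\dot{L} = Lp$ (given) together with $q\dot{\alpha} = \alpha p$; using $\dot{\alpha} = \dot{L}\dot{\Phi}\dot{\eta}^{-1}$ and $\alpha = L\Phi\eta^{-1}$, this is established by whiskering the base lifting identities on $\dot{L}, \dot{\Phi}, \dot{\eta}$. The symmetric computation handles $\AlgFn{\dot{R}}{\dot{\beta}}$ versus $\AlgFn{R}{\beta}$. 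For the unit and counit, the algebra-level natural transformations produced by Thm.~\ref{thm:hermida_sub_equiv} are defined componentwise by the base $\dot{\eta}$ and $\dot{\epsilon}$, so the lifting conditions $p\dot{\eta} = \eta p$ and $q\dot{\epsilon} = \epsilon q$ transfer verbatim to the algebra level.

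The only subtle step, and the one most likely to cause confusion, is the handling of $\dot{\eta}^{-1}$ appearing in $\dot{\alpha}$. This is well-defined only after restricting to $\fixunit{\mathbb{E}}$, on which $\dot{\eta}$ is a natural isomorphism; there, applying the functor $p$ to both sides of $\dot{\eta}_X \circ \dot{\eta}_X^{-1} = \id$ and using $p\dot{\eta}_X = \eta_{pX}$ yields $p\dot{\eta}_X^{-1} = \eta_{pX}^{-1}$. This is precisely the input needed for the whiskering chain $q\dot{\alpha}_X = q\dot{L}\dot{\Phi}\dot{\eta}_X^{-1} = Lp\dot{\Phi}\dot{\eta}_X^{-1} = L\Phi\eta_{pX}^{-1} = \alpha_{pX}$. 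Apart from this point, the proof is purely diagrammatic.
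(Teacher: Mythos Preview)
Your proposal is correct and is exactly the kind of routine unfolding the paper has in mind: the paper gives no proof at all for this proposition (it ends with a bare \qed), signalling that the result is an immediate consequence of two applications of Cor.~\ref{cor:fix} together with the evident componentwise checks that the lifting hypotheses on $\dot L,\dot R,\dot\Phi,\dot\eta,\dot\epsilon$ propagate to $\Alg{p},\Alg{q}$ and to the derived data $\dot\alpha,\dot\beta$. Your treatment of the one genuinely non-automatic point---that $\dot\eta^{-1}$ is only available after restricting to $\fixunit{\mathbb{E}}$, and that $p\dot\eta_X^{-1}=\eta_{pX}^{-1}$ there---is precisely what is needed to make the whiskering chain for $q\dot\alpha=\alpha p$ go through.
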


\section{Initial algebra correspondence by adjoint equivalences} \label{sec:init_alg_corresp}
Based on the adjoint equivalence between subcategories of algebras shown in
Cor.~\ref{cor:fix}, we introduce a framework for deriving initial algebra correspondence under a reachability condition.
Throughout this section,
we work within the following setting:
\begin{displaymath}
  \xymatrix{
    \mathbb{C} \lloop{\Phi} \ar@/^1em/[rr]^{L} &\bot &\mathbb{D} \ar@/^1em/[ll]^{R} \rloop{L \Phi R}
  }
\end{displaymath}
We assume that
both $\Phi$ and $L\Phi R$ have initial algebras, denoted $(\mu \Phi, \iota_\Phi)$ and $(\mu (L\Phi R), \iota_{L \Phi R})$, respectively.
Our goal here is to establish a sufficient condition ensuring
a correspondence between $\iota_\Phi$ and $\iota_{L \Phi R}$
via the equivalence \eqref{eq:alg_equiv}.
\begin{definition}
The \emph{initial algebra correspondence for $\Phi$ and $L \Phi R$} holds if
the following condition $(\star)$ is satisfied.
\begin{align*}
  (\star): \quad& \iota_\Phi\text{ is initial in }\AlgRst{\Phi}{\fixunit{\mathbb{C}}},\\
  & \text{and }\iota_{L \Phi R}\text{ is initial in } \AlgRst{L\Phi R}{\fixcounit{\mathbb{D}}}.
\end{align*}
\end{definition}
\begin{proposition}
The \emph{initial algebra correspondence for $\Phi$ and $L \Phi R$} holds iff
$\iota_\Phi$ and $\iota_{L \Phi R}$ lie in $\AlgRst{\Phi}{\fixunit{\mathbb{C}}}$ and $\AlgRst{L\Phi R}{\fixcounit{\mathbb{D}}}$, respectively. They correspond to each other by the adjoints $\AlgFn{L}{\alpha}$ and $\AlgFn{R}{\beta}$ given in Cor.~\ref{cor:fix}.
\end{proposition}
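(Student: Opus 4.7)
The plan is to split the biconditional into the obvious ``only if'' direction and a more substantive ``if'' direction, and then deduce the correspondence from Cor.~\ref{cor:fix}.

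For the ``only if'' direction I would argue purely by definition: if $\iota_\Phi$ is initial in $\AlgRst{\Phi}{\fixunit{\mathbb{C}}}$, then $(\mu\Phi, \iota_\Phi)$ is in particular an object of $\AlgRst{\Phi}{\fixunit{\mathbb{C}}}$, i.e.\ $\mu\Phi$ lies in $\fixunit{\mathbb{C}}$; symmetrically for $\iota_{L\Phi R}$. The ``if'' direction relies on the following well-known fact about full subcategories: an initial object of an ambient category that happens to lie inside a full subcategory is automatically initial in that subcategory. By Prop.~\ref{prop:alg_fullsub} the categories $\AlgRst{\Phi}{\fixunit{\mathbb{C}}}$ and $\AlgRst{L\Phi R}{\fixcounit{\mathbb{D}}}$ are full subcategories of $\Alg{\Phi}$ and $\Alg{L\Phi R}$, respectively, since $\fixunit{\mathbb{C}}$ and $\fixcounit{\mathbb{D}}$ are full in $\mathbb{C}$ and $\mathbb{D}$. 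Thus for any object $A$ of $\AlgRst{\Phi}{\fixunit{\mathbb{C}}}$, the unique algebra morphism $\iota_\Phi \to A$ in $\Alg{\Phi}$ exists in $\AlgRst{\Phi}{\fixunit{\mathbb{C}}}$ by fullness, and is unique there because it is already unique in $\Alg{\Phi}$. The same argument applied on the $\mathbb{D}$-side completes the equivalence.

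For the correspondence statement, I would invoke Cor.~\ref{cor:fix}, which provides the adjoint equivalence
\[
\adj {\AlgFn{L}{\alpha}} {\AlgFn{R}{\beta}} \eta \epsilon \colon {\AlgRst{\Phi}{\fixunit{\mathbb{C}}}} \to {\AlgRst{L\Phi R}{\fixcounit{\mathbb{D}}}}.
\]
Equivalences of categories preserve initial objects (indeed, they preserve all limits and colimits), so once we know $\iota_\Phi$ is initial in $\AlgRst{\Phi}{\fixunit{\mathbb{C}}}$, the image $\AlgFn{L}{\alpha}(\iota_\Phi)$ is initial in $\AlgRst{L\Phi R}{\fixcounit{\mathbb{D}}}$; by uniqueness of initial objects up to isomorphism it must then be isomorphic to $\iota_{L\Phi R}$. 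The symmetric argument yields $\AlgFn{R}{\beta}(\iota_{L\Phi R}) \cong \iota_\Phi$, which is the desired correspondence through the adjoint pair.

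I do not expect a real obstacle here; the whole argument is a two-line application of (a) fullness of a subcategory propagating initiality upwards/downwards, and (b) preservation of initial objects under adjoint equivalences. The only point that requires a bit of care is to unfold the definition of $\AlgRst{\Phi}{\fixunit{\mathbb{C}}}$ (via the change-of-base pullback) so that ``$\iota_\Phi$ lies in $\AlgRst{\Phi}{\fixunit{\mathbb{C}}}$'' is read correctly as ``$\eta_{\mu\Phi}$ is an isomorphism'', and dually for $\epsilon_{\mu(L\Phi R)}$; once this identification is made, the proof is essentially immediate.
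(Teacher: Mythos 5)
Your proposal is correct and follows essentially the same route as the paper: the ``only if'' direction is immediate from the definition, the ``if'' direction uses fullness of the subcategories (Prop.~\ref{prop:alg_fullsub}) to propagate initiality from $\Alg{\Phi}$ and $\Alg{L\Phi R}$ down to the restricted subcategories, and the correspondence itself follows from the adjoint equivalence of Cor.~\ref{cor:fix} preserving initial objects. The paper's proof is merely terser, leaving the final correspondence step implicit.
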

\begin{proof}
  The (only-if) is clear.
  For (if),
by Prop.~\ref{prop:alg_fullsub},
$\AlgRst{\Phi}{\fixunit{\mathbb{C}}}$ and $\AlgRst{L\Phi R}{\fixcounit{\mathbb{D}}}$ are full subcategories of $\Alg{\Phi}$ and $\Alg{L \Phi R}$, respectively.
Thus $\iota_\Phi$ lies in $\AlgRst{\Phi}{\fixunit{\mathbb{C}}}$
if and only if $\iota_\Phi$ is initial in $\AlgRst{\Phi}{\fixunit{\mathbb{C}}}$,
and the same statement for $\iota_{L \Phi R}$ and $\AlgRst{L\Phi R}{\fixcounit{\mathbb{D}}}$ holds.
\end{proof}
Furthermore, the initial algebra correspondence automatically induces the \emph{least fixed-point (lfp) correspondence}, namely,
$\mu \Phi \cong R\big(\mu (L \Phi R)\big)$ and $L(\mu \Phi) \cong \mu (L \Phi R)$.

We particularly focus on the case that $R$ is full and faithful.
This condition aligns with the concept of Galois insertions,
a fundamental notion in the theory of abstract interpretation~\cite{DBLP:conf/popl/CousotC77}.
If $R$ is full and faithful,
then by Prop.~\ref{prop:adj_full_faith}
we immediately have
$\fixcounit{\mathbb{D}} = \mathbb{D}$,
and hence
the second condition of $(\star)$ is automatically satisfied.
As a result, to establish the initial algebra correspondence,
it suffices to verify only the first condition.
To this end, we impose
a reachability condition defined below, which implies the first condition of $(\star)$.

\begin{definition}[\gra]
  A $\Phi$-algebra $a$
  satisfies the
  \emph{\gra}
  if $a \in \AlgRst{\Phi}{\fixunit{\mathbb{C}}}$.
  Equivalently, the codomain of $a$ is in $\fixunit{\mathbb{C}}$.
\end{definition}

\begin{proposition} \label{prop:gra_correspondence}
  Assume $R$ is full and faithful.
  Under the \gra of $\iota_\Phi$,
  the initial algebra correspondence for $\Phi$ and $L\Phi R$ holds.
  \qed
\end{proposition}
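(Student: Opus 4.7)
The plan is to reduce the claim to the two clauses of condition $(\star)$ and dispatch them separately, using that initiality is inherited by full subcategories in which the initial object still lives.

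First I would handle the second clause of $(\star)$, namely that $\iota_{L\Phi R}$ is initial in $\AlgRst{L\Phi R}{\fixcounit{\mathbb{D}}}$. The assumption that $R$ is full and faithful is the key lever here: by Prop.~\ref{prop:adj_full_faith}, it forces the counit $\epsilon$ to be a natural isomorphism, so every object of $\mathbb{D}$ lies in $\fixcounit{\mathbb{D}}$. Hence $\fixcounit{\mathbb{D}} = \mathbb{D}$ as subcategories, and consequently the change-of-base pullback defining $\AlgRst{L\Phi R}{\fixcounit{\mathbb{D}}}$ collapses to $\Alg{L\Phi R}$ itself. Since $\iota_{L\Phi R}$ is initial in $\Alg{L\Phi R}$ by hypothesis, this clause follows immediately.

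Next I would handle the first clause of $(\star)$, namely that $\iota_\Phi$ is initial in $\AlgRst{\Phi}{\fixunit{\mathbb{C}}}$. By Prop.~\ref{prop:alg_fullsub}, $\AlgRst{\Phi}{\fixunit{\mathbb{C}}}$ is a full subcategory of $\Alg{\Phi}$. The \gra{} on $\iota_\Phi$ is exactly the statement that $\iota_\Phi$ belongs to this full subcategory. A general fact about full subcategories then applies: if an initial object of the ambient category lies in the subcategory, it remains initial there, because the hom-sets coincide and the unique morphism in $\Alg{\Phi}$ from $\iota_\Phi$ to any object of $\AlgRst{\Phi}{\fixunit{\mathbb{C}}}$ automatically sits in the subcategory. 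Combining both clauses yields $(\star)$, i.e., the initial algebra correspondence for $\Phi$ and $L\Phi R$.

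The argument is essentially bookkeeping on top of the earlier propositions, so I do not expect a genuine obstacle; the only subtlety is to verify that the codomain condition in the definition of \gra{} matches the full-subcategory definition of $\AlgRst{\Phi}{\fixunit{\mathbb{C}}}$ under the change-of-base, which is precisely what Prop.~\ref{prop:alg_fullsub} asserts. Once that is noted, no explicit use of the adjoint equivalence from Cor.~\ref{cor:fix} is needed for the proposition itself; the adjoint equivalence is what then upgrades $(\star)$ to the actual bijective correspondence $\mu\Phi \cong R(\mu(L\Phi R))$ and $L(\mu\Phi) \cong \mu(L\Phi R)$ mentioned just before the proposition.
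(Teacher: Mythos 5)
Your proposal is correct and follows essentially the same route as the paper: the text immediately preceding the proposition uses Prop.~\ref{prop:adj_full_faith} to get $\fixcounit{\mathbb{D}} = \mathbb{D}$ and hence the second clause of $(\star)$, and the first clause follows from the \gra{} together with the fullness of $\AlgRst{\Phi}{\fixunit{\mathbb{C}}}$ in $\Alg{\Phi}$ (Prop.~\ref{prop:alg_fullsub}), exactly as you argue. Your closing remark that Cor.~\ref{cor:fix} is only needed to upgrade $(\star)$ to the actual correspondence, not to establish $(\star)$ itself, is also consistent with the paper's presentation.
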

\begin{example}[total and partial correctness] \label{eg:total_partial_correct}
  Consider Example~\ref{eg:ptPredTran}.
  The \gra of $\iota_{\Phi_c}$
  is $\mu \Phi_c \in \fixunit{[S \times \Sigma^\omega, \rs]}$,
  that is,
  for each state $s \in S$ and word $v \in \Sigma^\omega$,
  the path determined by $v$ from $s$ eventually terminates.

  Under this condition,
  Prop.~\ref{prop:gra_correspondence} ensures the initial algebra correspondence for $\Phi_c$ and $L \Phi_c R$,
  and thus also guarantees the lfp correspondence
  between
  $\mu \Phi_c$
  and
  $\mu (L \Phi_c R)$.
  It
  captures the correspondence between partial correctness and total correctness
  when $c$ eventually terminates for each input.
\end{example}

\begin{remark}[completeness of an abstract domain] \label{rm:complete}
  When we regard complete lattices as categories,
the notion of \emph{(backward) completeness of an abstract domain} in abstract interpretation~\cite{DBLP:conf/popl/CousotC79, DBLP:journals/jacm/GiacobazziRS00} can be characterized
by the existence of a natural isomorphism
$\alpha\colon (L\Phi R)L \Rightarrow L\Phi$ between functors $\mathbb{C} \Rightarrow \mathbb{D}$.
The existence of $\alpha$ guarantees
a left adjoint $\Alg{L}\colon \Alg{\Phi } \to \Alg{L\Phi R}$ by~\cite[Thm.~2.14]{DBLP:journals/iandc/HermidaJ98},
implying $L (\mu \Phi) \cong \mu (L \Phi R)$.
If we have such an isomorphism $\alpha$,
then
the lfp correspondence
$\mu \Phi \cong R\big(\mu (L \Phi R)\big)$
under the setting of
Prop.~\ref{prop:gra_correspondence}
can be shown
without using Cor.~\ref{cor:fix}.
Concretely,
it follows that
$\mu \Phi \cong RL (\mu \Phi) \cong R \big(\mu(L \Phi R)\big)$
where the first isomorphism is given by the unit of $\mu \Phi$
and the second comes from
preservation of initial objects by the left adjoint $\Alg{L}$.

We provide examples satisfying completeness of an abstract domain in Example~\ref{eg:total_partial_correct} and~\ref{eg:cost-boundedReach}, which also fit into our framework.
By contrast, all other examples in this section are not complete.
\end{remark}

\subsection{Examples of Initial Algebra Correspondence}
We illustrate \gra{}s that induce the initial algebra correspondence
by examining two examples.
Like in Example~\ref{eg:total_partial_correct},
the functor $\Phi$ is implicitly parameterized by a coalgebra $c$,
and a \gra corresponds to a reachability condition imposed on $c$.

\begin{example}[resource-bounded reachability]
  \label{eg:cost-boundedReach}
We fix $M \in \mathbb{N}$.
Consider a coalgebra $c\colon S \to \mathcal{P}(S) \times \mathbb{N} + \{\checkmark\}$ in $\Set$,
representing a non-deterministic transition system  with resources and the target state $\checkmark$.
We are interested in whether each state $s\in S$ can reach the target state $\checkmark$ via a path whose  cumulative sum of resources is
at least $M$; we refer to this as
the \emph{resource-bounded reachability}.
We instantiate our framework for operators for resource-bounded reachability
and ordinary reachability. Such resource-bounded reachability has been actively studied in probabilistic verification, e.g.~\cite{stella1998optimization,ChristmanC13,HartmannsJKQ20}.

  Let $M$ be the complete lattice $\{0, \dots, M\}$ with the standard order, and
  let $M_\bot$ be the pointed complete lattice defined by adding a least element $\bot$.
  Define
  $\Phi\colon [S, M_\bot]\rightarrow [S, M_\bot]$ by
 \[
  \Phi(k)(s) \coloneqq
  \begin{cases}
   0 &\text{if }c(s) = \checkmark, \\
   \bigsqcup_{m \in T}\max(M, m+n)  &\text{if }c(s) \in \mathcal{P}(S) \times \mathbb{N},
  \end{cases}
\]
for each $k\in [S, M_\bot]$ and $s\in S$, where
$(X, n) \coloneqq c(s)$ and
$T\coloneqq k(X)\cap \mathbb{N}$.
Then the initial algebra $(\mu \Phi, \iota_\Phi)$ of $\Phi$ captures resource-bounded reachability:
$\mu \Phi(s) = \bot$ if $s$ is unreachable,
$\mu \Phi(s) = n$ (where $n < M$) if $s$ is reachable with resource value $n$, and
$\mu \Phi(s) = M$ if $s$ is reachable with resource value $M$ or more.
Hence, we can check resource-bounded reachability by checking whether the equality $\mu \Phi(s) = M$ holds.

We show that, under a \gra, resource-bounded reachability can be reduced to the simple reachability objective,
which we encode via $L \Phi R$ as follows.
Consider
the adjunction
$L \dashv R\colon [S, \mathbf{2}] \to [S, M_\bot]$ defined by
(i) $L(k)(s) \coloneqq \bot$ if $k(s) = \bot$, and $\top$ otherwise; and (ii) $R(k)(s) \coloneqq \bot$ if $k(s) = \bot$, and $M$ otherwise.
Then
the functor $L \Phi R\colon [S, \mathbf{2}] \rightarrow [S, \mathbf{2}]$ is given by
\begin{displaymath}
(L \Phi R)(k)(s) = \begin{cases}
  \top &\text{if }c(s) = \checkmark, \\
  \bigsqcup_{t \in \pi_1(c(s))} t  &\text{if } c(s) \in \mathcal{P}(S) \times \mathbb{N}.
  \end{cases}
\end{displaymath}
Then $\mu (L \Phi R)$ represents the reachability to $\checkmark$.

The \gra $(\mu\Phi \in \fixunit{[S, M_\bot]})$ is that all reachable states have paths whose accumulated resources are at least $M$.
Under this condition,
Prop.~\ref{prop:gra_correspondence} ensures
the initial algebra correspondence for $\Phi$ and $L \Phi R$.
As a result,
verifying resource-bounded reachability
can be reduced to
a simple reachability check.
\end{example}

\begin{example}[total and partial expected rewards of MCs] \label{eg:total_partial_mc}
Recall that an MC is
  a coalgebra $c \coloneqq \langle P, \rew\rangle\colon S\rightarrow \mcfunc{S} \times \mathbb{N}$ in $\Set$, where
$P\colon S\to \mathcal{D}(S + \{\checkmark\})$
describes the transition probabilities with finite supports, and $\rew\colon S\rightarrow \mathbb{N}$ is the reward function.
We instantiate our framework for the Bellman operators for the partial expected rewards (e.g.~\cite{Baier0KW17}) and the total expected rewards (e.g.~\cite{Puterman94}), which are formally presented in~\S\ref{subsec:totalAndPartial}.

We show that, under a \gra, the partial expected reward coincides with the total expected reward.
To see this, we consider functors $\Phi$ and $\Psi$ defined in~\S\ref{subsec:totalAndPartial}.
They align with our framework since $\Psi$ can be written as $L \Phi R$, constructed as follows.
  We define an adjunction
  $L \dashv R\colon [S, \interval{\infty}] \to [S, \interval{1} \times \interval{\infty}]$
  by $L(k) \coloneqq \pi_2 \circ k$ and
  $R(k)\coloneqq \langle \Delta_1, \id \rangle \circ k$ where $\Delta_1$ is the constant function returning $1$.
Then the functor $L\Phi R$ is concretely given by
\begin{align*}
  (L \Phi R)(k)(s) = \rew(s) + \sum_{s'\in S} P(s, s')\cdot k(s').
\end{align*}
This functor $L \Phi R$ is precisely the Bellman operator for the total reward objective.

The \gra $(\mu \Phi \in \fixunit{[S, \interval{1} \times \interval{\infty}]})$
is the almost sure reachability of the MC $c$, that is, the reachability probability to $\checkmark$ is $1$ from any state.
  Under this condition,
  Prop.~\ref{prop:gra_correspondence} ensures the initial algebra correspondence for $\Phi$ and $L \Phi R$,
  and thus also guarantees the lfp correspondence
  between
  $\mu \Phi$
  and
  $\mu (L \Phi R)$.
  It
  captures the correspondence between partial expected rewards and total expected rewards
  when $c$ is almost surely reachable to $\checkmark$.
\end{example}

\subsection{Example Beyond Initial Algebra Correspondence}
The following example does not precisely align with our framework,
  but nevertheless we show that a \gra still guarantees a correspondence of algebras via the equivalence induced by adjoints.
  Specifically, we consider
the correspondence
between an initial algebra and a particular algebra
that, while not initial in $\Alg{\Phi}$, becomes initial in $\AlgRst{\Phi}{\fixunit{\mathbb{C}}}$ under a \gra.
For omitted proofs, see 
\ifarxiv
Appendix~\ref{ap:omitted_mdp}.
\else
\cite[Appendix C]{arxiv}.
\fi

\begin{example}[total and partial expected rewards of multi-objective MDPs]
\label{eg:achievable_mdp}
We extend Example~\ref{eg:total_partial_mc} for MCs to an example for MDPs with countable states.
In this setting, the non-determinism introduced by schedulers on MDPs
renders the Bellman operator for the partial expected rewards
\emph{multi-objective}: This multi-objectivity comes
from the compositional computation of partial expected rewards requires to compute not only partial expected reward itself but also reachability probability (see Example~\ref{eg:total_partial_mc}),
which gives two objectives to optimize for schedulers.
Such multi-objective Bellman operators
have been studied in probabilistic verification, e.g. in~\cite{ChenFKSW13,AshokCKWW20}.
We instantiate our framework for a multi-objective Bellman operator for partial expected rewards and the Bellman operator for the total expected rewards~\cite{Puterman94}.

We fix the endofunctor $F \coloneqq \mathcal{D}(- + \{\checkmark\}) \times \mathbb{N}$ on $\Set$.
Consider
a coalgebra $c\colon S \to \Pfin FS$ in $\Set$,
where $S$ is a countable set of states and $c(s) \neq \emptyset$ for each $s \in S$.
Here, $\Pfin$ denotes the finite powerset functor,
and
the coalgebra $c$
represents an MDP.
For any complete lattice $L$, we write $L^\downarrow$ for
the complete lattice of
downward closed subsets
of $L$ with the subset order $\subseteq$.
For $x\in L$, we also write $x^{\downarrow}$ for the principal downward closed set $\{x'\in L\mid x'\sqsubseteq x\}$.

We define an endofunctor
$\Phi$ on $[S, (\interval{1} \times \interval{\infty})^\downarrow]$ as the multi-objective Bellman operator for the partial expected reward. Formally, the Bellman operator  is defined by
  \begin{align*}
\Phi(k)(s) \coloneqq
    \bigcup_{\substack{(d, n) \in c(s),
    k'\colon S \to \interval{1} \times \interval{\infty} \\ \text{ s.t.~}\forall s' \in S.~k'(s') \in k(s')}}
    &\Big(\big(p(d, n, k'), r(d, n, k')\big)\Big)^{\downarrow},
  \end{align*}
  where
    $p(d, n, k') \in \interval{1}$ and $r(d, n, k') \in \interval{\infty}$ are defined by
  \begin{align*}
    p(d, n, k') &\coloneqq d(\checkmark) + \sum_{s'\in S} d(s')\cdot \pi_1  \big(k'(s')\big), \\
    r(d, n, k') &\coloneqq
    n \cdot d(\checkmark)  \\
    &\phantom{\coloneqq}+ \sum_{s'\in S} d(s')\cdot (\pi_2  \big(k'(s')\big) + n \cdot \pi_1  \big(k'(s')\big)).
  \end{align*}

The domain $\mu \Phi$ of the initial algebra is the function mapping each $s \in S$ to the emptyset.
However,
our focus is not on $\mu \Phi$
but on a specific object $f \in [S, (\interval{1} \times \interval{\infty})^\downarrow]$
representing
the multi-objective solutions for the reachability objective and the partial expected reward objective.
This $f$ is defined by
  $f(s) \coloneqq \bigcup_{\text{scheduler }\sigma}(\Prb{\sigma}(s, \checkmark), \Erew{\sigma}(s, \checkmark))^\downarrow$,
  where $\Prb{\sigma}(s, \checkmark)$ and $\Erew{\sigma}(s, \checkmark)$
  denote the probability and the expected reward of reaching the target state $\checkmark$ from $s$ under the scheduler $\sigma$, respectively.
  We consider only deterministic schedulers (see 
  \ifarxiv
  Appendix~\ref{sec:defScheduler} 
  \else
  \cite[Appendix B]{arxiv}
  \fi
  for details).
  Note that the object $f$ forms a $\Phi$-algebra $(f, \Phi(f) \sqsubseteq f)$.

  We show that, under a \gra, $f$ coincides with the total expected rewards,
  which we encode via $L\Phi R$ as follows.
  Consider the adjunction $L\dashv R\colon [S, \interval{\infty}] \to [S, (\interval{1} \times \interval{\infty})^\downarrow]$ defined by
  (i) $L(k) \coloneqq \pi_2  \circ \sqcup \circ k$;
  where $\sqcup\colon (\interval{1} \times \interval{\infty})^\downarrow \to \interval{1} \times \interval{\infty}$ is given by the join in $\interval{1} \times \interval{\infty}$;
  and (ii) $R(k)\coloneqq (-)^\downarrow \circ \langle \Delta_1, \id\rangle \circ k$
  where $\Delta_1\colon \interval{\infty} \to \interval{1}$ is the constant function returning $1$.
  Then
  the endofunctor $L \Phi R$ on $[S, \interval{\infty}]$ is the Bellman operator
  given by
  \begin{align*}
    (L \Phi R)(k)(s) = \max_{(d, n)\in c(s)} n + \sum_{s'\in S} d(s')\cdot k(s').
  \end{align*}
  Likewise the case for MCs,
  the domain $\mu (L \Phi R)$ of the initial algebra represents the total expected reward,
  which is the supremum of the expected rewards
  from $s$ under all (deterministic) schedulers $\sigma$.

  The \gra of $(f, \Phi(f) \sqsubseteq f)$
  ensures that $f$ is initial (see 
  \ifarxiv
  Lem.~\ref{lem:f_init} 
  \else
  \cite[C.1]{arxiv}
  \fi
  for the proof).
  This leads to
  a result analogous to Prop.~\ref{prop:gra_correspondence}:
  Under the \gra of $(f, \Phi(f) \sqsubseteq f)$,
  $(f, \Phi f \sqsubseteq f)$ corresponds to $\iota_{L \Phi R}$ via the equivalence~\eqref{eq:alg_equiv}.

A classical reachability condition for ensuring the correspondence between
$f$ and $\mu (L \Phi R)$ is that of almost sure reachability, i.e., whichever scheduler is chosen, $\checkmark$ will be reached with probability $1$ (see e.g.~\cite{DBLP:books/daglib/0020348}).
Here, as the \gra of $(f, \Phi f \sqsubseteq f)$, 
we obtain a different condition 
for MDPs with countable states
as follows:
We have an optimal scheduler $\sigma$ such that $P_\sigma(s, \checkmark) = 1$ for each state $s \in S$.
This condition is weaker than the classical one under the existence of an optimal scheduler.
Indeed, our \gra allows us to have a scheduler that does not reach $\checkmark$ with probability $1$.
\end{example}

\subsection{Initial Algebra Correspondence, Fibrationally} \label{subsec:corresp_lifting}
Through previous examples, including our running example (Example~\ref{eg:total_partial_correct}),
we demonstrated how \gra{}s represent reachability conditions on coalgebras that parameterize $\Phi$,
and explored initial algebra correspondences under these conditions.
In each case, we derived the correspondences via equivalences established by Cor.~\ref{cor:fix} between posetal categories (categories defined by posets).

In this section, we use fibrations and Prop.~\ref{prop:fix_total} 
to move from a single coalgebra to the category of coalgebras,
and derive equivalences between non-posetal categories.
We introduce a fibration that allows us to
represent
prefixed points of $\Phi$ (or $\Phi_c$ in Example~\ref{eg:total_partial_correct})
as coalgebras above $c$ along the fibration.

For a poset $\Omega$,
  we define a category $\fibop{(\Set/\Omega)}$ with
  \begin{itemize}
    \item objects: pairs of a set $X$ and a function $f\colon X \to \Omega$.
    \item morphisms: $(f\colon X \to \Omega) \to (g\colon Y \to \Omega)$ are functions $h\colon X \to Y$
      satisfying $g(h(x)) \sqsubseteq f(x)$ for each $x \in X$.
  \end{itemize}
  The forgetful functor from $\fibop{(\Set/\Omega)}$ to $\Set$ forms a fibration,
  which
  is called the \emph{fibrewise opposite of lax domain fibrations}.
  We write $\fibop{\laxdom{\Omega}}$ for this fibration.
  We refer to~\cite[Def.~1.10.10]{DBLP:books/daglib/0023251} for more details on fibrewise opposite fibrations.

  Given a monotone function $a\colon \Omega \to \Omega'$ between posets,
  there is a functor $a_* \colon \fibop{(\Set/\Omega)} \to \fibop{(\Set/\Omega')}$
  defined by $a_*(f) = a \circ f$ for each object $f$ and $a_*(h) = h$ for each morphism $h$.

  The category $\Coalg{F}$ of $F$-coalgebras
  is defined as $\Alg{F^\op}^\op$.
  Explicitly, the objects are pairs of $X \in \mathbb{C}$ and $c\colon X \to FX \text{ in }\mathbb{C}$,
  and the morphisms $(X, c) \to (Y, d)$ are $f\colon X \to Y$ in $\mathbb{C}$ satisfying $Ff \circ c = d \circ f$.

  If $\dot{F}\colon \mathbb{E} \to \mathbb{E}$ is a lifting of $F\colon \mathbb{C} \to \mathbb{C}$ along a functor $p\colon \mathbb{E} \to \mathbb{C}$,
  we have
  the functor $\Coalg{p}\colon \Coalg{\dot{F}} \to \Coalg{F}$
  defined by $\Alg{p^\op}^\op$.

\begin{example} \label{eg:fibrational}
  Consider Example~\ref{eg:total_partial_mc}.
  Let $\Omega_1 \coloneqq \interval{1} \times \interval{\infty}$, $\Omega_2 \coloneqq \interval{\infty}$,
  and
  $F \coloneqq \mathcal{D}(- + \{\checkmark\}) \times \mathbb{N}$.
  Recall that $\Phi$ is implicitly parameterized by
  a coalgebra $c\colon S \to FS$ representing an MC,
  and
  $\Alg{\Phi}$ is the category of prefixed points of $\Phi$.
  We begin by interpreting $\Alg{\Phi}$ in a fibrewise manner.
  Define $\dot{F}\colon \fibop{(\Set/\Omega_1)} \to \fibop{(\Set/\Omega_1)}$ by
  \begin{displaymath}
   \dot{F}\big((X, k)\big) \coloneqq (FX, k') \text{ and } \dot{F}(h) \coloneqq Fh,
  \end{displaymath}
    where $k'\colon FX \to \Omega_1$ sends each $(d, n) \in FX$ to
    \begin{align*}
  &\left(d(\checkmark) + \Sigma_{x \in X} d(x) \cdot \pi_1 (k(x)), \right. \\
  &\left. \phantom{\big(}n \cdot d(\checkmark) + \Sigma_{x \in X}d(x) \cdot \big(\pi_2 (k(x)) + n \cdot \pi_1 (k(x))\big)\right).
    \end{align*}
  Since $\dot{F}$ is a lifting of $F$ along $\fibop{\laxdom{\Omega_1}}$,
  it induces a functor
  \begin{displaymath}
    \Coalg{\fibop{\laxdom{\Omega_1}}}\colon\Coalg{\dot{F}} \to \Coalg{F}.
  \end{displaymath}
  The fibre category $\Coalg{\dot{F}}_c$ is isomorphic to $\Alg{\Phi}$.

  A similar construction applies to $\Alg{L \Phi R}$.
  Consider
  the adjunction $\pi_2  \dashv \langle \Delta_1, \id \rangle\colon \interval{\infty} \to \interval{1} \times \interval{\infty}$.
  This immediately
  induces
  the adjunction $\langle \Delta_1, \id \rangle_* \dashv (\pi_2)_*\colon \fibop{(\Set/\Omega_1)} \to \fibop{(\Set/\Omega_2)}$,
  which is a lifting of $\id \dashv \id$ along $\fibop{\laxdom{\Omega_1}}$, $\fibop{\laxdom{\Omega_2}}$.
  Hence,
  we obtain a functor
  \begin{displaymath}
  \Coalg{\fibop{\laxdom{\Omega_2}}}\colon\Coalg{(\pi_2)_*\dot{F}\langle \Delta_1, \id \rangle_*} \to \Coalg{F},
  \end{displaymath}
  and
  the fibre category above $c$ is isomorphic to $\Alg{L \Phi R}$.

  By the dual version of Prop.~\ref{prop:fix_total} (see 
  \ifarxiv
  Appendix~\ref{ap:dual_fix_total}),
  \else
  \cite[Appendix G]{arxiv}),
  \fi
  we obtain
   the following diagram:
  \begin{displaymath}
    \xymatrix@R=3em@C=0.3em{
      \CoalgRst{\dot{F}}{\fixcounit{(\Set/\Omega_1)}} \ar[dr]_{\Coalg{\fibop{\laxdom{\Omega_1}}}}  \ar@/^1em/[rr]^-{\CoalgFn{(\pi_2)_*}{\dot{\alpha}}} &\simeq &\Coalg{(\pi_2)_*\dot{F}\langle \Delta_1, \id \rangle_*} \ar@/^1em/[ll]^-{\CoalgFn{\langle \Delta_1, \id \rangle_*}{\dot{\beta}}} \ar[dl]^{\Coalg{\fibop{\laxdom{\Omega_2}}}} \\
      &\Coalg{F}
    }
  \end{displaymath}
  Here,
  $\dot{\alpha} \coloneqq (\pi_2)_*\dot{F}\dot{\epsilon}^{-1}$ and
  $\dot{\beta} \coloneqq \dot{\epsilon} \dot{F}\dot{L}$
  where $\dot{\epsilon}$ is the counit of the adjunction $\langle \Delta_1, \id \rangle_* \dashv (\pi_2)_*$.
  Note that $\fixunit{(\Set/\Omega_2)}$ is equal to $\Set/\Omega_2$ by Prop.~\ref{prop:adj_full_faith}.
  The top horizontal line in the diagram represents the adjoint equivalence
    $\adj  {\CoalgFn{\langle \Delta_1, \id \rangle_*}{\dot{\beta}}} {\CoalgFn{(\pi_2)_*}{\dot{\alpha}}}{\dot{\eta}} {\dot{\epsilon}}$,
    which is a lifting of the identity adjunction $\id \dashv \id$
  along $\Coalg{\fibop{\laxdom{\Omega_2}}}$, $\Coalg{\fibop{\laxdom{\Omega_1}}}$.
  The equivalence between fibres above $c$ coincides with
  $\AlgRst{\Phi}{\fixunit{[S, \interval{1} \times \interval{\infty}]}} \simeq \Alg{L\Phi R}$,
  established by Cor.~\ref{cor:fix}.
In this context, for a coalgebra $c$,
  the \gra discussed in Example~\ref{eg:total_partial_mc} can be translated to:
  The final object in the fibre $\Coalg{\dot{F}}_c$
  lies in
$(\CoalgRst{\dot{F}}{\fixcounit{(\Set/\Omega_1)}})_c$.
\end{example}

For all examples in \S{}\ref{sec:init_alg_corresp}, there are similar extensions to the fibrational setting as in Example~\ref{eg:fibrational} 
(see 
\ifarxiv
Appendix~\ref{ap:fib_lift}
\else
\cite[Appendix D]{arxiv}
\fi
 for further details).
Although in Example~\ref{eg:achievable_mdp}
we restricted $c$ to coalgebras representing countable MDPs,
these equivalences can be extended to arbitrary coalgebras in $\Coalg{\Pfin F}$.

\newcommand{\Ord}{\mathbf{Ord}}
\newcommand{\colim}{\mathrm{colim}}
\section{Initial Algebra Correspondence by Chain Correspondence} \label{sec:init_alg_corresp_L}
As demonstrated in the previous section,
\gra{}s yield initial algebra correspondences for $\Phi$ and $L \Phi R$ via
adjoint equivalences.
In this section,
we investigate the initial algebra correspondence for $\Phi\colon \mathbb{C}\rightarrow \mathbb{C}$ and
$\Psi\colon \mathbb{D}\rightarrow \mathbb{D}$ without restricting to $\Psi \cong L\Phi R$.
We show that the initial algebra correspondence can also be guaranteed under \gra{}s, specifically when such conditions ensure a correspondence of chains used to compute initial algebras.
Concretely, we consider  the following setting:
\begin{displaymath}
  \xymatrix{
    \mathbb{C} \lloop{\Phi} \ar@/^1em/[rr]^{L} &\bot &\mathbb{D} \ar@/^1em/[ll]^{R} \rloop{\Psi}
  }
  \text{ with }\rho\colon L \Phi R \Rightarrow \Psi
\end{displaymath}
This natural transformation $\rho$
is referred to as a \emph{step} in coalgebraic trace semantics~\cite{DBLP:journals/logcom/RotJL21}.
The natural transformation $\rho$ induces a functor $\AlgFn{\id}{\rho}\colon \Alg{\Psi} \to \Alg{L \Phi R}$ by Def.~\ref{def:alg_functor}.
We assume that both $\Phi$ and
$\Psi$ have initial algebras,  and we write $(\mu \Phi, \iota_\Phi)$ and
$(\mu \Psi, \iota_\Psi)$ for them, respectively.

\begin{definition}
The \emph{initial algebra correspondence for $\Phi$ and $\Psi$}
holds if
the following condition $(\ast)$ is satisfied.
\begin{align*}
  (\ast): \quad &\iota_\Phi \text{ is initial in }\AlgRst{\Phi}{\fixunit{\mathbb{C}}},\\
  &\text{and }\AlgFn{\id}{\rho}(\iota_\Psi) \text{ is initial in } \AlgRst{L\Phi R}{\fixcounit{\mathbb{D}}}.
\end{align*}
\end{definition}
\begin{proposition}
The initial algebra correspondence for $\Phi$ and $\Psi$ holds if and only if
$\iota_\Phi$ and $\AlgFn{\id}{\rho}(\iota_\Psi)$ lie in $\AlgRst{\Phi}{\fixunit{\mathbb{C}}}$ and $\AlgRst{L\Phi R}{\fixcounit{\mathbb{D}}}$, respectively, and they correspond to each other by
the adjoints $\AlgFn{L}{\alpha}$ and $\AlgFn{R}{\beta}$ given in Cor.~\ref{cor:fix}.
\qed
\end{proposition}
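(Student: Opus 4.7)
The plan is to mirror the proof of the preceding analogous proposition, while inserting one additional step needed because $\AlgFn{\id}{\rho}(\iota_\Psi)$ is not in general initial in $\Alg{L\Phi R}$. The guiding strategy is to combine the fullness result Prop.~\ref{prop:alg_fullsub} with the fact that the adjoint equivalence of Cor.~\ref{cor:fix} between $\AlgRst{\Phi}{\fixunit{\mathbb{C}}}$ and $\AlgRst{L\Phi R}{\fixcounit{\mathbb{D}}}$ transports initiality in both directions.

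For the only-if direction, the membership claims are immediate: being initial in a subcategory certainly entails lying in it. For the correspondence, I would invoke Cor.~\ref{cor:fix}: the functor $\AlgFn{L}{\alpha}$, being the left adjoint part of an adjoint equivalence, preserves initial objects, so $\AlgFn{L}{\alpha}(\iota_\Phi)$ is initial in $\AlgRst{L\Phi R}{\fixcounit{\mathbb{D}}}$ and hence isomorphic to $\AlgFn{\id}{\rho}(\iota_\Psi)$ by uniqueness of initial objects; the symmetric argument for $\AlgFn{R}{\beta}$ completes the correspondence.

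For the if direction, since $\iota_\Phi$ is initial in $\Alg{\Phi}$ by definition and lies in the full subcategory $\AlgRst{\Phi}{\fixunit{\mathbb{C}}}$ by hypothesis, Prop.~\ref{prop:alg_fullsub} gives that $\iota_\Phi$ is initial in $\AlgRst{\Phi}{\fixunit{\mathbb{C}}}$. The main obstacle — and the genuine point of departure from the earlier proposition — is establishing initiality of $\AlgFn{\id}{\rho}(\iota_\Psi)$ in $\AlgRst{L\Phi R}{\fixcounit{\mathbb{D}}}$, because $\AlgFn{\id}{\rho}(\iota_\Psi)$ need not be initial in the ambient category $\Alg{L\Phi R}$, so the fullness trick alone does not close the argument. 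Here the correspondence hypothesis does the work: Cor.~\ref{cor:fix} again yields that $\AlgFn{L}{\alpha}(\iota_\Phi)$ is initial in $\AlgRst{L\Phi R}{\fixcounit{\mathbb{D}}}$, and the hypothesized isomorphism $\AlgFn{L}{\alpha}(\iota_\Phi) \cong \AlgFn{\id}{\rho}(\iota_\Psi)$ transports this initiality, yielding the second clause of $(\ast)$.

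The only subtle point worth fixing precisely is the meaning of ``correspond to each other by the adjoints'' in the if direction; I would interpret it as the existence of an isomorphism $\AlgFn{L}{\alpha}(\iota_\Phi) \cong \AlgFn{\id}{\rho}(\iota_\Psi)$ in $\AlgRst{L\Phi R}{\fixcounit{\mathbb{D}}}$ (equivalently, on the other side via $\AlgFn{R}{\beta}$), which is exactly the piece of data needed to port initiality across the equivalence.
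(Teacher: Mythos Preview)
Your proof is correct and follows exactly the pattern the paper intends: the paper omits the argument entirely (the statement carries only a \qed), relying implicitly on the proof of the analogous proposition for $\Phi$ and $L\Phi R$ together with the adjoint equivalence of Cor.~\ref{cor:fix}. You have correctly identified the one new ingredient---that $\AlgFn{\id}{\rho}(\iota_\Psi)$ is not a priori initial in $\Alg{L\Phi R}$, so the correspondence hypothesis is genuinely needed in the if direction to transport initiality across the equivalence---and your interpretation of ``correspond to each other'' as an isomorphism $\AlgFn{L}{\alpha}(\iota_\Phi)\cong\AlgFn{\id}{\rho}(\iota_\Psi)$ is the intended one.
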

If $\rho$ is an isomorphism $L\Phi R \cong \Psi$, then
the correspondence reduces to the initial algebra correspondence for $\Phi$ and $L \Phi R$, already established as $(\star)$ in \S\ref{sec:init_alg_corresp}.
In what follows,
we focus on the cases where $\rho$ is not an isomorphism
though some of the components may be.

We aim to provide a sufficient condition ensuring this initial algebra correspondence,
particularly in the case where
$L$ is full and faithful.
Under this condition,
Prop.~\ref{prop:adj_full_faith} guarantees
$\iota_\Phi$ is initial in $\AlgRst{\Phi}{\fixunit{\mathbb{C}}}$.
Thus, we obtain the following result.
\begin{lemma} \label{lem:gra_psi}
  Assume that $L$ is full and faithful.
  The initial algebra correspondence for $\Phi$ and $\Psi$ holds if
  $\mu \Psi \in \fixcounit{\mathbb{D}}$ and
  $\AlgFn{\id}{\rho}(\iota_\Psi)$ is initial in $\Alg{L \Phi R}$.
  \qed
\end{lemma}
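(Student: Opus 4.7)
The plan is to verify the two bullet points of condition $(\ast)$ separately, exploiting that $\AlgRst{\Phi}{\fixunit{\mathbb{C}}}$ and $\AlgRst{L\Phi R}{\fixcounit{\mathbb{D}}}$ are full subcategories of $\Alg{\Phi}$ and $\Alg{L\Phi R}$, respectively (Prop.~\ref{prop:alg_fullsub}), so that initiality in the ambient category together with membership in the subcategory implies initiality in the subcategory.

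\textbf{First condition of $(\ast)$.} The plan here is to use full-faithfulness of $L$ to eliminate the restriction on $\mathbb{C}$ entirely. By Prop.~\ref{prop:adj_full_faith}, $L$ being full and faithful is equivalent to $\eta\colon \id \Rightarrow RL$ being a natural isomorphism. Hence every object of $\mathbb{C}$ lies in $\fixunit{\mathbb{C}}$, so $\fixunit{\mathbb{C}} = \mathbb{C}$ and consequently $\AlgRst{\Phi}{\fixunit{\mathbb{C}}} = \Alg{\Phi}$. Since $\iota_\Phi$ is by assumption the initial $\Phi$-algebra, this immediately yields initiality in $\AlgRst{\Phi}{\fixunit{\mathbb{C}}}$.

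\textbf{Second condition of $(\ast)$.} Here I would use the two remaining hypotheses. The underlying object of $\AlgFn{\id}{\rho}(\iota_\Psi)$ is $\mu\Psi$ (the functor $\AlgFn{\id}{\rho}$ is the identity on underlying objects; it just precomposes the structure map with $\rho_{\mu\Psi}$), and by assumption $\mu \Psi \in \fixcounit{\mathbb{D}}$. Hence $\AlgFn{\id}{\rho}(\iota_\Psi)$ belongs to the full subcategory $\AlgRst{L\Phi R}{\fixcounit{\mathbb{D}}}$ of $\Alg{L\Phi R}$. Since it is initial in $\Alg{L\Phi R}$ by hypothesis and fullness of the inclusion $\AlgRst{L\Phi R}{\fixcounit{\mathbb{D}}} \hookrightarrow \Alg{L\Phi R}$ means that hom-sets out of $\AlgFn{\id}{\rho}(\iota_\Psi)$ are the same in both categories, $\AlgFn{\id}{\rho}(\iota_\Psi)$ remains initial in $\AlgRst{L\Phi R}{\fixcounit{\mathbb{D}}}$.

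Both bullets of $(\ast)$ are therefore verified, proving the initial algebra correspondence for $\Phi$ and $\Psi$. I do not foresee any genuine obstacle: the proof is essentially a bookkeeping exercise that combines Prop.~\ref{prop:adj_full_faith} (to trivialize the $\fixunit{\mathbb{C}}$ side) with Prop.~\ref{prop:alg_fullsub} (to transport initiality into a full subcategory). The only subtlety worth flagging explicitly is that $\AlgFn{\id}{\rho}$ does not change underlying objects, so that the hypothesis $\mu\Psi \in \fixcounit{\mathbb{D}}$ is exactly what is needed to place $\AlgFn{\id}{\rho}(\iota_\Psi)$ inside $\AlgRst{L\Phi R}{\fixcounit{\mathbb{D}}}$.
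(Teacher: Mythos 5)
Your proof is correct and follows essentially the same route the paper intends: the sentence preceding the lemma already notes that full-faithfulness of $L$ (via Prop.~\ref{prop:adj_full_faith}) trivializes the first condition of $(\ast)$, and the second condition is exactly the full-subcategory bookkeeping you carry out. Your explicit observation that $\AlgFn{\id}{\rho}$ preserves the underlying object, so that $\mu\Psi \in \fixcounit{\mathbb{D}}$ places $\AlgFn{\id}{\rho}(\iota_\Psi)$ in $\AlgRst{L\Phi R}{\fixcounit{\mathbb{D}}}$, is precisely the point the paper leaves implicit.
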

To establish the initial algebra correspondence for $\Phi$ and $\Psi$ when $L$ is full and faithful,
we assume $\mu \Psi \in \fixcounit{\mathbb{D}}$ as a reachability condition
and
aim to ensure that  $\AlgFn{\id}{\rho}(\iota_\Psi)$ is initial in $\Alg{L \Phi R}$ by constructing an isomorphism between initial chains.
\begin{definition}
  We say
the \emph{\gra of $\iota_\Psi$} is satisfied if
$\iota_\Psi \in \AlgRst{\Psi}{\fixcounit{\mathbb{D}}}$.
Equivalently, $\mu \Psi \in \fixcounit{\mathbb{D}}$.
\end{definition}

\begin{definition}[initial chain~\cite{Adamek1974}] \label{def:init_chain}
  We write  $\Ord$ for the category of ordinal numbers and their usual ordering.
  Let
  $\mathbb{E}$ be a category with all colimits of chains.
  For a functor $\Xi\colon \mathbb{E} \to \mathbb{E}$,
  the \emph{initial chain $W_\Xi$ of $\Xi$} is the functor $W_\Xi \colon \Ord \to \mathbb{E}$ defined
  (uniquely up to isomorphism) as follows:
  \begin{itemize}
    \item
  $W_\Xi(0) \coloneqq 0$ where $0$ is an initial object in $\mathbb{E}$,
  \item $W_\Xi(i+1) \coloneqq \Xi(W_\Xi(i))$ for each ordinal $i$. 
  \item
  $W_\Xi(i) \coloneqq \colim_{j < i}W_\Xi(j)$ for each limit ordinal $i$.
  \end{itemize}
  Morphisms are given by
  the unique map $W_\Xi(0, 1)$,
  and for successor steps $W_\Xi(j+1, i+1) = \Xi(W_\Xi(j, i))$.
  For limit ordinals $i$,
  the morphisms $W_\Xi(j, i) \ (j < i)$ form the colimiting cocone.
  For ordinals $i$,
  we write $\Xi^i 0$ for the object $W_\Xi(i)$.
  We say that the initial chain \emph{converges in $\lambda$ steps}
  if $W_\Xi(\lambda, \lambda+1)$ is an isomorphism.
\end{definition}

\begin{proposition}[\cite{Adamek1974}]
  If the initial chain converges in $\lambda$ steps, then $W_\Xi(\lambda, \lambda+1)^{-1}$ is an initial algebra.
  \qed
\end{proposition}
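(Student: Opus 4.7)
The plan is to verify that $W_\Xi(\lambda, \lambda+1)^{-1}\colon \Xi(W_\Xi(\lambda)) \to W_\Xi(\lambda)$ satisfies the universal property of an initial $\Xi$-algebra. Fix an arbitrary $\Xi$-algebra $(X, a)$; I need to produce a unique algebra morphism $f\colon W_\Xi(\lambda) \to X$. The strategy is to build a cocone on the initial chain using the algebra $(X,a)$, extract $f_\lambda$, and check both that it is an algebra morphism and that it is unique with that property.

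First I would define, by transfinite recursion, a family $(f_i\colon W_\Xi(i) \to X)_{i \in \Ord}$: set $f_0$ to be the unique map from the initial object $0$; set $f_{i+1} \coloneqq a \circ \Xi(f_i)$; and for limit ordinals $i$, let $f_i$ be the mediating morphism out of the colimit $W_\Xi(i) = \mathrm{colim}_{j<i} W_\Xi(j)$ induced by the family $(f_j)_{j<i}$. I would verify in parallel, by transfinite induction, that $f_i = f_{i'} \circ W_\Xi(i, i')$ for $i \leq i'$, so that the $f_i$'s form a compatible cocone; naturality of $\Xi$ applied to the inductive identity $f_i = f_{i'} \circ W_\Xi(i,i')$ handles successor stages, and the limit case is immediate from the colimit's universal property.

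Next I would show that $f_\lambda$ is an algebra morphism. From the cocone compatibility at step $\lambda$, $f_\lambda = f_{\lambda+1} \circ W_\Xi(\lambda, \lambda+1)$, hence
\[
f_\lambda \circ W_\Xi(\lambda, \lambda+1)^{-1} = f_{\lambda+1} = a \circ \Xi(f_\lambda),
\]
which is precisely the algebra-morphism condition with respect to the structure map $W_\Xi(\lambda, \lambda+1)^{-1}$.

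For uniqueness, suppose $g\colon (W_\Xi(\lambda), W_\Xi(\lambda, \lambda+1)^{-1}) \to (X,a)$ is another algebra morphism. Define $g_i \coloneqq g \circ W_\Xi(i, \lambda)$ for $i \leq \lambda$ and prove $g_i = f_i$ by transfinite induction. The base and limit cases follow from initiality of $0$ and the universal property of the colimits used to build $W_\Xi$. For a successor $i+1 \leq \lambda$, I would use the factorization $W_\Xi(i+1,\lambda) = W_\Xi(\lambda,\lambda+1)^{-1} \circ W_\Xi(i+1, \lambda+1) = W_\Xi(\lambda,\lambda+1)^{-1} \circ \Xi(W_\Xi(i, \lambda))$, coming from the recursive definition of $W_\Xi$ at successor stages together with the convergence isomorphism, and then invoke the algebra-morphism equation for $g$ to get $g_{i+1} = a \circ \Xi(g_i)$; the induction hypothesis then gives $g_{i+1} = a \circ \Xi(f_i) = f_{i+1}$. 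Taking $i+1 = \lambda$ (or using the fact that $g_\lambda = g$) yields $g = f_\lambda$.

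The main obstacle I expect is the bookkeeping in the transfinite induction, particularly the identity $W_\Xi(i+1,\lambda) = W_\Xi(\lambda,\lambda+1)^{-1} \circ \Xi(W_\Xi(i,\lambda))$, which ties the structural maps of the chain at the converging ordinal to the algebra structure on $W_\Xi(\lambda)$; once this link is established, both existence and uniqueness reduce to straightforward transfinite inductions, with limit steps handled uniformly by the defining colimit property of the initial chain.
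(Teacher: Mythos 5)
Your proof is correct and follows exactly the classical argument of the cited reference \cite{Adamek1974}; the paper itself offers no proof, deferring entirely to that citation. The three ingredients you identify---the transfinitely defined cocone $(f_i)$ with $f_{i+1}=a\circ\Xi(f_i)$, the computation $f_\lambda\circ W_\Xi(\lambda,\lambda+1)^{-1}=f_{\lambda+1}=a\circ\Xi(f_\lambda)$ showing $f_\lambda$ is an algebra morphism, and the uniqueness induction on $g_i\coloneqq g\circ W_\Xi(i,\lambda)$ using $W_\Xi(i+1,\lambda+1)=\Xi(W_\Xi(i,\lambda))$ together with the algebra-morphism equation for $g$---are precisely the standard steps, and the bookkeeping you flag goes through as you describe.
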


We extend the construction of initial chains to a functor.
It enables us to translate a natural transformation between functors to a natural transformation between their initial chains.
For omitted proofs, see 
\ifarxiv
Appendix~\ref{ap:omitted_L}.
\else
\cite[Appendix E]{arxiv}.
\fi
\begin{proposition} \label{prop:init_alg_isom}
In the setting of Def.~\ref{def:init_chain},
we define a functor $W_{\_}\colon [\mathbb{E}, \mathbb{E}] \to [\Ord, \mathbb{E}]$ mapping
$\Xi$ to $W_\Xi$ defined in Def.~\ref{def:init_chain} and $\rho\colon \Xi' \Rightarrow \Xi$ to $W_\rho \colon W_{\Xi'} \Rightarrow W_\Xi$ defined by
\begin{align*}
  (W_\rho)_0 &\coloneqq \id_0, \\
  (W_\rho)_{i+1} &\coloneqq \rho_{\Xi^i 0}\circ \Xi' ((W_\rho)_i) \text{ for each ordinal $i$}, \\
  (W_\rho)_i &\coloneqq \colim_{j < i} (W_\rho)_j \text{ for each limit ordinal $i$}.
\end{align*}
Then for a natural transformation $\rho\colon \Xi' \Rightarrow \Xi$,
$W_\rho$ is a natural isomorphism iff $\rho_{\Xi^i 0}$ is an isomorphism for each ordinal $i$.
\qed
\end{proposition}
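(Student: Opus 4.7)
The plan is to prove the biconditional by transfinite induction on ordinals, after briefly noting that the functoriality of $W_{\_}$ and the naturality of $W_\rho$ follow from standard transfinite recursion together with the universal property of colimits. I will treat those verifications as routine and concentrate on the iff, which is the content of the statement.

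For the direction $(\Leftarrow)$, assume $\rho_{\Xi^i 0}$ is an isomorphism for every ordinal $i$. I show by transfinite induction that each component $(W_\rho)_i$ is an isomorphism. The base case is immediate since $(W_\rho)_0 = \id_0$. For the successor step, suppose $(W_\rho)_i$ is an isomorphism; then $\Xi'((W_\rho)_i)$ is an isomorphism because functors preserve isomorphisms, and $\rho_{\Xi^i 0}$ is an isomorphism by hypothesis, so their composite $(W_\rho)_{i+1} = \rho_{\Xi^i 0} \circ \Xi'((W_\rho)_i)$ is an isomorphism. For the limit step at ordinal $i$, the components $(W_\rho)_j$ for $j < i$ form an isomorphism between the chains $W_{\Xi'} \restriction i$ and $W_\Xi \restriction i$; the induced map between their colimits, which is exactly $(W_\rho)_i$ by construction, is therefore also an isomorphism. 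This last point is the step that deserves a moment of care and is the main obstacle of the proof: it uses that if a natural transformation between two diagrams is a pointwise isomorphism, then the induced map between the colimits (when they exist) is an isomorphism, by constructing the inverse via the universal property applied to the inverse cocone.

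For the direction $(\Rightarrow)$, assume $W_\rho$ is a natural isomorphism. Then in particular $(W_\rho)_i$ and $(W_\rho)_{i+1}$ are isomorphisms for every ordinal $i$. Since $\Xi'$ preserves isomorphisms, $\Xi'((W_\rho)_i)$ is an isomorphism, and then from the defining equation
\begin{equation*}
  \rho_{\Xi^i 0} = (W_\rho)_{i+1} \circ \bigl(\Xi'((W_\rho)_i)\bigr)^{-1}
\end{equation*}
we conclude that $\rho_{\Xi^i 0}$ is an isomorphism as a composite of isomorphisms.

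Combining both directions gives the desired equivalence. The argument does not use any specific feature of $\mathbb{E}$ beyond the existence of colimits of chains assumed in Definition~\ref{def:init_chain}; the only nontrivial categorical fact invoked is that colimits of chains of isomorphisms are isomorphisms, which is where the universal property of the colimit does the essential work.
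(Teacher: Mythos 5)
Your proof is correct and takes essentially the same route as the paper's: the $(\Rightarrow)$ direction uses the identical identity $\rho_{\Xi^i 0} = (W_\rho)_{i+1} \circ \bigl(\Xi'((W_\rho)_i)\bigr)^{-1}$, and your $(\Leftarrow)$ transfinite induction (base, successor as a composite of isomorphisms, limit via the pointwise-iso-of-diagrams-induces-iso-of-colimits fact) is exactly what the paper compresses into the phrase ``by definition.'' The only divergence is one of emphasis: the paper spends the bulk of its written proof on the naturality of $W_\rho$ (a double transfinite induction on $i$ and on $j<i$ with a case split on whether $j$ is zero, a successor, or a limit), which you defer as a routine consequence of transfinite recursion and the universal property of colimits; that deferral is defensible, but be aware it is where most of the bookkeeping actually lives.
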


\begin{lemma} \label{lem:init_alg_corresp_L}
  Assume that $\mathbb{D}$ has all colimits of chains.
  Then
  $\AlgFn{\id}{\rho}(\iota_\Psi)$ is initial in $\Alg{L\Phi R}$
 if the following conditions are satisfied:
  \begin{enumerate}
    \item \label{item:rho_isom} $\rho_{\Psi^i 0}$ is an isomorphism for each ordinal $i$.
    \item \label{item:psi_conv} The initial chain of $\Psi$ converges.
    \qed
  \end{enumerate}
\end{lemma}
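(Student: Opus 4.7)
The plan is to reduce the statement to a comparison of initial chains via Proposition~\ref{prop:init_alg_isom}, and then identify the two initial algebras through the component of the induced natural isomorphism at the convergence ordinal.

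First, I would note that since $\mathbb{D}$ has all colimits of chains (hence an initial object as the colimit of the empty chain), both $W_\Psi$ and $W_{L\Phi R}$ are well-defined functors $\Ord \to \mathbb{D}$. Assumption (\ref{item:rho_isom}) says that $\rho_{\Psi^i 0}$ is an isomorphism for every ordinal $i$, so by Proposition~\ref{prop:init_alg_isom} the induced map $W_\rho \colon W_{L\Phi R} \Rightarrow W_\Psi$ is a natural isomorphism. Let $\lambda$ be the ordinal at which $W_\Psi$ converges (given by assumption (\ref{item:psi_conv})), so that $\iota_\Psi = W_\Psi(\lambda, \lambda+1)^{-1}$. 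Writing $u \coloneqq (W_\rho)_\lambda$ and $u' \coloneqq (W_\rho)_{\lambda+1}$, both are isomorphisms, and the naturality square
\[
  u' \circ W_{L\Phi R}(\lambda, \lambda+1) \;=\; W_\Psi(\lambda, \lambda+1) \circ u
\]
combined with invertibility of $W_\Psi(\lambda, \lambda+1)$, $u$, $u'$ immediately yields that $W_{L\Phi R}(\lambda, \lambda+1)$ is an isomorphism. Hence $W_{L\Phi R}$ converges at $\lambda$ as well, and by the proposition quoted after Definition~\ref{def:init_chain}, $\iota_{L\Phi R} \coloneqq W_{L\Phi R}(\lambda, \lambda+1)^{-1}$ is an initial $L\Phi R$-algebra.

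Next, I would show that $u\colon ((L\Phi R)^\lambda 0,\, \iota_{L\Phi R}) \to \AlgFn{\id}{\rho}(\iota_\Psi) = (\mu\Psi,\, \iota_\Psi \circ \rho_{\mu\Psi})$ is an algebra isomorphism. The successor clause of Proposition~\ref{prop:init_alg_isom} gives $u' = \rho_{\Psi^\lambda 0} \circ L\Phi R(u) = \rho_{\mu\Psi} \circ L\Phi R(u)$. Substituting this into the naturality square above and inverting $W_\Psi(\lambda, \lambda+1)$ on the left, one obtains
\[
  u \circ \iota_{L\Phi R} \;=\; W_\Psi(\lambda, \lambda+1)^{-1} \circ \rho_{\mu\Psi} \circ L\Phi R(u) \;=\; \iota_\Psi \circ \rho_{\mu\Psi} \circ L\Phi R(u),
\]
which is precisely the algebra homomorphism law. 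Since $u$ is an isomorphism in $\mathbb{D}$, it is an iso in $\Alg{L\Phi R}$, and initiality transports across isomorphisms, so $\AlgFn{\id}{\rho}(\iota_\Psi)$ is initial.

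The main (minor) obstacle is the bookkeeping in the last paragraph: keeping track of the direction of $W_\Xi(\lambda, \lambda+1)$ versus its inverse $\iota_\Xi$, and correctly reading off the algebra-homomorphism equation from the naturality square at the successor step. Everything else is a direct consequence of Proposition~\ref{prop:init_alg_isom}, the stated convergence hypothesis, and the fact that full subcategories (here, the implicit uniqueness of initial objects up to isomorphism) preserve initiality.
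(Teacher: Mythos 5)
Your proposal is correct and follows essentially the same route as the paper's proof: apply Proposition~\ref{prop:init_alg_isom} to get that $W_\rho$ is a natural isomorphism, deduce convergence of the initial chain of $L\Phi R$ at the same ordinal $\lambda$, and then use the naturality square at $\lambda \to \lambda+1$ together with the successor clause $(W_\rho)_{\lambda+1} = \rho_{\Psi^\lambda 0} \circ L\Phi R((W_\rho)_\lambda)$ to exhibit $(W_\rho)_\lambda$ as an algebra isomorphism onto $\AlgFn{\id}{\rho}(\iota_\Psi)$. The computation in your second paragraph is exactly the chain of equalities in the paper's proof, read in the opposite direction.
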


Finally, the proposition below provides a sufficient condition for establishing the initial algebra correspondence.
\begin{proposition}\label{prop:init_alg_corresp_L}
  Assume that $\mathbb{D}$ has all colimits of chains and
  $L$ is full and faithful.
  The initial algebra correspondence for $\Phi$ and $\Psi$ holds
  if the following conditions are satisfied:
  \begin{enumerate}
    \item \label{item:rho_bar} $\overline{\rho} \colon \Phi R \Rightarrow R\Psi$ (resp.~$\rho_L \circ L \Phi \eta \colon L \Phi \Rightarrow \Psi L$) is an isomorphism.
    \item \label{item:converge} The initial chain of $\Psi$ converges.
    \item \label{item:init_chain_fix} The initial chain of $\Psi$ lies in $\fixcounit{\mathbb{D}}$.
  \end{enumerate}
  Furthermore,
  if the subcategory $\fixcounit{\mathbb{D}}\hookrightarrow \mathbb{D}$ is downward closed---that is, $Y \in \fixcounit{\mathbb{D}}$ and $X \to Y$ in $\mathbb{D}$ imply $X \in \fixcounit{\mathbb{D}}$---then \ref{item:init_chain_fix}) is implied by the \gra of $\iota_\Psi$.
  \qed
\end{proposition}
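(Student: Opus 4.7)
The plan is to apply Lem.~\ref{lem:gra_psi} and reduce its two requirements to the hypotheses at hand. Since $L$ is full and faithful, Lem.~\ref{lem:gra_psi} tells us it suffices to establish (a)~$\mu \Psi \in \fixcounit{\mathbb{D}}$ and (b)~initiality of $\AlgFn{\id}{\rho}(\iota_\Psi)$ in $\Alg{L \Phi R}$. For~(a), condition~\ref{item:converge} asserts that the initial chain of $\Psi$ converges in some $\lambda$ steps, so $\mu\Psi$ is (up to isomorphism) the chain object $\Psi^\lambda 0$, and condition~\ref{item:init_chain_fix} immediately puts it in $\fixcounit{\mathbb{D}}$.

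For~(b) I would invoke Lem.~\ref{lem:init_alg_corresp_L}. Its convergence hypothesis is exactly condition~\ref{item:converge}, so the real work is verifying that $\rho_{\Psi^i 0}$ is an isomorphism for every ordinal $i$. Full-faithfulness of $L$ yields, via Prop.~\ref{prop:adj_full_faith}, that $\eta$ is a natural isomorphism. From the adjoint-transpose identity $\overline{\rho} = R\rho \circ \eta\,\Phi R$ together with condition~\ref{item:rho_bar}, it then follows that $R\rho$ is a natural isomorphism, and hence so is $LR\rho$. Consider the naturality square of $\epsilon$ at $\rho_Y$:
\begin{displaymath}
  \xymatrix@C=3em{
    LRL\Phi R Y \ar[r]^{LR\rho_Y} \ar[d]_{\epsilon_{L\Phi R Y}} & LR\Psi Y \ar[d]^{\epsilon_{\Psi Y}} \\
    L\Phi R Y \ar[r]_{\rho_Y} & \Psi Y
  }
\end{displaymath}
The left vertical morphism is always an isomorphism: the triangle identity $\epsilon L \circ L\eta = \id_L$ together with invertibility of $\eta$ gives $\epsilon_{LX} = (L\eta_X)^{-1}$ for every $X$. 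Taking $Y = \Psi^i 0$, the target $\Psi Y = \Psi^{i+1} 0$ lies in the initial chain, so condition~\ref{item:init_chain_fix} makes the right vertical $\epsilon_{\Psi Y}$ an isomorphism as well. Commutativity of the square combined with invertibility of $LR\rho_Y$ then forces $\rho_{\Psi^i 0}$ to be an isomorphism, discharging the remaining hypothesis of Lem.~\ref{lem:init_alg_corresp_L} and establishing~(b).

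For the final clause, assume $\fixcounit{\mathbb{D}}\hookrightarrow\mathbb{D}$ is downward closed and the \gra{} of $\iota_\Psi$ holds, i.e., $\mu\Psi \in \fixcounit{\mathbb{D}}$. For every ordinal $i$ the initial-chain construction supplies a canonical morphism $\Psi^i 0 \to \mu\Psi$ (as a component of the colimiting cocone, becoming an isomorphism once $i$ exceeds the convergence ordinal), so downward closure gives $\Psi^i 0 \in \fixcounit{\mathbb{D}}$, which is precisely condition~\ref{item:init_chain_fix}.

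I expect the main obstacle to be exactly the $\epsilon$-square argument that promotes invertibility of $\overline{\rho}$ back to invertibility of $\rho$ on chain objects: the hypothesis provides invertibility only of the transpose, and condition~\ref{item:init_chain_fix} only constrains the chain pointwise, so one has to combine full-faithfulness of $L$, the triangle identity, and the chain hypothesis in a single diagram chase. Everything else is direct bookkeeping on top of Lem.~\ref{lem:gra_psi}, Lem.~\ref{lem:init_alg_corresp_L}, and Prop.~\ref{prop:adj_full_faith}.
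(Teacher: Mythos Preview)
Your proposal is correct and follows essentially the same route as the paper: reduce to Lem.~\ref{lem:gra_psi} and Lem.~\ref{lem:init_alg_corresp_L}, and use conditions~\ref{item:rho_bar}) and~\ref{item:init_chain_fix}) to make $\rho_{\Psi^i 0}$ invertible. The only differences are cosmetic: the paper invokes the transpose identity $\rho_Y = \epsilon_{\Psi Y}\circ L\overline{\rho}_Y$ in one line where you unpack the same fact via the naturality square of $\epsilon$, and the paper also spells out the parallel identity for the ``resp.'' alternative $\rho_L\circ L\Phi\eta$ of condition~\ref{item:rho_bar}), which you omit.
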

The proof proceeds by showing the two conditions in Lem.~\ref{lem:init_alg_corresp_L} (see 
\ifarxiv
Appendix~\ref{ap:omitted_L}).
\else
\cite[Appendix E]{arxiv}).
\fi
Once these conditions are satisfied,
$\rho_{\mu \Psi}$ becomes an isomorphism.
Thus, under the three conditions in Prop.~\ref{prop:init_alg_corresp_L},
$\AlgFn{\id}{\rho}(\iota_\Psi)$ is the morphism $\iota_\Psi$ composed with the isomorphism $\rho_{\mu \Psi}$.

The following example demonstrates the initial algebra correspondence under the \gra of $\iota_\Psi$.

\newcommand{\Acc}{\mathrm{Acc}}
\newcommand{\Ninf}{\mathbb{N}^\infty}
\begin{example}[unambiguous FA]
  \label{eg:unambiguousFA}
  \emph{Unambiguous finite automata (UFA)} (e.g.~\cite{Colcombet12,Colcombet15}) are non-deterministic finite automata such that there is at most one accepting run for each word.
  UFA has been applied for temporal verification (e.g.~\cite{BaierK00023}), focusing on its efficiency.
  For instance, some fundamental problems on automata such as equivalence, containment, and universality are all solvable in polynomial time for UFAs~\cite{StearnsH85,Seidl90}, while these are PSPACE-complete for NFAs.

  Consider an NFA $\langle \delta, \Acc\rangle\colon S \to \mathcal{P}(S)^A \times \mathbf{2}$ with finite set $S$ of states and finite set $A$ of alphabets.
  We instantiate our framework for
  two semantics of the NFA $c$: One is the recognized language
  and the other returns, for each word, its number of accepting paths.

  We define an endofunctor $\Phi$ on $[S \times A^*, \mathbf{2}]$ by
    \begin{displaymath}
      \Phi(k)(s, \vec{a}) \coloneqq
      \begin{cases}
        \Acc(s) &\text{if }\vec{a} = \epsilon, \\
        \bigvee_{s' \in \delta(s)(a)}k(s', \vec{a'}) &\text{if  }\vec{a} = a\vec{a'}.
      \end{cases}
    \end{displaymath}
  The domain $\mu \Phi$ of the initial algebra of $\Phi$
  characterizes the recognized language of the NFA.
  Specifically, $\mu \Phi(s, \vec{a}) = \top$ if the word $\vec{a}$ is accepted from the state $s$,
  and $\bot$ otherwise.

  Let $\Ninf$
be the complete lattice of extended natural numbers with the standard order,
and let $i\colon \mathbf{2} \to \Ninf$ be the function sending $\top$ to $1$ and $\bot$ to $0$.
  We define an endofunctor $\Psi$ on $[S \times A^*, \Ninf]$ by
  \begin{displaymath}
    \Psi(k)(s, \vec{a}) \coloneqq \begin{cases}
   i\big(\Acc(s)\big) &\text{if } \vec{a} = \epsilon, \\
    \Sigma_{s' \in \delta(s)(a)}k(s', \vec{a'}) &\text{if } \vec{a} = a\vec{a'}.
    \end{cases}
  \end{displaymath}
  The domain $\mu \Psi$ of the initial algebra of $\Psi$ assigns to $(s, \vec{a})$ the number of accepting runs of $\vec{a}$ from $s$.

    We show that under a \gra{} representing the unambiguity of NFAs, the initial algebras of $\Phi$ and $\Psi$ coincide.
    Consider
    the adjunction
    \begin{displaymath}
      L\dashv R\colon [S \times A^*, \Ninf] \to [S \times A^*, \mathbf{2}]
    \end{displaymath}
    defined by
    (i) $L(k)(s, \vec{a}) \coloneqq i\big(k(s, \vec{a})\big)$;
    (ii) $R(k)(s, \vec{a}) \coloneqq \top$ if $1 \leq k(s, \vec{a})$, and $\bot$ otherwise.
    Since
    $(L \Phi  R)(k)(s, \vec{\alpha}) = \min(1, \Psi(k)(s, \vec{\alpha}))$
    for each $k\in [S \times A^*, \Ninf]$ and $(s, \vec{\alpha}) \in S \times A^*$,
    there is a unique natural transformation $\rho\colon L \Phi R \Rightarrow \Psi$.

    All three conditions of Prop.~\ref{prop:init_alg_corresp_L} are satisfied
    under the \gra of $\iota_\Psi$ (since $[S \times A^*, \Ninf]$ is downward closed; see 
    \ifarxiv
    Prop.~\ref{ap:prop:unambiguous}).
    \else
    \cite[Prop.~E.1]{arxiv}).
    \fi
Consequently, under the \gra of $\iota_\Psi$, we obtain the initial algebra correspondence of $\Phi$ and $\Psi$.
In this setting,
the \gra
$\mu \Psi \in \fixcounit{\mathbb{D}}$ means that
for any state and any word,
there is at most one accepting path.
This condition precisely captures the unambiguity of the NFA.
\end{example}

Instantiated to UFAs, our framework shows that two semantics coincide: the first is the original problem we want to solve (i.e.~computing accepting words), and the second semantics is the equivalent problem under the unambiguity condition (i.e.~counting the number of accepting runs for each word). Using this correspondence, we can rely on existing efficient computation methods (e.g.~\cite{DBLP:conf/dcfs/Colcombet15}) for the universality of UFAs based on the second semantics.

  For simplicity,
in the previous example
we focus on how the unambiguity condition of an NFA establishes a correspondence that also encodes its unambiguity.
We can extend this example to illustrate
a correspondence between two different methods of computing probabilities of words, again
under the unambiguity of NFAs.
See 
\ifarxiv
Appendix~\ref{ap:unambiguous_mc} 
\else
\cite[Appendix F]{arxiv}
\fi
for details.

\section{Predicate transformers via Kan extensions}
\label{sec:kanExtension}
We have focused on initial algebra correspondences
arising from the adjoint equivalences in Cor.~\ref{cor:fix}.
We now turn to how we can derive target predicate transformers for these initial algebra correspondences.
In particular, we consider
predicate transformers of the forms $\Phi$ and $L\Phi R$ (as in the framework of Sec.~\ref{sec:init_alg_corresp}).
In that setting,
once $\Phi$ is specified, the composite $L\Phi R$ arises naturally as the best approximation of $\Phi$ along the adjunction $L \dashv R$.
However, in certain cases such as those described in Example~\ref{eg:achievable_mdp},
it is not immediately clear how to define the functor $\Phi$ itself from the given problem setting.

To handle such cases,
we propose a method of constructing predicate transformers via pointwise Kan extensions.
This pointwise approach naturally yields
a predicate transformer of MDPs that aggregates results across all non-deterministic choices.
Moreover,
we identify a sufficient condition for two predicate transformers defined in this way
to fit into
the form $\Phi$ and $L\Phi R$,
allowing us to apply Cor.~\ref{cor:fix}.

Throughout this section,
all categories are assumed to be locally small.

\subsection{Review of Kan Extensions}
We briefly review concepts of (pointwise) Kan extensions,
which provide a way of extending functors in a universal manner.
We refer the reader to~\cite{mac2013categories, Loregian_2021} for further details.
\newcommand{\leftKan}[2]{\mathrm{Lan}_{#1}{#2}}
\begin{definition}[Left Kan extensions]
  Let $K\colon \mathbb{A} \to \mathbb{C}$ and $T\colon \mathbb{A} \to \mathbb{B}$ be functors.
  A \emph{left Kan extension of $T$ along $K$} is a functor $\leftKan{K}{T}\colon \mathbb{C} \to \mathbb{B}$ equipped with a natural transformation $\alpha\colon T \Rightarrow \leftKan{K}{T} \circ K$
  such that
  for each pair $(S\colon \mathbb{C} \to \mathbb{B}, \beta\colon T \Rightarrow S \circ K)$,
  there exists a unique natural transformation $\gamma\colon \leftKan{K}{T} \Rightarrow S$ satisfying $\gamma K \circ \alpha = \beta$.
\end{definition}

\begin{proposition} \label{prop:kan_coend}
  Let $K\colon \mathbb{A} \to \mathbb{C}$ and $T\colon \mathbb{A} \to \mathbb{B}$ be functors,
  and assume that
  $\mathbb{A}$ is small and
  $\mathbb{B}$ is cocomplete.
  Since $\mathbb{B}$ is locally small and it has coproducts,
  it is canonically copowered over $\Set$
  (i.e.~there exists a functor $(-) \bullet (-)\colon \Set \times \mathbb{B} \to \mathbb{B}$ with an isomorphism
  $\mathbb{B}(X \bullet y, z) \cong \Set(X, \mathbb{B}(y, z))$ natural in all components)
  by $X \bullet y \coloneqq \coprod_{x \in X} y$.
  Since $\mathbb{A}$ is small and $\mathbb{B}$ is cocomplete,
  the coend below exists.
  Then a left Kan extension of $T$ along $K$ exists,
  and there is an isomorphism
  \begin{align}
    \leftKan{K}{T} &\cong \int^{a} \mathbb{C}(Ka, -) \bullet T(a),
  \end{align}
  natural in $K$ and $T$.
  Moreover, the left Kan extension is \emph{pointwise}, meaning that for each $c \in \mathbb{C}$,
    $\leftKan{K}{T}(c) \cong \mathrm{colim} (K \downarrow c \xrightarrow{P} \mathbb{A} \xrightarrow{T} \mathbb{B})$
    where $P$ is the projection functor of the slice category $(K \downarrow c)$.
    \qed
\end{proposition}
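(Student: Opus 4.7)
The plan is to verify directly that the proposed coend formula defines a left Kan extension, then recover the pointwise colimit description via the comma category $(K \downarrow c)$.

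First, I would check existence. For each $c \in \mathbb{C}$, the bifunctor $\mathbb{C}(K{-}, c) \bul T(-) \colon \mathbb{A}^{\op} \times \mathbb{A} \to \mathbb{B}$ is well-defined because the copower $X \bul y = \coprod_{x \in X} y$ exists in $\mathbb{B}$ by local smallness and cocompleteness. Since $\mathbb{A}$ is small, the coend $\int^{a} \mathbb{C}(Ka, c) \bul T(a)$ can be computed as a coequalizer of two maps between small coproducts, and both coproducts and coequalizer exist in the cocomplete category $\mathbb{B}$. Functoriality of $\leftKan{K}{T}$ in $c$ follows from functoriality of $\mathbb{C}(K-, -)$ in the second argument together with the universal property of coends.

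Next, I would construct the unit $\alpha\colon T \Rightarrow \leftKan{K}{T} \circ K$. For each $a \in \mathbb{A}$, the identity $\id_{Ka}$ picks out a summand $T(a) \to \mathbb{C}(Ka, Ka) \bul T(a)$, and composing with the coend projection yields $\alpha_a\colon T(a) \to \leftKan{K}{T}(Ka)$. Naturality of $\alpha$ in $a$ is a direct consequence of the dinaturality of the coend wedge.

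For the universal property, given $(S\colon \mathbb{C} \to \mathbb{B}, \beta\colon T \Rightarrow S \circ K)$, I would build $\gamma\colon \leftKan{K}{T} \Rightarrow S$ componentwise. Using the copower adjunction $\mathbb{B}(X \bul y, z) \cong \Set(X, \mathbb{B}(y, z))$, a wedge with vertex $S(c)$ out of $\mathbb{C}(Ka, c) \bul T(a)$ corresponds to a family of functions $f \mapsto S(f) \circ \beta_a$ in $\Set(\mathbb{C}(Ka, c), \mathbb{B}(T(a), S(c)))$. Naturality of $\beta$ and functoriality of $S$ give dinaturality in $a$, so the universal property of the coend produces a unique $\gamma_c\colon \leftKan{K}{T}(c) \to S(c)$; naturality of $\gamma$ in $c$ and the equation $\gamma K \circ \alpha = \beta$ are standard diagram chases, and uniqueness of $\gamma$ is exactly the uniqueness in the coend's universal property. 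Naturality of the assignment $(K, T) \mapsto \leftKan{K}{T}$ follows from this universal characterisation.

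Finally, for the pointwise description, I would identify the coend with a colimit over the twisted arrow category and then reindex. The copower summands $\mathbb{C}(Ka, c) \bul T(a)$ are indexed by pairs $(a, f\colon Ka \to c)$, which are precisely the objects of the comma category $(K \downarrow c)$; the coend's coequalizer identifications correspond to morphisms in $(K \downarrow c)$. A routine cofinality/reindexing argument then gives $\leftKan{K}{T}(c) \cong \colim(K \downarrow c \xrightarrow{P} \mathbb{A} \xrightarrow{T} \mathbb{B})$, which is pointwiseness by definition. The main obstacle is the bookkeeping in the dinaturality verification for $\gamma_c$; everything else reduces to routine manipulation of coends and copowers.
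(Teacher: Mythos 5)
Your proof is correct and is the standard textbook argument. The paper itself gives no proof of this proposition---it is stated with a \qed and deferred to the cited references (Mac~Lane and Loregian)---and the argument you give (verify the coend formula satisfies the universal property via the copower adjunction, then identify the coend's coequalizer presentation with the colimit over $(K \downarrow c)$) is precisely the one found there. The only point worth tightening is uniqueness of $\gamma$: you should note explicitly that naturality of $\gamma$ forces $\gamma_c$ on the summand indexed by $(a, f\colon Ka \to c)$ to be $S(f)\circ\beta_a$, since that summand is the image of $\alpha_a$ under $\leftKan{K}{T}(f)$; after that the coend's universal property finishes the job as you say.
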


\begin{proposition} \label{prop:kan_adj}
  Left adjoints preserve pointwise left Kan extensions.
  Concretely,
  if
  $L\colon \mathbb{B} \to \mathbb{D}$ is a left adjoint and
  $T\colon \mathbb{A} \to \mathbb{B}, K\colon \mathbb{A} \to \mathbb{C}$
  are functors such that
  a left Kan extension $(\leftKan K T, \alpha)$ exists,
  then
  $(L \circ \leftKan K T, L \alpha)$ is a left Kan extension of $L \circ T$ along $K$.
  \qed
\end{proposition}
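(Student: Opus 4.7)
The plan is to prove this by combining the pointwise colimit formula from Prop.~\ref{prop:kan_coend} with the fact that left adjoints preserve colimits. Since $(\leftKan{K}{T}, \alpha)$ is assumed to be a pointwise left Kan extension, for each $c \in \mathbb{C}$ we have
\[
  \leftKan{K}{T}(c) \;\cong\; \mathrm{colim}\bigl(K \downarrow c \xrightarrow{P} \mathbb{A} \xrightarrow{T} \mathbb{B}\bigr),
\]
together with a colimiting cocone whose component at an object $(a, f\colon Ka \to c)$ of $K \downarrow c$ is determined by $\alpha_a$.

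Next, I would apply $L$ to this colimit. Since $L\colon \mathbb{B} \to \mathbb{D}$ is a left adjoint, it preserves all colimits, in particular the one above. Thus
\[
  L\bigl(\leftKan{K}{T}(c)\bigr) \;\cong\; \mathrm{colim}\bigl(K \downarrow c \xrightarrow{P} \mathbb{A} \xrightarrow{T} \mathbb{B} \xrightarrow{L} \mathbb{D}\bigr),
\]
and the transported cocone has components $L(\alpha_a)$, i.e., it is exactly the cocone induced by $L\alpha\colon LT \Rightarrow (L \circ \leftKan{K}{T}) \circ K$. By Prop.~\ref{prop:kan_coend} (applied in $\mathbb{D}$, which is cocomplete since it admits colimits of diagrams from $K \downarrow c$ after transport via $L$; in any case the colimit exists because it is exhibited explicitly), this colimit, together with the cocone $L\alpha$, is precisely the pointwise formula for a left Kan extension of $L \circ T$ along $K$ evaluated at $c$.

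The main step that requires care is verifying the universal property globally rather than just at each object. I would do this as follows: given any pair $(S\colon \mathbb{C} \to \mathbb{D},\, \beta\colon LT \Rightarrow SK)$, I transpose $\beta$ across the adjunction $L \dashv R$ to obtain $\overline{\beta}\colon T \Rightarrow RSK$. The universal property of the original Kan extension $(\leftKan{K}{T}, \alpha)$ produces a unique $\overline{\gamma}\colon \leftKan{K}{T} \Rightarrow RS$ with $\overline{\gamma}K \circ \alpha = \overline{\beta}$; transposing back yields $\gamma\colon L \circ \leftKan{K}{T} \Rightarrow S$ with $\gamma K \circ L\alpha = \beta$. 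Uniqueness follows from the uniqueness of $\overline{\gamma}$ together with the bijectivity of the adjoint transpose.

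The only mild obstacle is bookkeeping: one must check that the two descriptions of $\gamma$—the one coming from the colimit preservation argument and the one coming from the adjoint transpose argument—agree, and that $L\alpha$ really matches the colimiting cocone on the nose (not just up to the isomorphism). Both follow from the naturality clause in Prop.~\ref{prop:kan_coend} and the naturality of the unit/counit of $L \dashv R$, so no genuinely new calculation is required. \qed
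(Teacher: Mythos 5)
Your proposal is correct. The paper offers no proof of this proposition---it is stated as a standard fact with references to Mac Lane and Loregian---and your adjoint-transposition argument in the third paragraph is exactly the textbook proof: the chain of natural bijections $\mathrm{Nat}(L\circ\leftKan{K}{T}, S)\cong\mathrm{Nat}(\leftKan{K}{T}, RS)\cong\mathrm{Nat}(T, RSK)\cong\mathrm{Nat}(LT, SK)$, implemented by whiskering with $K$ and precomposing with $L\alpha$, establishes the universal property in full, with uniqueness coming from bijectivity of the transpose; the triangle identities give $\gamma K\circ L\alpha=\beta$ as you claim. Your first paragraph (colimit preservation) is not needed for the statement as literally written but correctly shows the stronger fact that the resulting Kan extension is again pointwise; the only blemish there is the aside suggesting $\mathbb{D}$ is cocomplete---it need not be, and one should not invoke Prop.~\ref{prop:kan_coend} (whose hypotheses include cocompleteness of the codomain), but your fallback remark that the relevant colimit is exhibited explicitly as the image under $L$ of a colimit is the right justification, so no genuine gap results.
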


\subsection{Constructing Predicate Transformers with Kan Extensions}
We now introduce a method for extending
$\Phi\colon \mathbb{I} \times \mathbb{A} \to \mathbb{A}$, where $\mathbb{I}$ is a category representing \emph{parameters}.
We refer to such
a functor $\Phi$ as a \emph{parameterized endofunctor};
fixing a parameter $i \in \mathbb{I}$ yields a predicate transformer $\Phi(i, -)\colon \mathbb{A} \to \mathbb{A}$.
In the definition below,
we derive a new parameterized endofunctor from
$\Phi$ by changing the parameter category $\mathbb{I}$ as well as the domain and codomain categories $\mathbb{A}$.

\newcommand{\PhiKF}[3]{#1_{#2, #3}}
\begin{definition} \label{def:pred_kan}
  Let $\mathbb{I}$ and $\mathbb{A}$ be small categories,
  $\mathbb{J}$ be a category,
  $\mathbb{B}$ be a cocomplete category.
  Suppose
  $K\colon \mathbb{I} \to \mathbb{J}$,
  $T\colon \mathbb{A} \to \mathbb{B}$
  and
  $\Phi\colon \mathbb{I} \times \mathbb{A} \to \mathbb{A}$.
  A \emph{$(K, T)$-extension $\PhiKF{\Phi}{K}{T}$ of $\Phi$}
  is a left Kan extension
  of $T \circ \Phi$ along $K \times T$.
  \begin{center}
  \begin{tikzcd}
      \mathbb{J} \times \mathbb{B} \arrow[drr, "\PhiKF{\Phi}{K}{T} = \leftKan{K \times T}{T \circ \Phi}", ""{name=PhiKF, below, near start}] \\
      \mathbb{I} \times \mathbb{A} \arrow[u, "K \times T"] \arrow[r, "\Phi"'{name=Phi}]
      \arrow[Rightarrow, u, to=PhiKF.south, shorten >=2pt]
      &\mathbb{A} \arrow[r, "T"'] &\mathbb{B}
  \end{tikzcd}
  \end{center}

  Since $\mathbb{I} \times \mathbb{A}$ is small and $\mathbb{B}$ is cocomplete,
  such a left Kan extension always exists by
Prop.~\ref{prop:kan_coend}. Furthermore, for each $j \in \mathbb{J}$ and $b \in \mathbb{B}$, the following isomorphism holds:
\begin{align}
  \PhiKF{\Phi}{K}{T}(j, b)
  &\cong \mathrm{colim} ((K \times T) \downarrow (j, b) \to \mathbb{I} \times \mathbb{A} \xrightarrow{T \circ \Phi} \mathbb{B}).\label{eq:pred_kan_colim}
\end{align}
\end{definition}

By applying
this method for extending predicate transformers,
we can show that the Bellman operators for MDPs, denoted by $\Phi$ and $L \Phi R$ in Example~\ref{eg:achievable_mdp}, arise systematically from
the Bellman operator for MCs, $\Phi$ in Example~\ref{eg:total_partial_mc}.
Concretely,
these Bellman operators for MDPs
can be defined as an aggregation of the results
of the Bellman operator for MCs over all non-deterministic choices,
and
this aggregation can be captured
by
the colimit in \eqref{eq:pred_kan_colim}.

Before discussing Bellman operators in detail,
we review some concepts and notations related to posets and downward closures in a categorical context.
We denote by
$\Poset$ and $\SupLat$ the categories of posets (and order-preserving maps)
and suplattices (and join-preserving maps), that are posets with arbitrary joins, respectively.
These two categories together with $\Set$ are related by the following adjunctions:
\begin{displaymath}
  \xymatrix@C=2em{
    \Set \ar@/^0.8em/[rr]^-{\disc} &\bot
&\Poset \ar@/^0.8em/[ll]^-U \ar@/^0.8em/[rr]^-{(-)^\downarrow} &\bot &\SupLat. \ar@/^0.8em/[ll]^-U
  }
\end{displaymath}
The functors $U$ represent forgetful functors,
the functor $\disc$ sends a set $X$ to the poset $X$ with the discrete order,
and
$(-)^\downarrow$ sends $(X, \sqsubseteq)$ to the complete lattice of downward-closed subsets $(X^\downarrow, \subseteq)$.
In addition, we have $\disc \circ U = \id$.

Let $\Omega \in \SupLat$ be the complete lattice $\interval{1} \times \interval{\infty}$.
In what follows,
we basically work within $\Poset$.
We often
omit writing $U$ and $\disc$ explicitly;
for example, we write simply $S$ for $\disc(S) \in \Poset$
and
$\Omega^\downarrow$ for $U((U\Omega)^\downarrow) \in \Poset$.

The category $\Poset$ is cartesian closed
with exponential objects $[X, Y]$.
Given a morphism $f\colon X \to Y$,
we write $f_*$ for the induced map $[S, f] = f \circ (-)\colon [S, X] \to [S, Y]$.

\newcommand{\Phimc}{\Phi^{\mathrm{mc}}}
\begin{example}[Bellman operator for partial expected reward] \label{eg:vi_mc_mdp_partial}
  Recall that
  in Example~\ref{eg:total_partial_mc},
  for a given MC $c\colon S \to FS$
  where $F = \mcfunc{-} \times \mathbb{N}$,
  we introduced a functor $\Phi\colon [S, \Omega] \to [S, \Omega]$ where $\Omega = \interval{1} \times \interval{\infty}$.
This $\Phi$ can be extended to a functor of the type
  $[S, FS] \times [S, \Omega] \to [S, \Omega]$,
  and we denote it by $\Phimc$.
  The functor $\Phimc$ represents the standard Bellman operator calculating both reachability probabilities and partial expected rewards for MCs.

  Let us obtain the Bellman operator
  for the partial expected reward for MDPs, denoted by $\Phi$ in Example~\ref{eg:achievable_mdp},
  as an extension of $\Phimc$.
  In particular, by Def.~\ref{def:pred_kan},
  we obtain
\begin{displaymath}
  \PhiKF{\Phimc}{(\eta_{FS})_*}{(\eta'_{\Omega})_*}\colon [S, \Pfin FS] \times [S, \Omega^\downarrow] \to [S, \Omega^\downarrow]
\end{displaymath}
  where
  $\eta$ is the unit of the monad $\Pfin$
  and
  $\eta'$ is the unit of $(-)^\downarrow \dashv U$.
  Note that it is unique
  since $[S, \Omega^\downarrow]$ is a poset.
  By the colimit expression~\eqref{eq:pred_kan_colim} in this context,
  for $c \in [S, \Pfin FS]$ and $k \in [S, \Omega^\downarrow]$, we have
  \begin{align*}
  &\PhiKF{\Phimc}{(\eta_{FS})_*}{(\eta'_{\Omega})_*}(c, k) \\
  &= \left(s \mapsto \bigcup_{(c', k') \in ((\eta_{FS})_* \times (\eta'_{\Omega})_*) \downarrow (c, k)} (\Phimc(c', k')(s))^\downarrow \right)
  \end{align*}
  This coincides with $\Phi(k)$ in Example~\ref{eg:achievable_mdp}.
  In more detail,
  the slice category $((\eta_{FS})_* \times (\eta'_{\Omega})_*) \downarrow (c, k)$
  consists of
  pairs $(c', k')$ of $c'\colon S \to FS$ and $k'\colon S \to \Omega$
  such that $c'(s') \in c(s')$ and $k'(s') \in k(s')$ for each $s \in S$.
  A morphism
  $(c', k') \to (c'', k'')$ exists if and only if $c' = c''$ and $k'(s') \leq k''(s')$ for each $s' \in S$.
  Intuitively,
  selecting $c'$ in each step can be viewed as selecting a particular choice on the MDP $c$.
  By taking
  the union over all such $c'$ and $k'$,
  we aggregate
  the results of $\Phimc$ for all optimal choices.
  In this way,
  $\PhiKF{\Phimc}{(\eta_{FS})_*}{(\eta'_{\Omega})_*}$
  encodes a Bellman operator for MDPs naturally derived from $\Phimc$.
\end{example}

\begin{example}[Bellman operator for total expected reward] \label{eg:vi_mc_mdp_total}
  In a similar way to Example~\ref{eg:vi_mc_mdp_partial}, one obtains
  \begin{displaymath}
  \PhiKF{\Phimc}{(\eta_{FS})_*}{(\pi_2)_*}\colon [S, \Pfin FS] \times [S, \interval{\infty}] \to [S, \interval{\infty}]
  \end{displaymath}
  where
  $\eta$ is the unit of the monad $\Pfin$.
  For each $c \in [S, \Pfin FS]$ and $k \in [S, \interval{\infty}]$,
  \begin{align*}
  &\PhiKF{\Phimc}{(\eta_{FS})_*}{(\pi_2)_*}(c, k) \\
  &= \left(s \mapsto \bigsqcup_{(c', k') \in ((\eta_{FS})_* \times (\pi_2)_*) \downarrow (c, k)} \pi_2 \big(\Phimc(c', k')(s)\big)\right),
  \end{align*}
  and it represents
  the Bellman operator for the total expected reward, denoted by $L \Phi R$
  in Example~\ref{eg:achievable_mdp}.
  The functor $\PhiKF{\Phimc}{(\eta_{FS})_*}{(\pi_2)_*}$
  collects the second components of the results of $\Phimc$ over all choices, merging them by joins.
\end{example}

\subsection{Relation Between Extensions of Predicate Transformers}
As illustrated in the previous subsection,
our construction of $\PhiKF{\Phi}{(-)}{(-)}$ in Def.~\ref{def:pred_kan}
provides a systematic way to derive parameterized endofunctors from a given parameterized endofunctor $\Phi$
by aggregating results for decompositions of inputs (coalgebras and predicates).

The next proposition
establishes conditions under which
two extensions of a parameterized endofunctor
fit the framework in Cor.~\ref{cor:fix}.

\begin{proposition} \label{prop:kan_isom}
  Let $\Phi\colon \mathbb{I} \times \mathbb{A} \to \mathbb{A}$.
  Consider two extensions
  $\PhiKF{\Phi}{K_1}{T_1}$ and $\PhiKF{\Phi}{K_2}{T_2}$
  where $K_i\colon \mathbb{I} \to \mathbb{J}_i, T_i\colon \mathbb{A} \to \mathbb{B}_i$ $(i=1, 2)$.
  If we have a full and faithful functor $H\colon \mathbb{J}_2 \to \mathbb{J}_1$ and an adjunction $L \dashv R\colon \mathbb{B}_2 \to \mathbb{B}_1$ such that $L \circ T_1 = T_2$ and $K_1 = H \circ K_2$,
  then $L \PhiKF{\Phi}{K_1}{T_1} (H \times R) \cong \PhiKF{\Phi}{K_2}{T_2}$.
  \begin{displaymath}
    \xymatrix{
      \mathbb{I} \ar[r]^{K_1} \ar[dr]_{K_2} &\mathbb{J}_1  \\
      &\mathbb{J}_2 \ar[u]_{H}
    },
    \xymatrix{
      \mathbb{A} \ar[r]^{T_1} \ar[dr]_{T_2} &\mathbb{B}_1 \ar[d]_L^\dashv \\
      &\mathbb{B}_2 \ar@/_3ex/[u]_R
    }
  \end{displaymath}
\end{proposition}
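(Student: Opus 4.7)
The plan is to compute $L \circ \PhiKF{\Phi}{K_1}{T_1} \circ (H \times R)$ pointwise via the coend formula of Prop.~\ref{prop:kan_coend}, and then rewrite the resulting coend---using $L$'s preservation of colimits, the adjunction $L \dashv R$, and the full faithfulness of $H$---so that it coincides with the coend computing $\PhiKF{\Phi}{K_2}{T_2}$.

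First, I would fix $(j, b) \in \mathbb{J}_2 \times \mathbb{B}_2$ and unfold
\begin{align*}
  L\,\PhiKF{\Phi}{K_1}{T_1}(Hj, Rb)
  &\cong L \int^{(i, a)} \bigl(\mathbb{J}_1(K_1 i, Hj) \times \mathbb{B}_1(T_1 a, Rb)\bigr) \bullet T_1\Phi(i, a).
\end{align*}
Since $L$ is a left adjoint, it preserves the coend and the copower $\bullet$, so it can be pushed past both and applied to the integrand $T_1 \Phi(i, a)$; the hypothesis $L T_1 = T_2$ then simplifies this to $T_2 \Phi(i, a)$. Equivalently, this step is Prop.~\ref{prop:kan_adj} applied to $\PhiKF{\Phi}{K_1}{T_1} = \leftKan{K_1 \times T_1}{T_1 \Phi}$.

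Next, I would rewrite the two Hom-sets using the hypotheses. Because $K_1 = H \circ K_2$ and $H$ is full and faithful,
$\mathbb{J}_1(K_1 i, Hj) = \mathbb{J}_1(H K_2 i, Hj) \cong \mathbb{J}_2(K_2 i, j)$;
because $L \dashv R$ and $L T_1 = T_2$, the adjoint transpose gives
$\mathbb{B}_1(T_1 a, Rb) \cong \mathbb{B}_2(L T_1 a, b) = \mathbb{B}_2(T_2 a, b)$.
Substituting into the coend yields
\begin{align*}
  \int^{(i, a)} \bigl(\mathbb{J}_2(K_2 i, j) \times \mathbb{B}_2(T_2 a, b)\bigr) \bullet T_2\Phi(i, a),
\end{align*}
which is exactly the coend computing $\PhiKF{\Phi}{K_2}{T_2}(j, b)$ by Prop.~\ref{prop:kan_coend}.

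Finally, each of the isomorphisms above is natural in $(j, b)$---the coend formula is functorial in the endpoints, $L$ preserves colimits naturally, and both the full-faithfulness and the adjoint-transpose bijections are natural in the appropriate variables---so these assemble into a natural isomorphism $L \PhiKF{\Phi}{K_1}{T_1}(H \times R) \cong \PhiKF{\Phi}{K_2}{T_2}$ of functors $\mathbb{J}_2 \times \mathbb{B}_2 \to \mathbb{B}_2$. The main bookkeeping obstacle is to ensure that the Hom-set rewriting is compatible with the dinaturality present in the coend; a more geometric way to discharge this, avoiding explicit coend manipulation, is to exhibit an isomorphism of slice categories $(K_1 \times T_1) \downarrow (Hj, Rb) \cong (K_2 \times T_2) \downarrow (j, b)$ that commutes with the projection to $\mathbb{I} \times \mathbb{A}$, and then apply the pointwise colimit version of Prop.~\ref{prop:kan_coend} on each side.
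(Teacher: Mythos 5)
Your proposal is correct and follows essentially the same route as the paper: the paper likewise applies Prop.~\ref{prop:kan_adj} to push $L$ through the Kan extension, rewrites $\mathbb{J}_1(K_1 i, H-) \cong \mathbb{J}_2(K_2 i, -)$ by full faithfulness of $H$ and $\mathbb{B}_1(T_1 a, R-) \cong \mathbb{B}_2(T_2 a, -)$ by the adjunction, and recognizes the resulting coend as $\PhiKF{\Phi}{K_2}{T_2}$. Your additional remarks on dinaturality and the slice-category reformulation are sound but not needed beyond what the coend calculus already guarantees.
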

\begin{proof}
  By Prop.~\ref{prop:kan_adj} and $L \dashv R$,
\begin{align*}
  &L \PhiKF{\Phi}{K_1}{T_1} (H \times R) \\
   &\cong \int^{i, a} (\mathbb{J}_1(K_1(i), H-) \times \mathbb{B}_1(T_1(a), R-)) \bullet L T_1 \Phi(i, a) \\
   &\qquad \qquad \qquad \qquad \qquad \qquad \text{(since $H$ is full and faith)} \\
   &\cong \int^{i, a} (\mathbb{J}_1(K_2(i), -) \times \mathbb{B}_1(LT_1(a), -)) \bullet L T_1 \Phi(i, a) \\
   &= \PhiKF{\Phi}{K_2}{T_2}
\end{align*}
\end{proof}

By applying Cor.~\ref{cor:fix}
to this extension $\PhiKF{\Phi}{K_1}{T_1}$ and $L \dashv R$,
we obtain
an equivalence of subcategories of algebras.
\begin{corollary} \label{cor:kan_equiv}
  In the setting of Prop.~\ref{prop:kan_isom},
  for each $j \in \mathbb{J}_1$,
  \begin{displaymath}
  \AlgRst{\PhiKF{\Phi}{K_1}{T_1}(H(j), -)}{\fixunit{(\mathbb{B}_1)}}
  \simeq \AlgRst{\PhiKF{\Phi}{K_2}{T_2}(j, -)}{\fixcounit{(\mathbb{B}_2)}}
  \end{displaymath}
  by functors induced by the adjunction $L \dashv R$.
  \qed
\end{corollary}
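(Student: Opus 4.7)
The plan is to combine Prop.~\ref{prop:kan_isom} with Cor.~\ref{cor:fix}. Fix an object $j$ (so that $H(j)$ lies in $\mathbb{J}_1$), and regard $\PhiKF{\Phi}{K_1}{T_1}(H(j), -)$ as an endofunctor of $\mathbb{B}_1$. Applying Cor.~\ref{cor:fix} to this endofunctor together with the adjunction $L \dashv R\colon \mathbb{B}_1 \to \mathbb{B}_2$ immediately yields the adjoint equivalence
\[
  \AlgRst{\PhiKF{\Phi}{K_1}{T_1}(H(j), -)}{\fixunit{\mathbb{B}_1}} \;\simeq\; \AlgRst{L\,\PhiKF{\Phi}{K_1}{T_1}(H(j), -)\,R}{\fixcounit{\mathbb{B}_2}},
\]
realized by the adjoint pair $\AlgFn{L}{\alpha} \dashv \AlgFn{R}{\beta}$ with $\alpha = L\,\PhiKF{\Phi}{K_1}{T_1}(H(j), -)\,\eta^{-1}$ and $\beta = \eta\,\PhiKF{\Phi}{K_1}{T_1}(H(j), -)\,R$ as prescribed in Cor.~\ref{cor:fix}.

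Next, I would invoke Prop.~\ref{prop:kan_isom} to obtain a natural isomorphism $\theta\colon L\,\PhiKF{\Phi}{K_1}{T_1}(H \times R) \Rightarrow \PhiKF{\Phi}{K_2}{T_2}$ of functors $\mathbb{J}_2 \times \mathbb{B}_2 \to \mathbb{B}_2$. Instantiating $\theta$ at the first argument $j$ gives a natural iso between the endofunctors $L\,\PhiKF{\Phi}{K_1}{T_1}(H(j), -)\,R$ and $\PhiKF{\Phi}{K_2}{T_2}(j, -)$ on $\mathbb{B}_2$. A natural isomorphism of endofunctors induces an isomorphism of categories of algebras --- sending $(X, a)$ to $(X, a \circ \theta_{j, X}^{-1})$ and acting as the identity on morphisms --- which, being the identity on carriers, restricts along the change-of-base pullback defining $\AlgRst{-}{\fixcounit{\mathbb{B}_2}}$ to yield
\[
  \AlgRst{L\,\PhiKF{\Phi}{K_1}{T_1}(H(j), -)\,R}{\fixcounit{\mathbb{B}_2}} \;\cong\; \AlgRst{\PhiKF{\Phi}{K_2}{T_2}(j, -)}{\fixcounit{\mathbb{B}_2}}.
\]

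Composing the two equivalences gives the claim. The realizing functors are $\AlgFn{L}{\alpha}$ and $\AlgFn{R}{\beta}$ from Cor.~\ref{cor:fix}, post-composed with the transport along $\theta_j$; since $\theta$ is itself built from the adjunction $L \dashv R$ and the universal properties of the Kan extensions (as in the proof of Prop.~\ref{prop:kan_isom}), these composites are rightfully described as ``functors induced by $L \dashv R$.'' The main obstacle is purely bookkeeping: one must carefully track the natural transformations $\alpha$ and $\beta$ supplied by Cor.~\ref{cor:fix} and verify that conjugation by $\theta_j$ is compatible with the adjoint structure. No new ingredient beyond the two referenced results is required.
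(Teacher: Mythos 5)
Your proposal is correct and follows essentially the same route as the paper, which obtains the corollary precisely by applying Cor.~\ref{cor:fix} to the endofunctor $\PhiKF{\Phi}{K_1}{T_1}(H(j), -)$ with the adjunction $L \dashv R$ and then transporting along the isomorphism of Prop.~\ref{prop:kan_isom}; the paper simply leaves these bookkeeping steps implicit. Your parenthetical reading of $j$ as an object of $\mathbb{J}_2$ (so that $H(j) \in \mathbb{J}_1$) is also the right way to make the statement typecheck.
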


We close this section with two examples illustrating relationships among Bellman operators discussed so far.
\begin{example}
  In Example~\ref{eg:vi_mc_mdp_partial},
  we introduced the Bellman operator $\Phimc$ for MCs
  and
  the operator $\PhiKF{\Phimc}{(\eta_{FS})_*}{(\eta'_\Omega)_*}$ for the partial expected reward for MDPs as an extension of $\Phimc$.
  Noting that $\Phimc \cong \PhiKF{\Phimc}{\id}{\id}$,
  these two Bellman operators can be seen as two extensions of $\Phimc$.

  We have the following diagrams.
  Since $\Omega \in \SupLat$, there is a left adjoint $\sqcup\colon \Omega^\downarrow \to \Omega$ of $\eta'_\Omega$.
  \begin{displaymath}
    \xymatrix{
      [S, FS] \ar[r]^{(\eta_{FS})_*} \ar[dr]_{\id} &[S, \Pfin FS]  \\
      &[S, FS] \ar[u]_{(\eta_{FS})_*}
    },
    \xymatrix{
      [S, \Omega] \ar[r]^{(\eta'_\Omega)_*} \ar[dr]_{\mathrm{id}} &[S, \Omega^\downarrow] \ar[d]_{\sqcup_*}^\dashv \\
      &[S, \Omega] \ar@/_3ex/[u]_{(\eta'_\Omega)_*}
    }
  \end{displaymath}
  Since $(\eta_{FS})_*$ is full and faithful,
Prop.~\ref{prop:kan_isom} induces
\begin{displaymath}
  \sqcup_* \PhiKF{\Phimc}{(\eta_{FS})_*}{(\eta'_\Omega)_*} ((\eta_{FS})_* \times (\eta'_\Omega)_*) \cong \PhiKF{\Phimc}{\id}{\id} \cong \Phimc.
\end{displaymath}

Hence, by Cor.~\ref{cor:kan_equiv},
for each MC $c\colon S \to FS$,
there is
an equivalence between
  $\AlgRst{\PhiKF{\Phimc}{(\eta_{FS})_*}{(\eta'_\Omega)_*}(\eta_{FS} \circ c, -)}{\fixunit{[S, \Omega^\downarrow]}}$
  and $\Alg{\Phimc(c, -)}$ induced by $\sqcup_* \dashv (\eta'_\Omega)_*$.

The functor $\PhiKF{\Phi}{(\eta_{FS})_*}{(\eta'_\Omega)_*}(\eta_{FS} \circ c, -)\colon [S, \Omega^\downarrow] \to [S, \Omega^\downarrow]$ coincides with the function $\Phi$ in Example~\ref{eg:achievable_mdp} for the MDP $\eta_{FS} \circ c$.
In this setting,
$f$ and $\mu \Phi$ in Example~\ref{eg:achievable_mdp} are equal.
Moreover,
under the adjunction $\sqcup_* \dashv (\eta'_\Omega)_*$,
the \gra of $\iota_\Phi$ is always satisfied.
Consequently, Prop.~\ref{prop:gra_correspondence} provides
the initial algebra correspondence for
the Bellman operator for the MC $c$
and that for the MDP $\eta_{FS} \circ c$.
\end{example}

\begin{example}
  In Example~\ref{eg:vi_mc_mdp_partial} and~\ref{eg:vi_mc_mdp_total},
  we introduced two Bellman operators $\PhiKF{\Phimc}{(\eta_{FS})_*}{(\eta'_\Omega)_*}$
  and $\PhiKF{\Phimc}{(\eta_{FS})_*}{(\pi_2)_*}$,
  which compute
  partial and total expected rewards, respectively.
  The following diagrams show a relation between these two operators.
  \begin{displaymath}
    \xymatrix@C=3em{
      [S, FS] \ar[r]^{(\eta_{FS})_*} \ar[dr]|-{(\eta_{FS})_*} \ar@/_.5em/[ddr]_{(\eta_{FS})_*} &[S, \Pfin FS]  \\
      &[S, \Pfin FS] \ar[u]_{\id} \\
      &[S, \Pfin FS] \ar[u]_{\id}
    },
    \xymatrix@C=3em{
      [S, \Omega] \ar[r]^{(\eta'_\Omega)_*} \ar[dr]|{\id} \ar@/_.5em/[ddr]_{(\pi_2)_*} &[S, \Omega^\downarrow] \ar[d]_{\sqcup_*}^\dashv \\
      &[S, \Omega] \ar@/_3ex/[u]_{(\eta'_\Omega)_*} \ar[d]_-(.4){(\pi_2)_*\!\!}^\dashv\\
      &[S, \interval{\infty}] \ar@/_3ex/[u]_{\langle \Delta_1, \id\rangle_*}
    }
  \end{displaymath}
  Then Prop.~\ref{prop:kan_isom} yields the following isomorphism.
  \begin{displaymath}
    (\pi_2  \circ \sqcup)_* \PhiKF{\Phimc}{(\eta_{FS})_*}{(\eta'_\Omega)_*} (\id \times (\eta'_\Omega \circ \langle \Delta_1, \id\rangle)_*) \cong \PhiKF{\Phimc}{(\eta_{FS})_*}{(\pi_2)_*}.
  \end{displaymath}

  By Cor.~\ref{cor:kan_equiv},
  for each MDP $c\colon S \to \Pfin FS$,
  we have
  an equivalence between
  $\AlgRst{\PhiKF{\Phimc}{(\eta_{FS})_*}{(\eta_\Omega)_*}(c, -)}{\fixunit{[S, \Omega^\downarrow]}}$
  and $\Alg{\PhiKF{\Phimc}{(\eta_{FS})_*}{(\pi_2)_*}(c, -)}$ induced by the adjunction $(\pi_2 \circ \sqcup)_* \dashv (\eta'_\Omega \circ \langle \Delta_1, \id\rangle)_*$.
  This equivalence captures the correspondence between an $f$ and $\mu (L \Phi R)$ described in Example~\ref{eg:achievable_mdp}
  for the Bellman operator for the partial expected reward and that for the total expected reward,
under the \gra.
\end{example}

\section{Conclusion}
\label{sec:conclusion}
We investigated reachability conditions that ensure the initial algebra correspondence
between two different semantics
connected by adjunctions.
Our framework accommodates several examples, including the almost sure reachability condition for MCs and the unambiguity condition for NFAs.

There is a coalgebraic construction of predicate transformers with modalities by fibrational liftings (e.g.~\cite{DBLP:journals/mscs/HasuoKC18}).
We plan to study abstract reachability conditions with modalities, which leads to the initial algebra correspondence of the predicate transformers that are induced by modalities.
Another direction is to study the compositionality of \gra{}s, that is, to study which compositions of systems, for instance, parallel composition, preserve a given \gra{}.
Perhaps the abstract GSOS framework~\cite{TuriP97} is helpful to achieve this goal.
It is also interesting to see whether our abstract framework can be extended to accommodate initial algebra correspondences under linear time properties beyond reachability conditions.
  Additionally,
we believe that our results on unambiguous FA in Example~\ref{eg:unambiguousFA} could form a basis for recovering the PTIME algorithm for model checking of MCs and UFAs~\cite{BaierK00023} from a categorical framework. Specifically, 
integrating our framework with the unifying theory for coalgebraic product construction~\cite{DBLP:journals/pacmpl/WatanabeJRH25}
and utilizing the categorical framework to ensure uniqueness of fixed points presented in~\cite{DBLP:conf/concur/KoriHK21}
may facilitate this recovery.

\section*{Acknowledgments}
The authors would like to thank the anonymous reviewers for their valuable comments and suggestions.
We are supported by the ASPIRE grant No.~JPMJAP2301, JST. 
KW was supported by the JST grant No.~JPMJAX23CU.

\bibliographystyle{IEEEtran}
\bibliography{mybib}

\ifarxiv

\appendices
\clearpage 

\section{Omitted Proofs for \S\ref{sec:cat}} \label{ap:omitted_cat}
\begin{proof}[Proof of Thm.~\ref{thm:hermida_sub_equiv}]
  We first show that
  \begin{displaymath}
  \adj {\AlgFn{L}{\alpha}}{\AlgFn{R}{\beta}}{\eta}{\epsilon}\colon \AlgRst{\Phi}{\mathbb{C}'} \to \AlgRst{\Psi}{\mathbb{D}'}
  \end{displaymath}
forms an adjunction.
  A morphism $f\colon (X, a) \to \AlgFn{R}{\beta}(Y, b)$
  is $f\colon X \to R(Y)$ in $\mathbb{C}'$ such that $f \circ a = R(b) \circ \beta_Y \circ \Phi (f) =  \overline{b \circ \Psi\epsilon_Y \circ \alpha^{-1}_{R(Y)}} \circ \Phi (f)$.
  Its transpose is $\overline{f}\colon L(X) \to Y$ in $\mathbb{D}'$ such that
  $\overline{f} \circ L(a) = b \circ \Psi\epsilon_Y \circ \alpha^{-1}_{R(Y)} \circ L\Phi (f)$.
  By naturality of $\alpha$,
  it follows that $b \circ \Psi\epsilon_Y \circ \alpha^{-1}_{R(Y)} \circ L\Phi (f)
  = b \circ \Psi\epsilon_Y \circ \Psi L(f) \circ \alpha^{-1}_X
  = b \circ \Psi (\overline{f}) \circ \alpha^{-1}_X$.
  Therefore
  such $\overline{f}$ coincides with
  $\overline{f}\colon \AlgFn{L}{\alpha}(X, a) \to (Y, b)$.

  Next, assume $\mathbb{C}' = \fixunit{\mathbb{C}}$ and $\mathbb{D}' = \fixcounit{\mathbb{D}}$ with the adjunction $\adj L R {\eta}{\epsilon}\colon \mathbb{C} \to \mathbb{D}$.
  Then
  $\AlgFn{L}{\alpha}$ and $\AlgFn{R}{\beta}$ form an equivalence of categories
  because
  $\eta\colon \id \cong \AlgFn{R}{\beta}\AlgFn{L}{\alpha}$
  and $\epsilon\colon \AlgFn{L}{\alpha}\AlgFn{R}{\beta} \cong \id$.
\end{proof}

\section{Definition of Total and Partial Expected Rewards for MDPs}
\label{sec:defScheduler}

We fix $F \coloneqq \mathcal{D}(- + \{\checkmark\}) \times \mathbb{N}$ in this section.
Total and partial expected rewards for MCs are also defined naturally by the following definitions
since MCs are special cases of MDPs.
\begin{definition}[scheduler]
  Let $c\colon S\rightarrow  \Pfin FS$ be an MDP with countable states (defined in Example~\ref{eg:achievable_mdp}).
  A \emph{(deterministic) scheduler} is a function $\sigma\colon S^{+}\rightarrow FS$ s.t. for any sequence $s_1\cdots s_n$, $\sigma(s_1\cdots s_n)\in c(s_n)$.
\end{definition}

A pair of an initial state and a scheduler on a given MDP induces an MC with countably infinite state space, whose states are reachable states with histories from the initial state under the scheduler.
The reachability probability objective and expected reward objective under its scheduler on the MDP are defined on the MC with countable states.
See~\cite{DBLP:books/daglib/0020348,ForejtKNP11} for further details.

\newcommand{\Path}{\mathrm{Path}}
\begin{definition}[Total and partial expected reward] \label{def:p_erew_n}
  Let $\sigma$ be a scheduler. For each $n \in \mathbb{N}$,
  define
  $\Path^n(s) \coloneqq
  \{(s_1\cdots s_m) \in S^m \mid m \leq n, s_1 = s \}$.
  For each $(s_1\cdots s_m) \in \Path^n(s)$,
  let
  \begin{align*}
  \Prb{\sigma}((s_1\cdots s_m)) &\coloneqq \prod_{i=1}^{m-1} \pi_1 (\sigma(s_1\cdots s_i))(s_{i+1}), \\
  \rew_\sigma((s_1\cdots s_m)) &\coloneqq \Sigma_{i=1}^{m} \pi_2 (\sigma(s_1\cdots s_i)).
  \end{align*}
  By abuse of notation, we
  write
  $\Prb{\sigma}((s_1\cdots s_m, \checkmark))$ for $\prod_{i=1}^m \pi_1 (\sigma(s_1\cdots s_i))(s_{i+1})$
  where $s_{m+1} \coloneqq \checkmark$.

  We then define
  \begin{align*}
  \Prb{\sigma}^n(s, \checkmark) &\coloneqq \Sigma_{\vec{s} \in \Path^n(s)} \Prb{\sigma}(\vec{s}, \checkmark), \\
  \Erew{\sigma}^n(s, \checkmark) &\coloneqq \Sigma_{\vec{s} \in \Path^n(s)} \Prb{\sigma}(\vec{s}, \checkmark) \cdot \rew_\sigma(\vec{s}), \\
  \Prb{\sigma}(s, \checkmark) &\coloneqq \sup_{n \in \mathbb{N}} \Prb{\sigma}^n(s, \checkmark), \\
  \Erew{\sigma}(s, \checkmark) &\coloneqq \sup_{n \in \mathbb{N}} \Erew{\sigma}^n(s, \checkmark).
  \end{align*}
  The $\Prb{\sigma}(s, \checkmark)$ and $\Erew{\sigma}(s, \checkmark)$
  is the probability and the \emph{partial expected reward} of reaching the target state $\checkmark$ from $s$ under the scheduler $\sigma$, respectively.

  For the total expected reward, we regard $\checkmark$ as a sink state such that there is a unique action that induces the self-loop with $0$ reward.
  We then define the set $\Path^{=n}(s)$ as  $
  \{(s_1\cdots s_n) \in (S+\{\checkmark\})^n \mid s_1 = s\}$.
  The \emph{total expected reward} from $s$ under the scheduler $\sigma$ is defined by
  \begin{displaymath}
    \sup_{n\in \mathbb{N}} \Sigma_{\vec{s} \in\Path^{=n}(s)} \Prb{\sigma}(\vec{s}) \cdot \rew_\sigma(\vec{s}).
  \end{displaymath}
\end{definition}

\section{Omitted proofs for \S\ref{sec:init_alg_corresp}} \label{ap:omitted_mdp}
\begin{lemma} \label{lem:f_init}
  Consider Example~\ref{eg:achievable_mdp}.
  We write $\Delta_{(0, 0)^\downarrow}$ for the constant function $S \to (\interval{1} \times \interval{\infty})^\downarrow$ of the value
  $(0, 0)^\downarrow$.
  \begin{enumerate}
    \item \label{item:f_alg} $f$ consists a $\Phi$-algebra.
    \item \label{item:phi_delta} For each $n \in \mathbb{N}$ and $s \in S$,
    \begin{displaymath}
      \Phi^n(\Delta_{(0, 0)^\downarrow})(s) = \bigcup_{\text{scheduler }\sigma}\big(\Prb{\sigma}^{n}(s, \checkmark), \Erew{\sigma}^{n}(s, \checkmark)\big)^\downarrow,
    \end{displaymath}
where $\Prb{\sigma}^{n}(s, \checkmark)$
and $\Erew{\sigma}^{n}(s, \checkmark)$ are defined in Def.~\ref{def:p_erew_n}.
    \item \label{item:cup_k} For each object $(k, \Phi k \sqsubseteq k) \in \AlgRst{\Phi}{\fixunit{[S, (\interval{1} \times \interval{\infty})^\downarrow]}}$,
    $\bigsqcup_{n \in \mathbb{N}}\Phi^n(\Delta_{(0, 0)^\downarrow}) \sqsubseteq k$.
    \item If $f \in \fixunit{[S, (\interval{1} \times \interval{\infty})^\downarrow]}$,
    then $(f, \Phi f \sqsubseteq f)$ is initial in $\AlgRst{\Phi}{\fixunit{[S, (\interval{1} \times \interval{\infty})^\downarrow]}}$.
  \end{enumerate}
\end{lemma}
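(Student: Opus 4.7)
The plan is to prove the four parts in order, with each building on the previous. The common technical device is the construction of \emph{glued} schedulers that decompose the one-step behaviour of $\Phi$ into a first action $(d, n) \in c(s)$ and a family of continuation schedulers $(\sigma_{s'})_{s' \in S}$; since a deterministic scheduler takes histories as input, the continuations are combined into a single scheduler $\sigma'$ by setting $\sigma'(s\, s_1 \cdots s_m) \coloneqq \sigma_{s_1}(s_1 \cdots s_m)$ after the first step from $s$.

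For part (\ref{item:f_alg}), I fix $s \in S$, $(d, n) \in c(s)$, and $k' \colon S \to \interval{1} \times \interval{\infty}$ with $k'(s') \in f(s')$ for each $s'$. Using the definition of $f$ and choice, I pick witnesses $\sigma_{s'}$ with $k'(s') \sqsubseteq (\Prb{\sigma_{s'}}(s', \checkmark), \Erew{\sigma_{s'}}(s', \checkmark))$, and glue them into $\sigma'$ as above. The recursive characterisation of $\Prb{\sigma'}(s, \checkmark)$ and $\Erew{\sigma'}(s, \checkmark)$ from Def.~\ref{def:p_erew_n} then yields $(p(d, n, k'), r(d, n, k')) \sqsubseteq (\Prb{\sigma'}(s, \checkmark), \Erew{\sigma'}(s, \checkmark))$, so $(p, r)^\downarrow \subseteq f(s)$; taking the union over all $(d, n)$ and $k'$ gives $\Phi(f)(s) \subseteq f(s)$. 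For part (\ref{item:phi_delta}), I induct on $n$. The base $n = 0$ is immediate since $\Path^0(s)$ is empty, so both sides equal $(0, 0)^\downarrow$. The inductive step uses the same gluing as in part (\ref{item:f_alg}) to turn schedulers attaining $(\Prb{\sigma_{s'}}^n, \Erew{\sigma_{s'}}^n)$ into a scheduler $\sigma$ whose $(n+1)$-step values match the one-step unfolding of $\Phi$; conversely, any scheduler at $s$ decomposes into its first action plus per-successor continuations, giving the reverse inclusion.

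For part (\ref{item:cup_k}), I first use the assumption $k \in \fixunit{[S, (\interval{1} \times \interval{\infty})^\downarrow]}$: unfolding the unit of $L \dashv R$ shows that each $k(s)$ is forced to be a principal downset of the form $(1, v_s)^\downarrow$ for some $v_s \in \interval{\infty}$. Since $(0, 0) \sqsubseteq (1, v_s)$, it follows that $(0, 0)^\downarrow \subseteq k(s)$, i.e.\ $\Delta_{(0, 0)^\downarrow} \sqsubseteq k$. Monotonicity of $\Phi$ (immediate from its definition as a union of principal downsets ranging over $k'$ with $k'(s') \in k(s')$) together with $\Phi(k) \sqsubseteq k$ then gives, by induction on $n$, that $\Phi^n(\Delta_{(0, 0)^\downarrow}) \sqsubseteq \Phi^n(k) \sqsubseteq k$, and hence the same for the join. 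For the final part, combining parts (\ref{item:phi_delta}) and (\ref{item:cup_k}) with the principal-downset form $k(s) = (1, v_s)^\downarrow$ yields, for every scheduler $\sigma$ and every $n$, the bound $\Erew{\sigma}^n(s, \checkmark) \leq v_s$; taking $\sup_n$ gives $\Erew{\sigma}(s, \checkmark) \leq v_s$, while $\Prb{\sigma}(s, \checkmark) \leq 1$ is automatic. Hence $(\Prb{\sigma}(s, \checkmark), \Erew{\sigma}(s, \checkmark))^\downarrow \subseteq k(s)$, and taking the union over $\sigma$ gives $f \sqsubseteq k$; since the ambient category is posetal, this is the unique morphism $f \to k$, so $f$ is initial in $\AlgRst{\Phi}{\fixunit{[S, (\interval{1} \times \interval{\infty})^\downarrow]}}$.

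The main obstacle I anticipate is part (\ref{item:phi_delta}): although conceptually it is just the unfolding of the scheduler definition, the inductive step requires careful combinatorial bookkeeping to match sums over finite path prefixes with one-step unfoldings of $\Phi$, and in particular the correct handling of deterministic schedulers as history-dependent functions, with implicit use of choice when extracting witness schedulers from the union defining $\Phi^n$. Once part (\ref{item:phi_delta}) is in place, parts (\ref{item:cup_k}) and the last one reduce essentially to the observation that the fixed-object condition $\fixunit{\dots}$ forces each $k(s)$ into principal-downset shape with first coordinate $1$, which cleanly separates the reachability probability (bounded by $1$ automatically) from the expected reward (bounded by $v_s$ via the chain).
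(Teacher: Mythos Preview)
Your proposal is correct and follows essentially the same approach as the paper: the scheduler-gluing construction for parts (\ref{item:f_alg}) and (\ref{item:phi_delta}), and the monotonicity-plus-prefixed-point argument for part (\ref{item:cup_k}), match the paper's proof almost verbatim. The only noteworthy difference is in part 4, where the paper argues via the identity $f = RL\big(\bigsqcup_n \Phi^n(\Delta_{(0,0)^\downarrow})\big) \sqsubseteq RL(k) = k$ (using the monad $RL$ and the $\fixunit{}$ condition on both $f$ and $k$), whereas you extract the bound $\Erew{\sigma}(s,\checkmark) \leq v_s$ directly from the principal-downset shape $k(s) = (1,v_s)^\downarrow$; both routes are short and rest on the same ingredients.
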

\begin{proof}
  1) For each $s \in S$,
  \begin{align*}
    &\Phi(f)(s) \\
    &= \bigcup_{\substack{(d, e) \in c(s),
    k'\colon S \to \interval{1} \times \interval{\infty} \text{ s.t.~} \\\forall s'.~k'(s') \in \bigcup_{\sigma}(\Prb{\sigma}(s', \checkmark), \Erew{\sigma}(s', \checkmark))^\downarrow}}
    \Big(\big(p(d, e, k'), r(d, e, k')\big)\Big)^{\downarrow} \\
    &\subseteq \bigcup_{\sigma}(\Prb{\sigma}(s, \checkmark), \Erew{\sigma}(s, \checkmark))^\downarrow.
  \end{align*}
  The last inclusion holds because
  for each $(d, e) \in c(s)$ and $k'\colon S \to \interval{1} \times \interval{\infty}$  such that $\forall s'.~k'(s') \in \bigcup_{\sigma}(\Prb{\sigma}(s', \checkmark), \Erew{\sigma}(s', \checkmark))^\downarrow$,
  there is a scheduler
  $\sigma$ defined by
  $\sigma(s) = (d, e)$ and $\sigma(ss'\vec{s}) = \sigma_{s'}(s'\vec{s})$ for each $s' \in S$ and $\vec{s} \in S^*$
  where $\sigma_{s'}$ is a scheduler satisfying $k'(s') \in (\Prb{\sigma_{s'}}(s', \checkmark), \Erew{\sigma_{s'}}(s', \checkmark))^\downarrow$.
  The scheduler $\sigma$ provides
  \begin{align*}
    p(d, e, k') &= d(\checkmark) + \sum_{s'\in S} d(s')\cdot \pi_1  \big(k'(s')\big) \\
    &\leq d(\checkmark) + \sum_{s'\in S} d(s')\cdot \Prb{\sigma_{s'}}(s', \checkmark) \\
    &\qquad \text{by monotone convergence theorem} \\
    &= \sup_{n \in \mathbb{N}}\left(d(\checkmark) + \sum_{s'\in S} d(s')\cdot \Prb{\sigma_{s'}}^n(s', \checkmark)\right) \\
    &= \Prb{\sigma}(s, \checkmark),
  \end{align*}
  and
  \begin{align*}
    &r(d, e, k') \\
    =&\, e \cdot d(\checkmark) + \sum_{s'\in S} d(s')\cdot \left(\pi_2  \big(k'(s')\big) + e \cdot \pi_1  \big(k'(s')\big)\right) \\
    \leq&\, e \cdot d(\checkmark) + \sum_{s'\in S} d(s')\cdot \left(\Erew{\sigma_{s'}}(s', \checkmark) + e \cdot \Prb{\sigma_{s'}}(s', \checkmark)\right) \\
    \qquad& \text{by monotone convergence theorem} \\
    =& \sup_{n \in \mathbb{N}}\Big(e \cdot d(\checkmark) + \sum_{s'\in S} d(s')\cdot (\Erew{\sigma_{s'}}^n(s', \checkmark) \\
    &+ e \cdot \Prb{\sigma_{s'}}^n(s', \checkmark))\Big) \\
    =& \Erew{\sigma}(s, \checkmark).
  \end{align*}

  2) We prove it by induction on $n$.
  It's clear
  for the base case.
  For the step case, we assume that
  $\Phi^n(\Delta_{(0, 0)^\downarrow})(s) = \bigcup_{\text{scheduler }\sigma}\big(\Prb{\sigma}^{n}(s, \checkmark), \Erew{\sigma}^{n}(s, \checkmark)\big)^\downarrow$.
  \begin{align*}
    &\Phi^{n+1}(\Delta_{(0, 0)^\downarrow})(s)  \\
    &= \bigcup_{\substack{(d, e) \in c(s),
    k'\colon S \to \interval{1} \times \interval{\infty} \text{ s.t.~} \\\forall s'.~k'(s') \in \bigcup_{\sigma}(\Prb{\sigma}^n(s', \checkmark), \Erew{\sigma}^n(s', \checkmark))^\downarrow}}
    \big(p(d, e, k'), r(d, e, k')\big)^{\downarrow} \\
    &= \bigcup_{\substack{(d, e) \in c(s),
    \text{scheduler }\sigma}}
    \big(p(d, e, k_\sigma'), r(d, e, k'_\sigma)\big)^\downarrow \\
    &\qquad \text{where }k'_\sigma(s') = (\Prb{\sigma}^n(s', \checkmark), \Erew{\sigma}^n(s', \checkmark)) \\
    &= \bigcup_{\text{scheduler }\sigma}\big(\Prb{\sigma}^{n+1}(s, \checkmark), \Erew{\sigma}^{n+1}(s, \checkmark)\big)^\downarrow
  \end{align*}
  The second equation above can be proved in a similar manner as Lem.~\ref{lem:f_init}.\ref{item:f_alg}.

  3) Since $k \in \fixunit{[S, (\interval{1} \times \interval{\infty})^\downarrow]}$,
    $\Delta_{(0, 0)^\downarrow} \sqsubseteq k$ holds.
    Thus for each $n \in \mathbb{N}$,
    $\Phi^n(\Delta_{(0, 0)^\downarrow}) \sqsubseteq \Phi^n(k) \sqsubseteq k$
    by monotonicity of $\Phi$.

  4) Assume that $f \in \fixunit{[S, (\interval{1} \times \interval{\infty})^\downarrow]}$.
  Then it follows that $RL(\bigsqcup_{n \in \mathbb{N}}\Phi^n(\Delta_{(0, 0)^\downarrow})) = f$
  since both map $s$ to $(1, \bigsqcup_{\text{scheduler }\sigma}\Erew{\sigma}(s, \checkmark))^\downarrow$
  by Lem.~\ref{lem:f_init}.\ref{item:phi_delta}.
  Therefore,
for each $(k, \Phi k \sqsubseteq k) \in \AlgRst{\Phi}{\fixunit{[S, (\interval{1} \times \interval{\infty})^\downarrow]}}$,
\begin{align*}
  f &= RL(\bigsqcup_{n \in \mathbb{N}}\Phi^n(\Delta_{(0, 0)^\downarrow}))  \\
  &\sqsubseteq RLk &\text{by Lem.~\ref{lem:f_init}.\ref{item:cup_k}} \\
  &= k &\text{by }k \in \fixunit{[S, (\interval{1} \times \interval{\infty})^\downarrow]}.
\end{align*}
\end{proof}

\begin{proposition} \label{prop:gra_corresp_f}
  Consider Example~\ref{eg:achievable_mdp}.
Under the \gra of $(f, \Phi f \sqsubseteq f)$,
$(f, \Phi f \sqsubseteq f)$ corresponds to $\iota_{L \Phi R}$ via the equivalence~\eqref{eq:alg_equiv}.
\qed
\end{proposition}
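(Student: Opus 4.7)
The plan is to reduce the proposition to Corollary~\ref{cor:fix} applied to the algebra $(f, \Phi f \sqsubseteq f)$, after showing that this algebra is initial in the restricted subcategory $\AlgRst{\Phi}{\fixunit{[S, (\interval{1} \times \interval{\infty})^\downarrow]}}$ and that $\AlgRst{L\Phi R}{\fixcounit{[S,\interval{\infty}]}}$ is actually the full category $\Alg{L\Phi R}$. This is essentially the same pattern as Prop.~\ref{prop:gra_correspondence}, but applied to $f$ rather than $\mu \Phi$; what replaces initiality of $\mu \Phi$ in the ambient category is the already-proved Lem.~\ref{lem:f_init}.

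First I would unpack the \gra{} hypothesis, which says $f \in \fixunit{[S, (\interval{1} \times \interval{\infty})^\downarrow]}$. Under this assumption, Lem.~\ref{lem:f_init}.4 directly gives that $(f, \Phi f \sqsubseteq f)$ is initial in $\AlgRst{\Phi}{\fixunit{[S, (\interval{1} \times \interval{\infty})^\downarrow]}}$. Next I would verify that the right adjoint $R(k) = (-)^\downarrow \circ \langle \Delta_1, \id\rangle \circ k$ is full and faithful by a direct computation of the counit: for every $k \in [S, \interval{\infty}]$,
\begin{displaymath}
LR(k) \;=\; \pi_2 \circ \sqcup \circ (-)^\downarrow \circ \langle \Delta_1, \id\rangle \circ k \;=\; \pi_2 \circ \langle \Delta_1, \id\rangle \circ k \;=\; k,
\end{displaymath}
so $LR = \id$ and the counit is an identity (in particular an isomorphism). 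By Prop.~\ref{prop:adj_full_faith}, this shows $R$ is full and faithful, and hence $\fixcounit{[S, \interval{\infty}]} = [S, \interval{\infty}]$, so $\AlgRst{L\Phi R}{\fixcounit{[S,\interval{\infty}]}} = \Alg{L\Phi R}$.

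Finally I would invoke Cor.~\ref{cor:fix}, instantiated with $\mathbb{C} = [S, (\interval{1} \times \interval{\infty})^\downarrow]$, $\mathbb{D} = [S, \interval{\infty}]$ and the present $\Phi, L, R$. The corollary supplies the adjoint equivalence
\begin{displaymath}
\adj{\AlgFn{L}{\alpha}}{\AlgFn{R}{\beta}}{\eta}{\epsilon}\colon \AlgRst{\Phi}{\fixunit{[S, (\interval{1}\times\interval{\infty})^\downarrow]}} \;\to\; \Alg{L\Phi R}.
\end{displaymath}
Adjoint equivalences preserve initial objects in both directions. Since $(f, \Phi f \sqsubseteq f)$ is initial on the left side by the first step, its image $\AlgFn{L}{\alpha}(f, \Phi f \sqsubseteq f)$ is initial in $\Alg{L\Phi R}$ and therefore isomorphic to $\iota_{L\Phi R}$; conversely, $\AlgFn{R}{\beta}(\iota_{L\Phi R})$ is isomorphic to $(f, \Phi f \sqsubseteq f)$. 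This is exactly the claimed correspondence.

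The proof is essentially a bookkeeping exercise that combines Lem.~\ref{lem:f_init}.4 with Cor.~\ref{cor:fix}, so I do not expect a genuine technical obstacle here; the only subtle point is recognising that initiality of $f$---rather than of $\mu \Phi$---is what the hypothesis purchases, and this is precisely the content of Lem.~\ref{lem:f_init}.4 that was proved earlier. The verification that $R$ is full and faithful is the only concrete computation needed, and it is straightforward from the definitions of $L$ and $R$.
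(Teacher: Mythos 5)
Your proposal is correct and follows exactly the route the paper intends: the paper leaves Prop.~\ref{prop:gra_corresp_f} without an explicit proof precisely because it is the combination of Lem.~\ref{lem:f_init}.4 (initiality of $f$ in the restricted subcategory under the \gra{}), the observation that $R$ is full and faithful so that $\fixcounit{[S,\interval{\infty}]}$ is the whole category, and the adjoint equivalence of Cor.~\ref{cor:fix}. Your explicit check that $LR=\id$ via $\sqcup \circ (-)^\downarrow = \id$ is a correct and worthwhile piece of bookkeeping that the paper leaves implicit.
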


\section{Definition of liftings for predicate transformers} \label{ap:fib_lift}
In \S\ref{subsec:corresp_lifting}, 
we demonstrated an instance of the initial algebra correspondence 
in a fibrational setting.
While our focus was on a single example,
the same approach is applicable to
the other examples in \S\ref{sec:init_alg_corresp} as well.
Here 
we introduce definitions of liftings $\dot{F}$ of $F$ along $\fibop{\laxdom{\Omega_1}}$,
which is necessary for extending these examples to the fibrational setting.

\begin{definition} 
  Consider Example~\ref{eg:total_partial_correct}.
  Instead of the coalgebra $c\colon S \to (S + \{\checkmark\})^\Sigma$,
  here we use the coalgebra $c'\colon S \times \Sigma^\omega \to S \times \Sigma^\omega + \mathbf{2}$ defined by $c'(s, \sigma\vec{\sigma}) \coloneqq (c(s)(\sigma), \vec{\sigma})$ if $c(s)(\sigma) \in S$, and $A(s)$ otherwise.

  Let $\Omega_1 \coloneqq \rs$, $\Omega_2 \coloneqq \mathbf{2}$,
  and
  $F \coloneqq ((-) + \mathbf{2})$.
  Define $\dot{F}\colon \fibop{(\Set/\Omega_1)} \to \fibop{(\Set/\Omega_1)}$ by
  \begin{displaymath}
   \dot{F}\big((X, k)\big) \coloneqq (FX, k') \text{ and } \dot{F}(h) \coloneqq Fh,
  \end{displaymath}
    where $k'\colon FX \to \Omega_1$ sends 
    $x \in X$ to $(\pi_1k(x), \neg \pi_1 k(x) \lor \pi_2 k(x))$ and
    each $b \in \mathbf{2}$ to $(\top, b)$.
\end{definition}

\begin{definition} 
  Consider Example~\ref{eg:cost-boundedReach}.
  Let $\Omega_1 \coloneqq M_\bot$, $\Omega_2 \coloneqq \mathbf{2}$,
  and
  $F \coloneqq \mathcal{P}(-) \times \mathbb{N} + \{\checkmark\}$.
  Define $\dot{F}\colon \fibop{(\Set/\Omega_1)} \to \fibop{(\Set/\Omega_1)}$ by
  \begin{displaymath}
   \dot{F}\big((X, k)\big) \coloneqq (FX, k') \text{ and } \dot{F}(h) \coloneqq Fh,
  \end{displaymath}
    where $k'\colon FX \to \Omega_1$ sends 
    $\checkmark$ to 0 and
    each $(I, n) \in \mathcal{P}X \times \mathbb{N}$ to
    \begin{align*}
   &\bigsqcup_{m \in k(I) \cap \mathbb{N}}\max(M, m+n).
    \end{align*}
\end{definition}

\begin{definition} 
  Consider Example~\ref{eg:achievable_mdp}.
  Let $\Omega_1 \coloneqq (\interval{1} \times \interval{\infty})^\downarrow$, $\Omega_2 \coloneqq \interval{\infty}$,
  and
  $F \coloneqq \Pfin(\mathcal{D}(- + \{\checkmark\}) \times \mathbb{N})$.
  Define $\dot{F}\colon \fibop{(\Set/\Omega_1)} \to \fibop{(\Set/\Omega_1)}$ by
  \begin{displaymath}
   \dot{F}\big((X, k)\big) \coloneqq (FX, k') \text{ and } \dot{F}(h) \coloneqq Fh,
  \end{displaymath}
    where $k'\colon FX \to \Omega_1$ sends each $I \in FX$ to
    \begin{align*}
    &\bigcup_{\substack{(d, n) \in I,
    j\colon X \to \interval{1} \times \interval{\infty} \\ \text{ s.t.~}\forall x \in X.~j(x) \in k(x)}}
    &\Big(\big(p(d, n, j), r(d, n, j)\big)\Big)^{\downarrow},
    \end{align*}
    and
    $p(d, n, j) \in \interval{1}$ and $r(d, n, j) \in \interval{\infty}$ are defined by
  \begin{align*}
    p(d, n, j) &\coloneqq d(\checkmark) + \sum_{x\in X} d(x)\cdot \pi_1  \big(j(x)\big), \\
    r(d, n, j) &\coloneqq
    n \cdot d(\checkmark)  \\
    &\phantom{\coloneqq}+ \sum_{x\in X} d(x)\cdot (\pi_2  \big(j(x)\big) + n \cdot \pi_1  \big(j(x)\big)).
  \end{align*}
\end{definition}

\section{Omitted proofs for \S\ref{sec:init_alg_corresp_L}} \label{ap:omitted_L}
\begin{proof}[Proof of Prop.~\ref{prop:init_alg_isom}]
  We first aim to show that
  $W_\rho$ is natural.
  We will show that for each ordinal $i, j$ with $j < i$, $(W_\rho)_i \circ W_{\Xi'}(j, i) = W_\Xi(j, i) \circ (W_\rho)_j$.
  We proceed by induction on $i$.
  \begin{itemize}
    \item Case $i=0$: There is no such $j$.
    \item Case $i$ is a limit ordinal: Clear by definition.
    \item Case $i$ is a successor ordinal:
    We prove it by induction on $j \ (< i)$.
    If $j=0$, it's clear.
    If $j$ is a successor ordinal,
    then
    \begin{align*}
      &(W_\rho)_i \circ W_{\Xi'}(j, i) \\
      &= \rho_{\Xi^{i-1} 0}\circ \Xi'((W_\rho)_{i-1}) \circ W_{\Xi'}(j, i) \\
      &= \rho_{\Xi^{i-1} 0}\circ \Xi'((W_\rho)_{i-1} \circ W_{\Xi'}(j-1, i-1)) \\
      &\qquad \text{by induction hypothesis} \\
      &= \rho_{\Xi^{i-1} 0}\circ \Xi'(W_\Xi(j-1, i-1) \circ (W_\rho)_{i-1}) \\
      &= W_\Xi(j, i) \circ \rho_{\Xi^{j-1} 0}\circ \Xi'((W_\rho)_{i-1}) \\
      &= W_\Xi(j, i) \circ (W_\rho)_j.
    \end{align*}
    If $j$ is a limit ordinal,
    it follows that $(W_\rho)_i \circ W_{}(j, i) = W_\Xi(j, i) \circ (W_\rho)_j$ by universality of the colimit $(\Xi')^j 0$: For each ordinal $j'$ with $j' < j$,
    \begin{align*}
      &(W_\rho)_i \circ W_{\Xi'}(j, i) \circ W_{\Xi'}(j', j) \\
      &= (W_\rho)_i \circ W_{\Xi'}(j', i) \\
      &\qquad \text{by induction hypothesis} \\
      &= W_\Xi(j', i) \circ (W_\rho)_{j'} \\
      &= W_\Xi(j, i) \circ W_\Xi(j', j) \circ (W_\rho)_{j'} \\
      &\qquad \text{by induction hypothesis} \\
      &= W_\Xi(j, i) \circ (W_\rho)_j \circ W_{\Xi'}(j', j).
    \end{align*}
  \end{itemize}
  Thus,
  $W_\rho\colon W_{\Xi'} \Rightarrow W_\Xi$ is a natural transformation.

  If $\rho_{\Xi^i 0}$ is an isomorphism for each ordinal $i$,
  then $W_\rho$ is an isomorphism by definition.
  Conversely,
  if $W_\rho$ is an isomorphism,
  then for each ordinal $i$,
  $\rho_{\Xi^i 0} = (W_\rho)_{i+1} \circ \big(\Xi'((W_\rho)_i)\big)^{-1}$
  is an isomorphism.
\end{proof}

\begin{proof}[Proof of Lem.~\ref{lem:init_alg_corresp_L}]
  Let $\lambda$ be the ordinal in which the initial chain of $\Psi$ converges.
  By Prop.~\ref{prop:init_alg_isom}, $W_\rho$ is an isomorphism.
  Since $(W_\rho)_\lambda\colon (L \Phi R)^\lambda 0 \cong \Psi^\lambda 0$ and
  the initial chain of $\Psi$ converges in $\lambda$ steps,
  the initial chain of $L \Phi R$ also converges in $\lambda$ steps.
  In particular,
  both $W_{L \Phi R}(\lambda, \lambda+1)^{-1}$
  and
  $W_{\Psi}(\lambda, \lambda+1)^{-1}$ are initial algebras for $L \Phi R$ and $\Psi$, respectively.
  Moreover, $W_{L \Phi R}(\lambda, \lambda+1)^{-1}$ is isomorphic to $\AlgFn{\id}{\rho}(W_\Psi(\lambda, \lambda+1)^{-1})$
  by the following equations.
  \begin{align*}
    &(W_\rho)_\lambda \circ W_{L \Phi R}(\lambda, \lambda+1)^{-1} \\
    &= W_\Psi(\lambda, \lambda+1)^{-1} \circ (W_\rho)_{\lambda+1} \\
    &= W_\Psi(\lambda, \lambda+1)^{-1} \circ \rho_{\Psi^\lambda 0} \circ L\Phi R((W_\rho)_\lambda) \\
    &= \AlgFn{\id}{\rho}(W_\Psi(\lambda, \lambda+1)^{-1}) \circ L\Phi R((W_\rho)_\lambda).
  \end{align*}
\end{proof}

\begin{proof}[Proof of Prop.~\ref{prop:init_alg_corresp_L}]
  Assume that conditions \eqref{item:rho_bar}, \eqref{item:converge}, \eqref{item:init_chain_fix} are satisfied.
  For each $i$,
  since
  $\rho_{\Psi^i 0} = \epsilon_{\Psi^{i+1} 0} \circ L\overline{\rho}_{\Psi^i 0}$
  (resp.~$\rho_{\Psi^i 0} = \Psi \epsilon_{\Psi^i 0} \circ \rho_{LR\Psi^i 0} \circ L\Phi R \epsilon_{\Psi^i 0}^{-1} =  \Psi \epsilon_{\Psi^i 0} \circ \rho'_{R \Psi^i 0}$ where $\rho' \coloneqq \rho_L \circ L \Phi \eta$),
  $\rho_{\Psi^i 0}$
  is an isomorphism.
  Hence Lem.~\ref{lem:init_alg_corresp_L} shows that
  $\AlgFn{\id}{\rho}(\iota_\Psi)$ is initial in $\Alg{L\Phi R}$.
  The conditions \eqref{item:converge} and \eqref{item:init_chain_fix} ensure $\mu \Psi \in \fixunit{\mathbb{D}}$.
  By Lem.~\ref{lem:gra_psi}, it follows that the initial algebra correspondence holds.

  Next, suppose $\mu \Psi \in \fixcounit{\mathbb{D}}$ (the \gra of $\iota_\Psi$) and $\fixcounit{\mathbb{D}}\hookrightarrow \mathbb{D}$ is downward closed.
  Because $\mu \Psi \in \fixcounit{\mathbb{D}}$,
  $\Psi^\lambda 0 \in \fixcounit{\mathbb{D}}$ where $\lambda$ is the ordinal in which the initial chain of $\Psi$ converges.
  By downward closure, the elements of the initial chain all lie in $\fixcounit{\mathbb{D}}$.
  Hence, \eqref{item:init_chain_fix} is satisfied.
\end{proof}

\begin{proposition} \label{ap:prop:unambiguous}
  In the setting of Example~\ref{eg:unambiguousFA},
  all three conditions of Prop.~\ref{prop:init_alg_corresp_L} are satisfied under the reachability condition of $\iota_\Psi$.
\end{proposition}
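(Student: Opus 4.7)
The plan is to verify the three conditions of Prop.~\ref{prop:init_alg_corresp_L} in turn, after a preliminary remark that $L$ is full and faithful. Indeed, because the map $i\colon \mathbf{2} \to \Ninf$ followed by the ``$\geq 1$''-test is the identity on $\mathbf{2}$, we have $RL = \id$; hence the unit $\eta$ is a natural isomorphism and Prop.~\ref{prop:adj_full_faith} yields full-faithfulness of $L$, placing us squarely in the setting of Prop.~\ref{prop:init_alg_corresp_L}.

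For condition (1), I would compute $\Phi R$ and $R\Psi$ pointwise. At the base case $(s, \epsilon)$ both return $\Acc(s)$; at the recursive step $(s, a\vec{a'})$, $\Phi R(k)$ returns the Boolean $\bigvee_{s' \in \delta(s)(a)} [k(s', \vec{a'}) \geq 1]$, while $R\Psi(k)$ returns $\bigl[\sum_{s' \in \delta(s)(a)} k(s', \vec{a'}) \geq 1\bigr]$. Since $k$ has non-negative values in $\Ninf$ and $\delta(s)(a)$ is finite, the sum is $\geq 1$ iff at least one summand is $\geq 1$, so the two Booleans coincide. Equality of the underlying functions in the poset $[S \times A^*, \mathbf{2}]$ forces the canonical morphism $\overline{\rho}$ to be a natural isomorphism.

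For condition (2), I would note that because $\delta(s)(a)$ is finite, $\Psi$ is given pointwise by a finite sum and is therefore $\omega$-cocontinuous on $[S \times A^*, \Ninf]$. A straightforward induction shows that $\Psi^n(0)(s, \vec{a})$ already equals the number of accepting runs of $\vec{a}$ from $s$ whenever $n > |\vec{a}|$, so the initial chain stabilizes at $\omega$ and Ad\'amek's theorem gives convergence. For condition (3), I would invoke the ``downward closed'' clause of Prop.~\ref{prop:init_alg_corresp_L}: the counit at $k$ is the canonical map $\min(1, k) \to k$, which is an isomorphism exactly when $k(s, \vec{a}) \leq 1$ for all $(s, \vec{a})$, so $\fixcounit{\mathbb{D}}$ consists of the pointwise $\{0,1\}$-valued functions. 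This property is obviously preserved under pointwise $\leq$, so $\fixcounit{\mathbb{D}} \hookrightarrow \mathbb{D}$ is downward closed, and (3) reduces to the \gra{} of $\iota_\Psi$, i.e.~$\mu\Psi \leq 1$ pointwise, which is precisely the unambiguity of the NFA.

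The main technical point is (1): one must recognize the order-theoretic equivalence between disjunctions over finite sets and positivity of finite sums of non-negative extended naturals, and check that this turns the unique comparison morphism in the poset into an isomorphism of functors. Once this is done, (2) and (3) are essentially bookkeeping: $\Psi$ involves only finite sums (giving cocontinuity), and membership in $\fixcounit{\mathbb{D}}$ is characterized by a simple pointwise bound (giving downward closure).
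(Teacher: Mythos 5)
Your proposal is correct and follows essentially the same route as the paper's proof: condition (1) via the pointwise identity $\Phi R = R\Psi$ (which in the posetal setting forces $\overline{\rho}$ to be the identity), condition (2) via $\omega$-continuity of $\Psi$ from finiteness of $\delta(s)(a)$ and monotone convergence, and condition (3) via downward closure of $\fixcounit{\mathbb{D}}$ together with the last clause of Prop.~\ref{prop:init_alg_corresp_L}. You supply more detail than the paper (the explicit check that the sum is $\geq 1$ iff some summand is, the identification of $\fixcounit{\mathbb{D}}$ with the pointwise $\{0,1\}$-valued functions, and the verification that $L$ is full and faithful), all of which is accurate.
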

\begin{proof}
    \begin{itemize}
      \item Condition (1): It is clear because $\Phi R = R \Psi$.
      \item Condition (2):
        Since $\Psi$ is an $\omega$-continuous function on the complete lattice $[S \times A^*, \Ninf]$
        by the monotone convergence theorem,
        the initial chain of $\Psi$ converges in $\omega$ steps.
      \item Condition (3): Since $[S \times A^*, \Ninf]$ is downward closed, the \gra of $\iota_\Psi$ implies this condition by Prop.~\ref{prop:init_alg_corresp_L}.
    \end{itemize}
\end{proof}

\section{Another example about unambiguous automata} \label{ap:unambiguous_mc}
In the following example, we often see $\bot, \top$ as $0, 1$, respectively.
\begin{example}
  We consider an NFA under a probabilistic semantics induced by an MC.
  Specifically, let $\langle \delta, \Acc\rangle\colon S \to \mathcal{P}(S)^A \times \mathbf{2}$
  and $c\colon A \to \mathcal{D}(A + \{\checkmark\})$
  be coalgebras
  representing an NFA (with finite sets $S$ and $A$) and an MC, respectively.
  For a word $\vec{a} \in A^*$,
  we focus on two different computation for the probability of the path $\vec{a}\checkmark$ in the MC $c$.
  We show that,  under a \gra reflecting the unambiguity of NFAs, these two computations coincide.
  Such computations (for unambiguous NFAs) have been explored in the context of probabilistic verification~\cite{BaierK00023}.

  We define an endofunctor $\Phi$ on $[S \times A^*, \mathbf{2} \times [0, 1]]$ by
    \begin{align*}
      \Phi(k)(s, \vec{a})
      &\coloneqq
      \begin{cases}
        \big(\Acc(s), \Acc(s)\big) &\text{if }\vec{a} = \epsilon, \\
        \big(\bigvee_{s' \in \delta(s)(a)}\pi_1 k(s', \vec{a'}), p(s, a, \vec{a})) &\text{if  }\vec{a} = a\vec{a'},
      \end{cases} \\
    p(s, a, \vec{a}) &\coloneqq \mathrm{min}(\pi_1 k(s, \vec{a}), \bigvee_{s' \in \delta(s)(a)}c(a, b) \cdot \pi_2 k(s', \vec{a'})),
    \end{align*}
    where $b$ is the first element of $\vec{a'}\checkmark$.
    The domain $\mu \Phi$ characterizes both the recognized language of $c$ and the probability of accepted words.
    Specifically, $\pi_1 \mu \Phi(s, \vec{a})$
    indicates whether $\vec{a}$ is accepted from $s$,
    and $\pi_2 \mu \Phi(s, \vec{a})$ is the probability of $\vec{a}$
    if $\vec{a}$ is accepted from $s$ in the NFA, and is $0$ otherwise.

    Next, define an endofunctor $\Psi$ on $[S \times A^*, \Ninf \times \interval{\infty}]$ by
  \begin{align*}
    \Psi(k)(s, \vec{a})
    &\coloneqq \begin{cases}
   \big(\Acc(s), \Acc(s)\big) &\text{if } \vec{a} = \epsilon, \\
    \big(\Sigma_{s' \in (\delta(s))(a)}\pi_1k(s', \vec{a'}), p(s, a, \vec{a})\big) &\text{if } \vec{a} = a\vec{a'},
    \end{cases} \\
    p(s, a, \vec{a}) &\coloneqq \Sigma_{s' \in \delta(s)(a)}c(a, b) \cdot \pi_2k(s', \vec{a'}),
  \end{align*}
    where $b$ is the first element of $\vec{a'}\checkmark$.
  The domain $\mu \Psi$ represents both the number of accepting paths and the sum of the probabilities of those paths.

  By considering an adjunction defined as in Example~\ref{eg:unambiguousFA},
  Prop.~\ref{prop:init_alg_corresp_L} gives that
  the initial algebras of $\Phi$ and $\Psi$ coincide
  under the unambiguity condition of the NFA.
  The \gra $\mu \Psi \in \fixunit{\mathbb{D}}$ is the unambiguity of the NFA,
  and the resulting initial algebra correspondence shows
  the correspondence between the probability of accepted words and the sum of the probabilities of accepting paths.
\end{example}

\section{Dual version of Prop.~\ref{prop:fix_total}} \label{ap:dual_fix_total}

\begin{corollary} \label{cor:fix_coalg}
  Let $p\colon \mathbb{E} \to \mathbb{C}$ and $q\colon \mathbb{F} \to \mathbb{D}$ be functors.
  Suppose
  $\dot{\Phi}\colon \mathbb{E} \to \mathbb{E}$ is a lifting of $\Phi\colon \mathbb{C} \to \mathbb{C}$ along $p$,
  and an adjunction
  $\adj {\dot{L}} {\dot{R}} {\dot{\eta}} {\dot{\epsilon}} \colon {\mathbb{F}} \to {\mathbb{E}}$ is a lifting of $\adj L R \eta \epsilon \colon {\mathbb{D}} \to {\mathbb{C}}$ along $q, p$.
  Applying Prop.~\ref{prop:fix_total} to the opposites of them,
  we obtain
  the adjoint equivalence
  \begin{displaymath}
    \adj {\CoalgFn{\dot{L}}{\dot{\alpha}}} {\CoalgFn{\dot{R}}{\dot{\beta}}} {\dot{\eta}} {\dot{\epsilon}}\colon \CoalgRst{\dot{R}\dot{\Phi}\dot{L}}{\fixunit{\mathbb{F}}} \to \CoalgRst{\dot{\Phi}}{\fixcounit{\mathbb{E}}},
  \end{displaymath}
  which is a lifting of the adjoint equivalence
  \begin{displaymath}
    \adj {\CoalgFn{L}{\alpha}} {\CoalgFn{R}{\beta}} \eta \epsilon \colon\CoalgRst{R\Phi L}{\fixunit{\mathbb{D}}} \to  \CoalgRst{\Phi}{\fixcounit{\mathbb{C}}}
  \end{displaymath}
  along $\Coalg{q}, \Coalg{p}$, where
  $\dot{\alpha} = \dot{\epsilon} \dot{\Phi}\dot{L}$,
  $\dot{\beta} = \dot{R}\dot{\Phi}\dot{\epsilon}^{-1}$,
  $\alpha = \epsilon \Phi L$, and
  $\beta = R\Phi \epsilon^{-1}$.
  \begin{displaymath}
    \xymatrix@R=3em{
      \CoalgRst{\dot{\Phi}}{\fixcounit{\mathbb{E}}} \ar[d]^{\Coalg{p}}  \ar@/^1em/[rr]^{\CoalgFn{\dot{R}}{\dot{R}\dot{\Phi}\dot{\epsilon}^{-1}}} &\simeq &\CoalgRst{\dot{R}\dot{\Phi}\dot{L}}{\fixunit{\mathbb{F}}} \ar@/^1em/[ll]^{\CoalgFn{\dot{L}}{\dot{\epsilon} \dot{\Phi}\dot{L}}} \ar[d]^{\Coalg{q}} \\
      \CoalgRst{\Phi}{\fixcounit{\mathbb{C}}}   \ar@/^1em/[rr]^{\CoalgFn{R}{R\Phi \epsilon^{-1}}} &\simeq &\CoalgRst{R\Phi L}{\fixunit{\mathbb{D}}} \ar@/^1em/[ll]^{\CoalgFn{L}{\epsilon \Phi L}}
    }
  \end{displaymath}
\end{corollary}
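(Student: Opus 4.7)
My plan is to prove this corollary by pure dualization, since the statement itself indicates we obtain it by applying Prop.~\ref{prop:fix_total} in the opposite categories. First, I will form the opposites $\dot{\Phi}^{\op}\colon \mathbb{E}^{\op} \to \mathbb{E}^{\op}$, $\Phi^{\op}\colon \mathbb{C}^{\op}\to \mathbb{C}^{\op}$, and $p^{\op}\colon \mathbb{E}^{\op} \to \mathbb{C}^{\op}$, $q^{\op}\colon \mathbb{F}^{\op} \to \mathbb{D}^{\op}$. The identity $p\circ \dot{\Phi} = \Phi\circ p$ immediately gives $p^{\op}\circ \dot{\Phi}^{\op} = \Phi^{\op}\circ p^{\op}$, so $\dot{\Phi}^{\op}$ is a lifting of $\Phi^{\op}$ along $p^{\op}$. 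The adjunction $\dot{L}\dashv \dot{R}$ dualizes to $\dot{R}^{\op}\dashv \dot{L}^{\op}$, with unit $\dot{\epsilon}^{\op}\colon \id \Rightarrow \dot{L}^{\op}\dot{R}^{\op}$ and counit $\dot{\eta}^{\op}\colon \dot{R}^{\op}\dot{L}^{\op}\Rightarrow \id$; in the paper's notation this is $\adj{\dot{R}^{\op}}{\dot{L}^{\op}}{\dot{\epsilon}^{\op}}{\dot{\eta}^{\op}}\colon \mathbb{E}^{\op}\to \mathbb{F}^{\op}$, and it is a lifting of $\adj{R^{\op}}{L^{\op}}{\epsilon^{\op}}{\eta^{\op}}\colon \mathbb{C}^{\op}\to \mathbb{D}^{\op}$ along $p^{\op}, q^{\op}$.

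Next, I will apply Prop.~\ref{prop:fix_total} to this dualized data. It yields an adjoint equivalence
\begin{displaymath}
  \adj{\AlgFn{\dot{R}^{\op}}{\dot{\alpha}'}}{\AlgFn{\dot{L}^{\op}}{\dot{\beta}'}}{\dot{\epsilon}^{\op}}{\dot{\eta}^{\op}}\colon \AlgRst{\dot{\Phi}^{\op}}{\fixunit{\mathbb{E}^{\op}}} \to \AlgRst{\dot{R}^{\op}\dot{\Phi}^{\op}\dot{L}^{\op}}{\fixcounit{\mathbb{F}^{\op}}}
\end{displaymath}
with $\dot{\alpha}' = \dot{R}^{\op}\dot{\Phi}^{\op}(\dot{\epsilon}^{\op})^{-1}$ and $\dot{\beta}' = \dot{\epsilon}^{\op}\dot{\Phi}^{\op}\dot{L}^{\op}$, lifted from the corresponding equivalence downstairs along $\Alg{p^{\op}}, \Alg{q^{\op}}$. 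The key identifications I need are that the unit of the dualized adjunction is $\dot{\epsilon}^{\op}$, so $\fixunit{\mathbb{E}^{\op}}$ consists exactly of those $X$ for which $\dot{\epsilon}_X$ is an isomorphism; as a subcategory of $\mathbb{E}^{\op}$ it is $(\fixcounit{\mathbb{E}})^{\op}$. Symmetrically $\fixcounit{\mathbb{F}^{\op}} = (\fixunit{\mathbb{F}})^{\op}$.

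I then take opposites one more time. Using $\Coalg{F} = \Alg{F^{\op}}^{\op}$ and $\CoalgFn{p} = \AlgFn{p^{\op}}^{\op}$, the two algebra categories become $\CoalgRst{\dot{\Phi}}{\fixcounit{\mathbb{E}}}$ and $\CoalgRst{\dot{R}\dot{\Phi}\dot{L}}{\fixunit{\mathbb{F}}}$ respectively, and the adjunction reverses direction, swapping which functor is left/right. The unit and counit become $\dot{\eta}$ and $\dot{\epsilon}$, and the natural transformations $\dot{\alpha}', \dot{\beta}'$ dualize to $\dot{\alpha} = \dot{\epsilon}\dot{\Phi}\dot{L}$ and $\dot{\beta} = \dot{R}\dot{\Phi}\dot{\epsilon}^{-1}$, matching the stated formulas. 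Finally, the lifting along $\Coalg{p}, \Coalg{q}$ follows directly from the lifting along $\Alg{p^{\op}}, \Alg{q^{\op}}$ before dualization.

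The main obstacle I expect is not a conceptual one but a bookkeeping one: carefully tracking how the opposition swaps the roles of unit/counit, left/right adjoint, and $\fixunit{(-)}/\fixcounit{(-)}$, and verifying that $\dot{\alpha}'$ and $\dot{\beta}'$ translate to exactly the formulas for $\dot{\alpha}, \dot{\beta}$ stated above once one reverses the direction of natural transformations. In particular, one must note that $\dot{\epsilon}^{-1}$ makes sense on the restricted subcategory $\fixcounit{\mathbb{E}}$ because that is precisely where $\dot{\epsilon}$ is invertible, which justifies the definition of $\dot{\beta}$.
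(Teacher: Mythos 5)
Your proposal is correct and is exactly the paper's intended argument: the paper derives this corollary precisely by applying Prop.~\ref{prop:fix_total} to $\dot{\Phi}^{\op}$, $p^{\op}$, $q^{\op}$ and the dualized adjunction $\adj{\dot{R}^{\op}}{\dot{L}^{\op}}{\dot{\epsilon}^{\op}}{\dot{\eta}^{\op}}$, with the identifications $\fixunit{(\mathbb{E}^{\op})} = (\fixcounit{\mathbb{E}})^{\op}$ and $\fixcounit{(\mathbb{F}^{\op})} = (\fixunit{\mathbb{F}})^{\op}$ that you spell out. One small bookkeeping caution: under dualization the pairing crosses, i.e.\ $\dot{\alpha}' = \dot{R}^{\op}\dot{\Phi}^{\op}(\dot{\epsilon}^{\op})^{-1}$ becomes $\dot{\beta} = \dot{R}\dot{\Phi}\dot{\epsilon}^{-1}$ (attached to the right adjoint) and $\dot{\beta}' = \dot{\epsilon}^{\op}\dot{\Phi}^{\op}\dot{L}^{\op}$ becomes $\dot{\alpha} = \dot{\epsilon}\dot{\Phi}\dot{L}$, which your formulas already reflect even though your phrasing lists them in the uncrossed order.
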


\else
\fi

\end{document}